\setlist[itemize]{nosep,topsep=3pt,itemsep=3pt}
\setlist[enumerate]{nosep,topsep=3pt,itemsep=3pt}
\titleformat*{\paragraph}{\bfseries\itshape}
\definecolor{lime}{HTML}{A6CE39}
\DeclareRobustCommand{\orcidicon}{
  \begin{tikzpicture}
  \draw[lime, fill=lime] (0,0) 
  circle [radius=0.16] 
  node[white] {{\fontfamily{qag}\selectfont \tiny ID}};
  \draw[white, fill=white] (-0.0625,0.095) 
  circle [radius=0.007];
  \end{tikzpicture}
  \hspace{-2mm}
}
\newcommand{\s}{\hphantom{-}}
\def\bra#1{\langle#1|}
\def\ket#1{|#1\rangle}
\newcommand{\av}{{\boldsymbol{a}}}
\newcommand{\bv}{{\boldsymbol{b}}}
\newcommand{\uv}{{\boldsymbol{u}}}
\newcommand{\vv}{{\boldsymbol{v}}}
\newcommand{\zv}{{\boldsymbol{z}}}
\newcommand{\gammav}{{\boldsymbol{\gamma}}}
\newcommand{\Gammav}{{\boldsymbol{\Gamma}}}
\newcommand{\betav}{{\boldsymbol{\beta}}}
\newcommand{\expj}{\mathbb{E}_J}
\newcommand{\vect}[1]{\boldsymbol{#1}}
\newcommand{\paramv}{\gammav,\betav}
\newcommand{\Et}{\tilde{E}}
\newcommand{\fin}{f^{\rm inner}}
\newcommand{\fout}{f^{\rm outer}}
\newcommand{\ain}{\boldsymbol{a}^{\rm inner}}
\newcommand{\aout}{\boldsymbol{a}^{\rm outer}}
\newcommand{\XOR}{\textnormal{XOR}}
\newcommand{\pp}[1]{{[#1]}}
\newcommand{\parent}{\mathfrak{p}}
\newcommand{\EV}{\mathbb{E}}
\newcommand{\qsc}{{[q]}}
\newcommand{\XORSAT}{{XORSAT}}
\newcommand{\conj}[1]{\ddot{#1}}
\newtheorem{thm}{Theorem}
\newtheorem*{thm*}{Theorem}
\newtheorem{lemma}{Lemma}
\newtheorem*{conj*}{Conjecture}
\numberwithin{equation}{section}
\title{The Quantum Approximate Optimization Algorithm at High Depth \\
for MaxCut on Large-Girth Regular Graphs \\
and the Sherrington-Kirkpatrick Model
}
\newcommand{\respace}[1]{\hspace{-5pt}#1\hspace{-1pt}}
\author[1]{\respace\orcidJ{}Joao Basso}
\author[1,2]{{~}Edward Farhi}
\author[3]{\respace\orcidA{}Kunal Marwaha}
\author[1]{\\ \respace\orcidB{}Benjamin Villalonga}
\author[4]{\respace\orcidZ{}Leo Zhou}
\affil[1]{Google Quantum AI, Venice, CA 90291}
\affil[2]{Center for Theoretical Physics, Massachusetts Institute of Technology, Cambridge, MA 02139}
\affil[3]{Department of Computer Science, University of Chicago, Chicago, IL 60637}
\affil[4]{Walter Burke Institute for Theoretical Physics, California Institute of Technology, Pasadena, CA 91125}
\date{July 6, 2022}
\begin{document}

\maketitle

\begin{abstract}
\normalsize
\onehalfspacing
The Quantum Approximate Optimization Algorithm (QAOA) finds approximate solutions to combinatorial optimization problems.
Its performance monotonically improves with its depth $p$.
We apply the QAOA to MaxCut on large-girth $D$-regular graphs. We give an iterative formula to evaluate performance for any $D$ at any depth $p$.
Looking at random $D$-regular graphs, at optimal parameters and as $D$ goes to infinity, we find that the $p=11$ QAOA beats all classical algorithms (known to the authors) that are free of unproven conjectures.
While the iterative formula for these $D$-regular graphs is derived by looking at a single tree subgraph, we prove that it also gives the ensemble-averaged performance of the QAOA on the Sherrington-Kirkpatrick (SK) model defined on the complete graph.
We also generalize our formula to Max-$q$-{\XORSAT} on large-girth regular hypergraphs.
Our iteration is a compact procedure, but its computational complexity grows as $O(p^2 4^p)$. 
This iteration is more efficient than the previous procedure for analyzing QAOA performance on the SK model, and we are able to numerically go to $p=20$.
Encouraged by our findings, we make the optimistic conjecture that the  QAOA, as $p$ goes to infinity, will achieve the Parisi value. 
We analyze the performance of the quantum algorithm, but one needs to run it on a quantum computer to produce a string with the guaranteed performance.
\end{abstract}

\thispagestyle{empty}

\clearpage 
\tableofcontents
\clearpage

\section{Introduction}

We are at the start of an era in which quantum devices are running algorithms.  We need to understand the power of quantum computers for solving or finding approximate solutions to combinatorial optimization problems. One approach is to learn by experimenting on hardware.  Although useful for probing the hardware and testing algorithms at small sizes, it does not give a convincing picture of asymptotic behavior. To this end we need  mathematical studies of the behavior of  quantum algorithms, running on ideal circuits, at large sizes.  In this paper we take a step in that direction by analyzing the Quantum Approximate Optimization Algorithm as applied to a certain combinatorial optimization problem.  The instances are large and the depth of the algorithm is high. For this task, we will see that the QAOA outperforms the best assumption-free classical algorithm.

MaxCut is a combinatorial optimization problem on bit strings whose input is a graph.  Each bit is associated with a vertex, and the goal is to maximize the number of edges with bit assignments that disagree on the two ends of the edge.
It is NP-hard to solve this problem exactly, and even approximating the optimal solution beyond a certain ratio is NP-hard~\cite{trevisan2000gadgets}.
We focus on MaxCut for large-girth $D$-regular graphs.
On these graphs, the currently known best classical algorithms without assuming any unproven conjectures (including Goemans-Williamson and the Gaussian wave process \cite{MontanariSen2016,lyons2017factors,barak2021classical,thompson2021explicit}) achieve an expected cut fraction (the number of cut edges output by the algorithm divided by the number of edges) of $1/2 + (2/\pi)/\sqrt{D}$ as both the girth and $D$ go to $\infty$, where $2/\pi \approx 0.6366$.

We apply the Quantum Approximate Optimization Algorithm (QAOA) \cite{farhi2014quantum} to large-girth $D$-regular graphs. The QAOA depends on a parameter $p$, the algorithm's depth. At small $p$, the QAOA has been realized in current quantum hardware \cite{harrigan2021quantum}. Some analytic results are also known. At $p=1$, the QAOA has a guaranteed approximation ratio (the number of cut edges output by the algorithm divided by the maximum  number of edges that can be cut) of at least $0.6924$ on all $3$-regular graphs \cite{farhi2014quantum} and an expected cut fraction of at least $1/2 + 0.3032/\sqrt{D}$ on triangle-free graphs \cite{wang2018quantum}. For $p=2$, the QAOA has an approximation ratio of at least $0.7559$ on $3$-regular graphs with girth more than $5$ and, for $p=3$, that ratio becomes $0.7924$ when the girth is more than $7$ \cite{wurtz2020bounds}. So far, expressions for the QAOA's performance on any fixed-$D$ regular,  large-girth graph are known only for $p=1$ \cite{wang2018quantum} and $p=2$ \cite{marwaha2021local}.

In this work, we analyze the performance of the QAOA on any large-girth $D$-regular graph for any choice of $p$ by looking at a single tree subgraph.
Using the regularity of this tree subgraph, we derive an iteration that computes the performance of the QAOA.
After optimizing over the $2p$ input parameters, we find that the $p=11$ QAOA improves on $1/2 + (2/\pi)/\sqrt{D}$, when $D$ is large and the girth is more than $23$.
This is better than all assumption-free classical algorithms known to the authors.%
\footnote{There is a recent classical message-passing algorithm~\cite{alaoui2021local} that also does better than $1/2+(2/\pi)/\sqrt{D}$ for MaxCut on large-girth $D$-regular graphs. It gets asymptotically close to the optimum assuming the solution space has no ``overlap gap property'' (see \cite{gamarnik2021overlap} for a review).
}

We also show that this performance, obtained from one subgraph, is mathematically equal to the ensemble-averaged performance of the QAOA applied to the Sherrington-Kirkpatrick (SK) model~\cite{farhi2019quantum}.
This implies that the iteration in this paper can also be used to give the QAOA's performance on the SK model.
A recent related work can be found in Ref.~\cite{boulebnane2021predicting}.
Our iteration is more efficient than the one originally shown in Ref.~\cite{farhi2019quantum}, and we have been able to go numerically to higher depth. 

Encouraged by our findings, we conjecture that the large $p$ performance of the QAOA will achieve the optimal cut fraction on large random $D$-regular graphs, where a vanishing fraction of neighborhoods are not locally tree-like. The optimal cut fraction on these graphs is also related to the SK model. It is $1/2 + \Pi_*/\sqrt{D} + o(1/\sqrt{D})$, where $\Pi_* = 0.763166\ldots$, the Parisi value, is the ground state energy density of the SK model \cite{parisi1979toward,dembo2017extremal}. If our conjecture is right we have a simple, though computationally intensive, new iteration for calculating the Parisi value $\Pi_*$.

Generalizing our formalism, we also analyze the performance of the QAOA for Max-$q$-XORSAT (of which MaxCut is a special case at $q=2$) on large-girth $D$-regular hypergraphs.
The $p=1$ QAOA was recently found to do better than an analogous classical threshold algorithm for $q > 4$~\cite{marwaha2021bounds}.
The iterative formula for general $q$ is very similar to that for MaxCut and has the same time and memory complexities in the $D\to\infty$ limit.
We run this iteration to find optimal QAOA parameters and performance for $3 \leq q \leq 6$ and $1 \leq p \leq 14$. Moreover, we discuss potential obstructions to the QAOA from not ``seeing'' the whole graph.


\section{Background on the QAOA and MaxCut}
\label{sec:background}

The QAOA \cite{farhi2014quantum} is a quantum algorithm for finding approximate solutions to combinatorial optimization problems. The cost function counts the number of clauses satisfied by an input string.   Given a cost function $C(\zv)$ on strings $\zv \in \{\pm 1\}^n$,  we can define a corresponding quantum operator, diagonal in the computational basis, as $C\ket{\zv} = C(\zv) \ket{\zv}$.
Moreover, let $B = \sum_{j=1}^n X_j$, where $X_j$ is the Pauli $X$ operator acting on qubit $j$.
Let $\vect\gamma = (\gamma_1,\gamma_2,\ldots,\gamma_p)$ and $\vect\beta = (\beta_1,\beta_2,\ldots,\beta_p)$. 
The QAOA initializes the system of qubits in the state $\ket{s} = \ket{+}^{\otimes n}$ and applies $p$ alternating layers of $e^{-i\gamma_j C}$ and $e^{-i\beta_j B}$ to prepare the state
\begin{equation} \label{eq:wavefunction}
\ket{\paramv} = e^{-i\beta_p B} e^{-i\gamma_p C} \cdots e^{-i\beta_1 B} e^{-i\gamma_1 C} \ket{s}.
\end{equation}
For a given cost function $C$, the corresponding QAOA objective function is
$\braket{\paramv|C|\paramv}$.
Preparing the quantum state $\ket{\paramv}$ and then measuring in the computational basis enough times, one will find a bit string $\zv$ such that $C(\zv)$ is near $\braket{\paramv|C|\paramv}$ or better.

We study the performance of the QAOA on MaxCut. Given a graph $G = (V, E)$ with vertices in $V$ and edges in $E$, the MaxCut cost function is
\begin{equation} \label{eq:MaxCut}
    C_{\text{MC}}(\zv) = \sum_{(u,v) \in E} \frac{1}{2}(1 - z_u z_v) .
\end{equation}
We restrict our attention to graphs that are regular and have girth greater than $2p+1$.
We work with these graphs because the subgraph that the QAOA at depth $p$ sees on them are regular trees and this enables our calculation.  Here, by ``seeing'' we refer to the fact that the output of the QAOA on a qubit depends only on a neighborhood of qubits that are within distance $p$ to the given qubit on the graph. 
In what follows, we focus on ($D+1$)-regular graphs, which implies the subgraph seen by the QAOA on each edge is a $D$-ary tree.

With $D$ large, we will see that the optimal $\gammav$ are of order $1/\sqrt{D}$. So we find it convenient to prepare the QAOA state $\ket{\paramv}$ using the scaled cost function operator
\begin{equation}
    C = -\frac{1}{\sqrt{D}} \sum_{(u,v) \in E} Z_u Z_v ,
\end{equation}
where we have subtracted a constant that only introduces an irrelevant phase. The factor of 1/2 has been dropped so that this form of the cost function will match the cost function used in the Sherrington-Kirkpatrick model.
Note we are preparing the state $\ket{\paramv}$ using $C$ as a driver instead of the $C_{\rm MC}$ operator.
With this scaling, the optimal $\gammav$ will be of order unity instead of $1/\sqrt{D}$.

Given any edge in a $(D+1)$-regular graph with girth greater than $2p+1$ the subgraph with vertices at most $p$ away from the edge is a $D$-ary tree regardless of which edge.
Since the QAOA at depth $p$ only sees these trees, we have
\begin{equation}
    \braket{\paramv | C_{\text{MC}} | \paramv} = \frac{1}{2} |E| \Big(1 - \braket{ \paramv | Z_u Z_v | \paramv } \Big)
\end{equation}
where $(u,v) \in E$ is any edge. The cut fraction output by the QAOA is then
\begin{equation}
     \frac{\braket{\paramv | C_{\text{MC}} | \paramv}}{|E|} = \frac{1}{2} - \frac{1}{2} \braket{ \paramv | Z_u Z_v | \paramv }.
\end{equation}
Since the QAOA cannot beat the optimal cut fraction of $1/2 + \text{order}( 1/\sqrt{D})$ in a typical random regular graph, we write 
\begin{equation}\label{eq:implicit_def_eta}
    \frac{1}{2}\braket{\paramv | Z_u Z_v | \paramv} = -   \frac{\nu_p(D,\paramv)}{\sqrt{D}} 
\end{equation}
where $\nu_p(D,\paramv)$ for good parameters will be of order unity.

\vspace{10pt}

\section{The QAOA on large-girth $(D+1)$-regular graphs}\label{sec:QAOA_tree_iterations}

We describe two iterations to evaluate the performance of the QAOA at high depth on MaxCut on large-girth $(D+1)$-regular graphs.
The cut fraction output by the QAOA at any parameters is
\begin{equation} \label{eq:MC-and-nu}
    \frac{\braket{\paramv | C_{\text{MC}} | \paramv}}{|E|} = \frac{1}{2} + \frac{\nu_p(D,\paramv)}{\sqrt{D}}.
\end{equation}
We give one iteration to evaluate $\nu_p(D,\paramv)$ at finite $D$, and one for the $D \to \infty$ limit.
We have attempted to make this section self-contained for those readers only interested in the form of the iterations, and deferred the detailed proofs of these iterations to \cref{sec:iter_proofs}.

In what follows, we index vectors in the following order:
\vspace{-5pt}
\begin{equation}
    \av = (a_1, a_2, \cdots_, a_p, a_0, a_{-p}, \cdots, a_{-2}, a_{-1}) \,.
\end{equation}
Define, for $1 \leq r \leq p$,
\begin{align} \label{eq:Gamma-def}
    \Gamma_{r} = \gamma_r, \qquad
    \Gamma_0 = 0, \qquad
    \Gamma_{-r} = -\gamma_r.
\end{align}
That is, $\Gammav$ is a $(2p+1)$-component vector.
Furthermore, let
\begin{align}\label{eq:f_definition}
f(\av) &= \frac{1}{2} \braket{a_1 | e^{i \beta_1 X} | a_2} \cdots \braket{ a_{p-1} | e^{i \beta_{p-1} X} | a_p} \braket{a_p | e^{i \beta_p X} | a_0} \nonumber \\
    &\quad \times \braket{a_{0} | e^{-i \beta_p X} | a_{-p}} \braket{a_{-p} | e^{-i \beta_{p-1} X} | a_{-(p-1)}} \cdots \braket{a_{-2} | e^{-i \beta_1 X} | a_{-1}}
\end{align}
where $a_i \in \{+1, -1\}$ enumerates the two computational basis states, and
\vspace{-5pt}
\begin{equation}
\braket{a_1|e^{i\beta X} |a_2} = 
\begin{cases}
\cos(\beta) &\text{ if } a_1 = a_2 \\
i\sin(\beta) &\text{ if } a_1 \neq a_2.
\end{cases}
\vspace{-5pt}
\end{equation}

\subsection{An iteration for any finite $D$}\label{sec:finite_D_iteration}
Here we give an iteration that allows us to evaluate $\nu_p(D,\paramv)$ for any input parameters and $D$.

Let $H_D^{(m)} \colon \{-1, 1\}^{2p+1} \rightarrow \mathbb{C}$ for $0 \leq m \leq p$ with 
\begin{equation}
H_D^{(0)}(\av) = 1
\end{equation}
and 
\begin{equation}\label{eq:H_m_def}
    H_D^{(m)}(\av) = \bigg(\sum_{\bv} f(\bv) H_D^{(m-1)}(\bv) \cos{\Big[ {\textstyle \frac{1}{\sqrt{D}}}\Gammav \cdot (\av \bv) \Big]} \bigg)^D \quad \text{ for } 1 \leq m \leq p 
\end{equation}
where we denote $\av\bv$ as the entry-wise product, i.e. $(\av\bv)
_j = a_j b_j$.
By starting with $H_D^{(0)}(\av)=1$ and iteratively evaluating \cref{eq:H_m_def} for $m=1,2,\ldots,p$, we arrive at $H_D^{(p)}(\av)$ that can be used to compute
\begin{align}\label{eq:ZL_ZR_finite_D}
    \nu_p(D,\paramv) = {\textstyle \frac{i \sqrt{D}}{2}} \sum_{\av, \bv} & a_0 b_0 f(\av) f(\bv) H_D^{(p)}(\av) H_D^{(p)}(\bv) \sin{\Big[ {\textstyle \frac{1}{\sqrt{D}}} \Gammav \cdot (\av \bv) \Big]}.
\end{align}
We prove this in \cref{sec:iter_proofs_finite}.

Note that each step of the above iteration involves a sum with $2^{2p+1}$ terms for each of the $2^{2p+1}$ entries of $H_D^{(m)}(\av)$.
The final step has a sum with $O(16^p)$ terms. Overall, this iteration has a time complexity of $O(p\, 16^p)$ and a memory complexity of $O(4^p)$.
This is much faster than the original ``light cone'' approach that directly evaluates $\braket{Z_u Z_v}$ on the subgraph seen by the QAOA~\cite{farhi2014quantum}. That procedure takes $2^{O({D^p})}$ time without utilizing the symmetric structure of the regular tree subgraph.

\subsection{An iteration for $D \rightarrow \infty$}\label{sec:infinite_D_iteration}

We find that in the infinite $D$ limit we get a more compact iteration which takes fewer steps to evaluate. We state the result here and prove it in \cref{sec:iter_proofs_infinite}.

Define matrices $G^{(m)} \in \mathbb{C}^{(2p+1) \times (2p+1)}$ for $0 \leq m \leq p$ as follows.
For $j,k \in \{1,\dots,p, 0, -p,\dots$, $-1\}$, let 
\begin{equation}
G_{j,k}^{(0)} = \sum_{\av} f(\av) a_j a_k
\end{equation}
and
\begin{equation}\label{eq:inf_D_iter_G_m}
    G_{j,k}^{(m)} = \sum_{\av} f(\av) a_j a_k \exp
    \Big({-}\frac{1}{2} \sum_{j',k'=-p}^p G_{j',k'}^{(m-1)} \Gamma_{j'} \Gamma_{k'} a_{j'} a_{k'} 
    \Big)
    \quad
    \text{for } 1 \leq m \leq p.
\end{equation}
Starting at $m=0$ and going up by $p$ steps, we arrive at $G^{(p)}$ which is used to compute
\begin{equation}\label{eq:infinite_D_iter_result}
     \nu_p(\paramv) := \lim_{D \rightarrow \infty} \nu_p(D,\paramv) = \frac{i}{2} \sum_{j = -p}^p \Gamma_j ( G^{(p)}_{0, j} )^2.
\end{equation}
Since there are $p+1$ matrices with $O(p^2)$ entries, and each involves a sum over $O(4^p)$ terms, this iteration na\"ively has a time complexity of $O(p^3   4^p)$. This is quadratically better than the time complexity of the finite-$D$ formula.
The memory complexity is only $O(p^2)$ for storing the $G^{(m)}$ matrix, which is exponentially better than $O(4^p)$ memory needed to store the entries of $H_D^{(m)}$ in the finite-$D$ iteration.

We note some properties about this iteration.
Superficially Eq.~\eqref{eq:inf_D_iter_G_m} looks like a recursive map on the matrices $G^{(m)}$ which one might think would only asymptotically converge in the number of steps.
However it converges to a fixed point $G^{(p)}$ after $p$ steps in a highly structured way.
In particular, the iteration has the following three sets of properties, which we prove in Appendix~\ref{apx:iter-properties}.
We use the convention $1 \leq r < s \leq p$ and $j, k \in \left\{ 1, \ldots, p, 0, -p, \ldots, -1 \right\}$.

\begin{enumerate}[label=(\alph*)]
\item Values of the diagonal and anti-diagonal of $G^{(m)}$ are all 1.
$G^{(m)}$ is symmetric with respect to the diagonal, reflection with respect to the anti-diagonal results in complex conjugation, and the matrix consists of 8 triangular regions which are rotations, reflections, and/or complex conjugations of each other.
To be precise, $G^{(m)}$ satisfies the following properties:
    \begin{multicols}{2}
    \begin{enumerate}[label=(\arabic*)]
        \item  $G_{j,k}^{(m)} = G_{k,j}^{(m)}$
        \item  $G_{j,j}^{(m)} = G_{j,-j}^{(m)} = 1$
        \item  $G_{0, r}^{(m)} = G_{0,-r}^{(m)*}$
        \item  $G_{r,s}^{(m)} = G_{r,-s}^{(m)}= G_{-r,-s}^{(m)*} = G_{-r,s}^{(m)*}$
    \end{enumerate}
    \end{multicols}
These are sketched in  \cref{fig:G_matrix_structure}.

\item \label{item:dependence} $G^{(m)}_{r, s}$ only depends on $G^{(m-1)}_{r', s'}$ where $1 \leq r' < s'< s$.
Similarly, $G_{0,r}^{(m)}$ only depends on $G_{r',s'}^{(m-1)}$ for $1\le r' < s' \le p$.

\item As a consequence of \ref{item:dependence}, at each step $m$ of the iteration the corner blocks of size $(m + 1) \times (m + 1)$ of $G^{(m)}$ converge to their final value, i.e., they reach a fixed point and do not change in later iteration steps.
This implies that matrix $G^{(p)}$ is a fixed point.
This is sketched in \cref{fig:G_matrix_structure}, where matrix entries of the same color reach their fixed point at the same step of the iteration, starting from the corners and ending with the central ``cross'' at step $p$.
\end{enumerate}

Making use of \ref{item:dependence} and some properties of $f(\av)$ allows us to lower the complexity of the iterative procedure to $O(p^2 4^p)$.
We show this in Appendix~\ref{apx:proof_placement_G}.

\begin{figure}[h]
    \centering
    \includegraphics[height=3.5in]{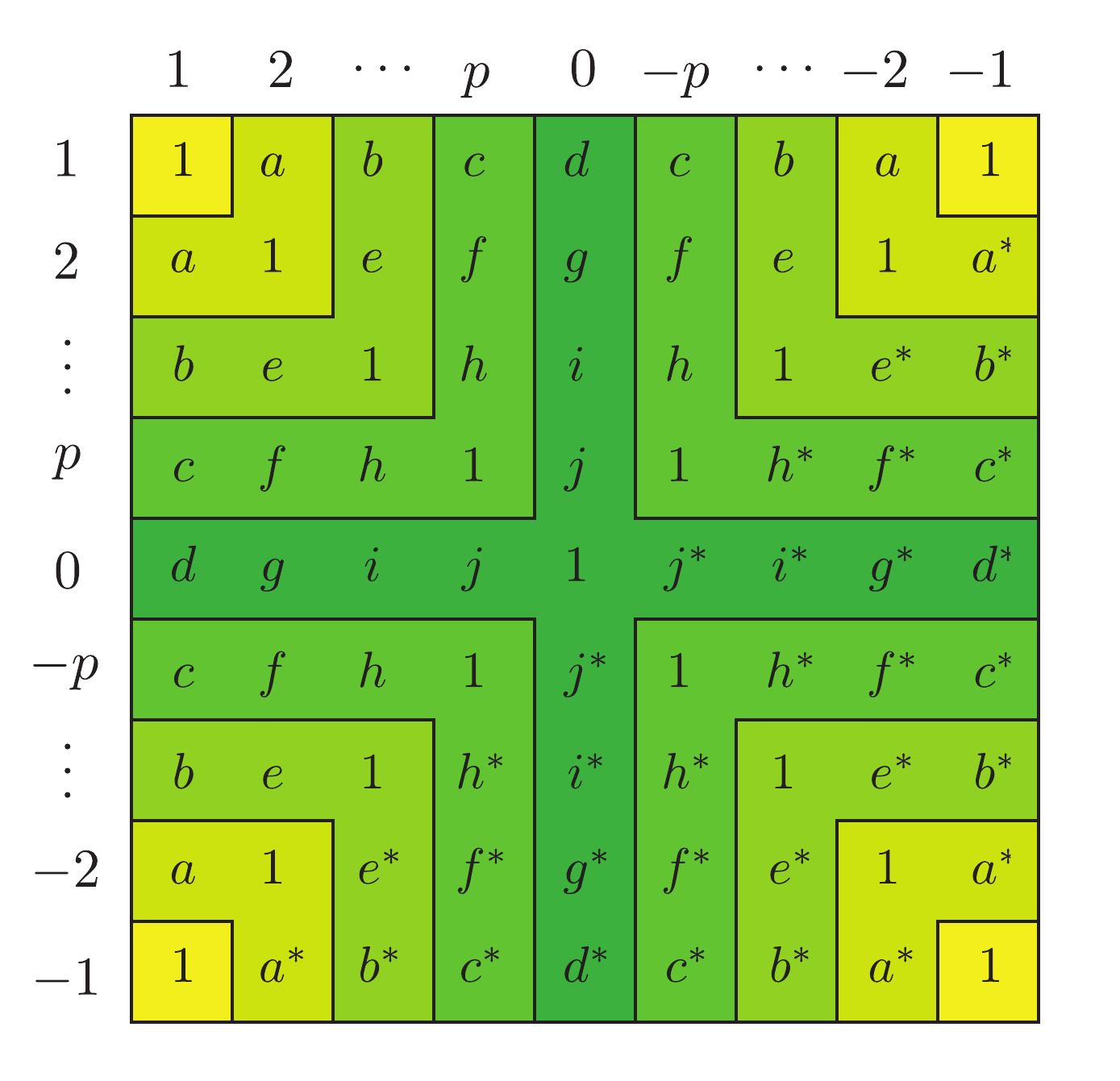}
    \vspace{-10pt}
    \caption{Sketch of the properties of matrices $G^{(m)}$ in the iterative formula of Section~\ref{sec:infinite_D_iteration}, at $p=4$.
    Regions of the same color converge in the same iteration step, starting from the corners and with the central row and column converging after $p$ steps.}
    \label{fig:G_matrix_structure}
\end{figure}


\section{Proof of the iterations}\label{sec:iter_proofs}
In this section, we prove the correctness of the two iterations in \cref{sec:finite_D_iteration} and \cref{sec:infinite_D_iteration}.
These proofs illustrate two key technical ideas in this paper: namely, we can exploit the regularity of the tree subgraph seen by the QAOA to yield a compact formula for its performance, and we find an algebraic simplification in the $D\to\infty$ limit.
The reader not interested in the techniques can skip to \cref{sec:numerics}.

\subsection{Proof of the finite $D$ iteration}
\label{sec:iter_proofs_finite}
We now prove the finite $D$ iteration that was stated in \cref{sec:finite_D_iteration}.
We focus on the iteration for $p=2$ as an example, and its generalization to other $p$ is immediate.

The goal is to evaluate the energy expectation for a single edge $(L, R)$ on a $(D+1)$-regular graph whose girth is larger than $2p+1$.
For $p=2$, this is
\begin{align} \label{eq:ZLZR-p=2}
    \braket{\paramv | Z_L Z_R | \paramv} =
    \bra{s} e^{i\gamma_1 C} e^{i\beta_1 B} e^{i \gamma_2 C} e^{i\beta_2 B} Z_L Z_R e^{-i\beta_2 B} e^{-i \gamma_2 C} e^{-i\beta_1 B} e^{-i\gamma_1 C} \ket{s}
\end{align}
where $C=-(1/\sqrt{D}) \sum_{(u,v)\in E} Z_u Z_v$, and $E$ denotes the set of edges for the given graph.
In the Heisenberg picture, it can be seen that the operator $e^{i\gamma_1 C} \cdots e^{i\beta_p B} Z_L Z_R e^{-i\beta_p B} \cdots e^{-i\gamma_1 C}$ only acts nontrivially on the subgraph induced by including all vertices distance $p$ or less from either node $L$ or $R$.
For a $(D+1)$-regular graph with girth greater than $2p+1$, this subgraph looks like a pair of $D$-ary trees that are glued at their roots (see \cref{fig:tree_graph}), with a total of $n=2(D^p + \cdots +D+1)$ nodes.
In what follows, we compute \cref{eq:ZLZR-p=2} by restricting our attention to only the qubits in this subgraph.

\begin{figure}[t]
\centering
    \includegraphics[height=2.65in]{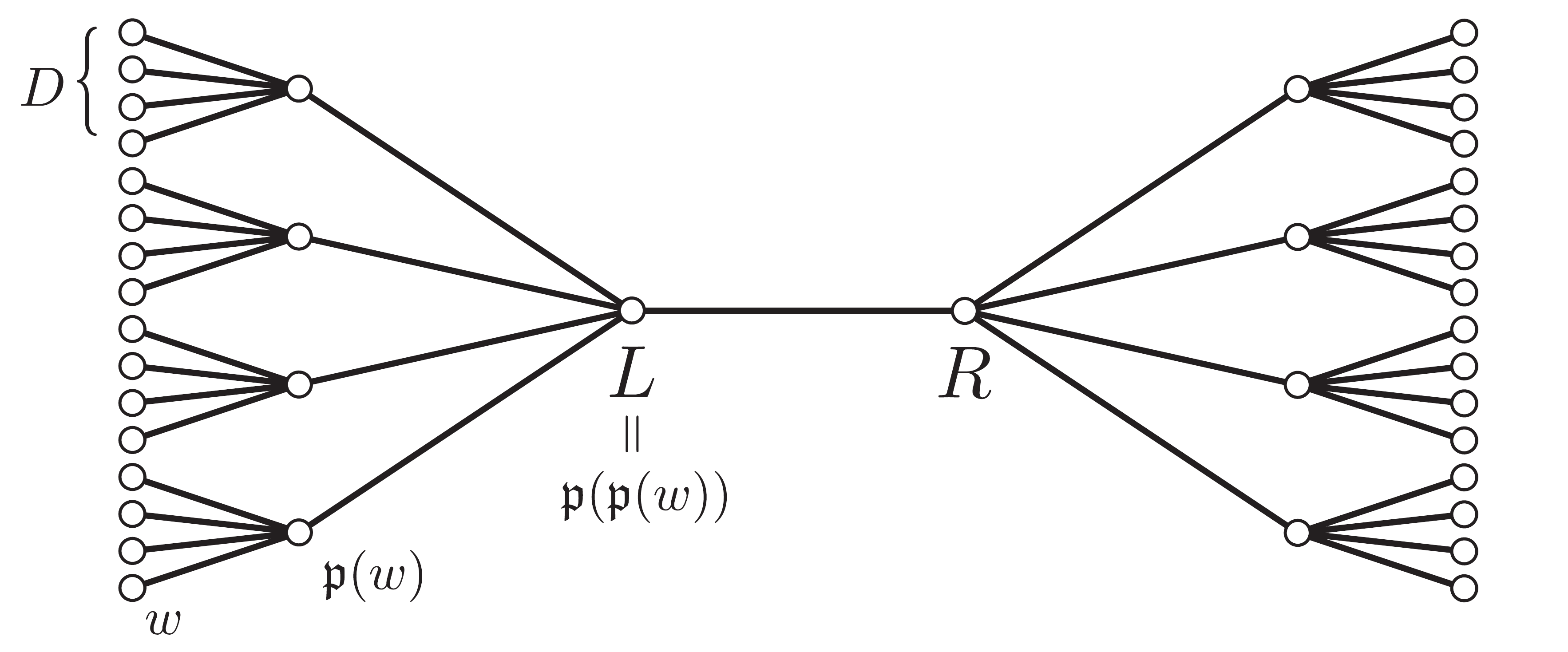}
\caption{The tree subgraph seen by the QAOA at $p=2$ for the edge $(L, R)$ on a $(D+1)$-regular graph with girth $> 2p+1$.
For any node $v$ on either of the $D$-ary trees we denote $\parent(v)$ as the parent of that node. In the figure $w$ is a leaf node, and we show its parent and its parent's parent.}
\vspace{-8pt}
\label{fig:tree_graph}
\end{figure}

We start by inserting 5 complete sets in the computational $Z$-basis that we will label as $\zv^\pp{1}, \zv^\pp{2}, \zv^\pp{0}, \zv^\pp{-2}$, and $\zv^\pp{-1}$. Each of these complete sets iterates over $2^n$ basis states since the number of qubits in the subgraph is $n$. Then
\begin{align}
&\braket{\paramv| Z_L Z_R|\paramv } =
    \sum_{\{\zv^\pp{i} \}} \braket{s |\zv^\pp{1}} 
    e^{i\gamma_1 C(\zv^\pp{1})}
    \braket{\zv^\pp{1}| e^{i\beta_1 B} |\zv^\pp{2}}
    e^{i\gamma_2 C(\zv^\pp{2})}
    \braket{\zv^\pp{2}| e^{i\beta_2 B} |\zv^\pp{0}} z^\pp{0}_L z^\pp{0}_R \nonumber \\
&\qquad\qquad\qquad \qquad
\quad \times
    \braket{\zv^\pp{0}| e^{-i\beta_2 B} |\zv^\pp{-2}}
    e^{-i\gamma_2 C(\zv^\pp{-2})}
    \braket{\zv^\pp{-2}| e^{-i\beta_1 B} |\zv^\pp{-1}}
    e^{-i\gamma_1 C(\zv^\pp{-1})} \braket{\zv^\pp{-1}|s} \nonumber \\
& \qquad\qquad = \frac{1}{2^n} \sum_{\{\zv^\pp{i} \}}
    \exp\Big[ i\gamma_1 C(\zv^\pp{1}) + i\gamma_2 C(\zv^\pp{2}) - i\gamma_2 C(\zv^\pp{-2})- i\gamma_1 C(\zv^\pp{-1}) \Big] 
    z^\pp{0}_L z^\pp{0}_R \nonumber \\
& \qquad \qquad \quad \times
    \prod_{v=1}^n \braket{z_v^\pp{1}|e^{i\beta_1 X}|z_v^\pp{2}}
     \braket{z_v^\pp{2}|e^{i\beta_2 X}|z_v^\pp{0}}
      \braket{z_v^\pp{0}|e^{-i\beta_2 X}|z_v^\pp{-2}}
       \braket{z_v^\pp{-2}|e^{-i\beta_1 X}|z_v^\pp{-1}}.
\label{eq:ZLZR-deriv-1}
\end{align}
Let us define the following function which is the $p=2$ version of Eq.~\eqref{eq:f_definition}:
\enlargethispage{\baselineskip}
\begin{equation} \label{eq:fdef-p=2}
f(a_1, a_2, a_0, a_{-2}, a_{-1}) = 
\frac{1}{2}
\braket{a_{1}|e^{i\beta_1 X}|a_{2}}
     \braket{a_{2}|e^{i\beta_2 X}|a_{0}}
      \braket{a_{0}|e^{-i\beta_2 X}|a_{-2}}
       \braket{a_{-2}|e^{-i\beta_1 X}|a_{-1}}.
\end{equation}
Then, using $\Gammav$ as defined in Eq.~\eqref{eq:Gamma-def}, we can rewrite Eq.~\eqref{eq:ZLZR-deriv-1} as 
\begin{equation}
\braket{\paramv| Z_L Z_R|\paramv } = 
\sum_{\{\zv^\pp{i} \}}
z^\pp{0}_L z^\pp{0}_R
 \exp\Big[ i\sum_{j=-2}^2 \Gamma_j C(\zv^\pp{j})\Big]  \prod_{v=1}^n f(\zv_v)
\end{equation}
where $\zv_v = (z_v^\pp{1}, z_v^\pp{2}, z_v^\pp{0}, z_v^\pp{-2}, z_v^\pp{-1})$ are the bits from the 5 complete sets associated with node $v$.
Using the fact that $C(\zv) = -(1/\sqrt{D}) \sum_{(u,v) \in E} z_u z_v$, we can rewrite $\braket{\paramv| Z_L Z_R|\paramv }$ as
\begin{align}
\braket{\paramv| Z_L Z_R|\paramv } &= 
\sum_{\{\zv_u \}}
z^\pp{0}_L z^\pp{0}_R
 \exp\Big[ {-} {\textstyle \frac{i}{\sqrt{D}}}  \sum_{(u',v')\in E} \Gammav  \cdot (\zv_{u'} \zv_{v'}) \Big]  \prod_{v=1}^n f(\zv_v)
 \label{eq:ZLZR-deriv-2}
\end{align}
where we have replaced the sum over the $2p+1$ complete sets $\{\zv^\pp{i}: -2\le i \le 2\}$ with an equivalent sum over the bit configurations of each node $\{\zv_u: 1 \le u \le n\}$.
Now to evaluate $\braket{Z_LZ_R}$ we need to perform a sum over the bit configurations $\zv_v$ of every node $v$ in the tree subgraph, where each node is coupled to its neighbors on the graph via the term in the exponential of \cref{eq:ZLZR-deriv-2}.

We can start by considering a single leaf node $w$ who is only connected to its parent node
$\parent(w)$ on the tree, as shown in \cref{fig:tree_graph}. 
Then the sum over the 32 bit values of the configuration $\zv_w= (z_w^\pp{1}, z_w^\pp{2}, z_w^\pp{0}, z_w^\pp{-2}, z_w^\pp{-1})$ yields
\vspace{-5pt}%
\begin{equation}
\sum_{\zv_w} f(\zv_w) \exp\Big[ {-} {\textstyle \frac{i}{\sqrt{D}}}  \Gammav  \cdot (\zv_w \zv_{\parent(w)})  \Big] 
\end{equation}
which is a function of the parent node's configuration $\zv_{\parent(w)}$.
Note that doing this on every leaf node contributes the same function to its parent.
Since there are exactly $D$ leaf nodes per parent, we get the following contribution
\vspace{-5pt}%
\begin{equation}
H^{(1)}_D(\zv_{\parent(w)}) := \bigg(
\sum_{\zv_w} f(\zv_w) \exp\Big[ {-} {\textstyle \frac{i}{\sqrt{D}}}  \Gammav  \cdot (\zv_w \zv_{\parent(w)})  \Big]
\bigg)^D.
\end{equation}
This is true for every parent node of any of the leaves.

After performing the sums for all the leaf nodes, we can move to the sums for their parents.
Let us look at the sum on the node $\parent(w)$ for example, which yields
\vspace{-3pt}
\begin{align}
\sum_{\zv_{\parent(w)}} f(\zv_{\parent(w)}) H^{(1)}_D(\zv_{\parent(w)}) \exp\Big[ {-} {\textstyle \frac{i}{\sqrt{D}}} \Gammav  \cdot (\zv_{\parent(w)} \zv_{\parent(\parent(w))})  \Big].
\end{align}
Again, because its parent node $\parent(\parent(w))$ has $D$ identical children like $\parent(w)$, this yields
\begin{align}  \label{eq:ppw}
H_D^{(2)}(\zv_{\parent(\parent(w))}) := \bigg(
\sum_{\zv_{\parent(w)}} f(\zv_{\parent(w)}) H^{(1)}_D(\zv_{\parent(w)}) \exp\Big[ {-} {\textstyle \frac{i}{\sqrt{D}}}  \Gammav  \cdot (\zv_{\parent(w)} \zv_{\parent(\parent(w))})  \Big]
\bigg)^D.
\end{align}
Note at $p=2$ we have reached the root of the tree $L=\parent(\parent(w))$ after these two iterations.

To evaluate $\braket{\paramv| Z_L Z_R|\paramv }$, it only remains to sum over the $5$ bits in $\zv_L$ and the $5$ bits in $\zv_R$:
\vspace{-3pt}
\begin{align}
\braket{\paramv| Z_L Z_R|\paramv } = \sum_{\zv_L, \zv_R} 
z^\pp{0}_L z^\pp{0}_R f(\zv_L) f(\zv_R) H_D^{(2)}(\zv_L) H_D^{(2)}(\zv_R)
 \exp\Big[ {-} {\textstyle \frac{i}{\sqrt{D}}} \Gammav  \cdot (\zv_L \zv_R) \Big] .
\end{align}

\enlargethispage{\baselineskip}

For higher $p$, we can see that the evaluation of $\braket{\paramv| Z_L Z_R|\paramv }$ simply involves more iterations of Eq.~\eqref{eq:ppw} corresponding to more levels in the tree subgraph.
In summary, the iteration for general $p$ can be written as starting with
$H^{(0)}_D(\av) = 1$
and then evaluating for $m=1,2,\ldots, p$,
\vspace{-3pt}
\begin{align} \label{eq:iter-with-exp}
H^{(m)}_D(\av) = \bigg(\sum_\bv f(\bv) H^{(m-1)}_D(\bv) \exp\Big[ {-} {\textstyle \frac{i}{\sqrt{D}}} \Gammav \cdot (\av\bv)\Big] \bigg)^D,
\end{align}
since there are $p$ levels in the tree subgraph seen by the QAOA with $p$ layers.
At the end we get
\begin{align} \label{eq:ZLZR-with-exp}
\braket{\paramv| Z_L Z_R|\paramv } = \sum_{\av, \bv} & a_0b_0 f(\av) f(\bv) H_D^{(p)}(\av) H_D^{(p)}(\bv) \exp\Big[{-}{\textstyle \frac{i}{\sqrt{D}}} \Gammav \cdot (\av \bv) \Big].
\end{align}
This is almost what we have stated for the iteration in \cref{sec:finite_D_iteration}.

To finish the proof, we note from Eq.~\eqref{eq:fdef-p=2} as well as its general $p$ version in Eq.~\eqref{eq:f_definition} that
\begin{equation}
f(-\av) = f(\av).
\end{equation}
We now claim that
\begin{equation} \label{eq:Hm-negation-identity}
    H_D^{(m)}(-\av) = H_D^{(m)}(\av) \quad \text{for } 0 \leq m \leq p
\end{equation}
which we will show by induction on $m$. Note this is trivially true for the base case $m=0$ since $H_D^{(0)}(\av)=1$ is constant. Assuming that $H_D^{(m-1)}(-\av) = H_D^{(m-1)}(\av)$, we can take $\bv \to -\bv$ in the summand of \cref{eq:iter-with-exp} and combine it with its original form to see that
\begin{equation} \label{eq:HDm-final}
H^{(m)}_D(\av) 
= \bigg(\sum_\bv f(\bv) H^{(m-1)}_D(\bv) \cos\Big[ {\textstyle \frac{1}{\sqrt{D}}} \Gammav \cdot (\av\bv)\Big]\bigg)^D.
\vspace{-5pt}
\end{equation}
From this form it follows that $H_D^{(m)}(-\av) = H_D^{(m)}(\av)$ since $\av$ only appears in the cosine which is an even function, establishing Eq.~\eqref{eq:Hm-negation-identity}.

Similarly, we can take $\bv\to-\bv$ in Eq.~\eqref{eq:ZLZR-with-exp} and combine with its original form to get
\begin{equation}
\braket{\paramv| Z_L Z_R|\paramv } 
 = -i\sum_{\av, \bv} 
a_0 b_0 f(\av) f(\bv) H_D^{(p)}(\av) H_D^{(p)}(\bv)
 \sin\Big[ {\textstyle \frac{1}{\sqrt{D}}} \Gammav  \cdot (\av \bv) \Big] .
\vspace{-5pt}%
\end{equation}
Thus to get the $\nu_p$ as defined in \cref{eq:implicit_def_eta} that tells us the cut fraction, we have
\begin{equation}
\nu_p(D, \paramv)
= \frac{i \sqrt{D}}{2} \sum_{\av, \bv}  a_0 b_0 f(\av) f(\bv) H_D^{(p)}(\av) H_D^{(p)}(\bv) \sin{\Big[ {\textstyle \frac{1}{\sqrt{D}}} \Gammav \cdot (\av \bv) \Big]}.
\end{equation}
This proves our iteration for any finite $D$ in Section~\ref{sec:finite_D_iteration}.

\vspace{8pt}
\subsection{Proof of $D \rightarrow \infty$ iteration}
\label{sec:iter_proofs_infinite}
We wish to evaluate \cref{eq:ZL_ZR_finite_D} in the $D \rightarrow \infty$ limit:
\begin{equation}\label{eq:ZL_ZR_finite_D_2nd_time}
    \lim_{D \rightarrow \infty} \nu_p(D,\paramv) 
= \lim_{D \rightarrow \infty} \frac{i \sqrt{D}}{2} \sum_{\av, \bv}  a_0 b_0 f(\av) f(\bv) H_D^{(p)}(\av) H_D^{(p)}(\bv) \sin{\Big[ {\textstyle \frac{1}{\sqrt{D}}} \Gammav \cdot (\av \bv) \Big]}.
\end{equation}

We first prove by induction that for $0 \leq m \leq p$,
\begin{equation}
    H^{(m)}(\av) := \lim_{D \rightarrow \infty} H_D^{(m)}(\av)
\end{equation}
exists and is finite. For $m=0$, our claim holds because $H_D^{(0)}(\av) = 1$. Assuming the claim is true for $m-1$, we examine $H^{(m)}(\av)$ by taking the limit on
\cref{eq:HDm-final}
\begin{equation}
    H^{(m)}(\av) = \lim_{D \rightarrow \infty} \bigg[\sum_{\bv} f(\bv) H_D^{(m-1)}(\bv) \cos\Big( {\textstyle \frac{1}{\sqrt{D}}} \Gammav \cdot (\av \bv) \Big)\bigg]^D.
\end{equation}
Then performing a Taylor expansion of $\cos(\cdots)$, we get
\begin{equation}
     H^{(m)}(\av) = \lim_{D \rightarrow \infty} \bigg[ \sum_{\bv} f(\bv) H_D^{(m-1)}(\bv) \bigg(1 - {\textstyle \frac{1}{2D}} \Big(\Gammav \cdot (\av \bv) \Big)^2 + O\Big( {\textstyle \frac{1}{D^2}} \Big) \bigg)  \bigg]^D.
\end{equation}

Using the fact that for any $m$,
\begin{equation}\label{eq:sum_f_H_in_main_text}
    \sum_\av f(\av) H_D^{(m)}(\av) = 1
\end{equation}
which is proved as Lemma~\ref{lem:fH-sum} in \cref{apx:H_properties}, we get
\begin{equation}
     H^{(m)}(\av) = \lim_{D \rightarrow \infty} \Big[ 1 - {\textstyle \frac{1}{2D}} \sum_{\bv} f(\bv) H_D^{(m-1)}(\bv) \big(\Gammav \cdot (\av \bv) \big)^2 + O\big( {\textstyle \frac{1}{D^2}} \big) \Big]^D .
\end{equation}

Finally, taking the limit,
\begin{equation}\label{eq:H_tilde_m}
    H^{(m)}(\av) = \exp \Big[ -\frac{1}{2} \sum_\bv f(\bv) H^{(m-1)}(\bv) \big(\Gammav \cdot (\av \bv) \big)^2 \Big]
\end{equation}
which yields an iteration on $H^{(m)}$.

Returning to \cref{eq:ZL_ZR_finite_D_2nd_time}, we apply the product rule of limits to $H_D^{(p)}(\av), H_D^{(p)}(\bv)$, and $\sqrt{D} \sin[\Gammav\cdot(\av\bv)/\sqrt{D}]$ and get
\begin{equation}\label{eq:inf_D_proof_with_H_tilde}
    \lim_{D \rightarrow \infty} \nu_p(D,\paramv) = \frac{i}{2} \sum_{\av, \bv} a_0 b_0 f(\av) f(\bv) H^{(p)}(\av) H^{(p)}(\bv) \Gammav \cdot (\av \bv). 
\end{equation}

This iteration can be simplified by expanding the dot products in Eqs.~\eqref{eq:H_tilde_m} and \eqref{eq:inf_D_proof_with_H_tilde} to get
\begin{align}
H^{(m)}(\av)
    &= \exp\Big[ {-} \frac{1}{2}  \sum_{j,k=-p}^p \Gamma_j \Gamma_k a_j a_k \Big(\sum_\bv f(\bv) H^{(m-1)}(\bv) b_j b_k\Big)\Big] \label{eq:H_tilde_m_grouped}, \\
\lim_{D \rightarrow \infty} \nu_p(D,\paramv) 
    &= \frac{i}{2} \sum_{j=-p}^p \Gamma_j \Big(\sum_{\av} f(\av) H^{(p)}(\av) a_0 a_j \Big) \Big(\sum_{\bv}  f(\bv) H^{(p)}(\bv) b_0 b_j \Big) \label{eq:lim_eta_grouped}
\end{align}
and noticing that the quantity $\sum_{\av} f(\av) H^{(m)}(\av) a_j a_k$ appears repeatedly. For $0\leq m \leq p$ and $ -p \le j,k \le p$, define
\begin{equation}\label{eq:G_definition}
    G_{j,k}^{(m)} := \sum_{\av} f(\av) H^{(m)}(\av) a_j a_k.
\end{equation}
For $m=0$, this is 
\begin{equation}
    G^{(0)}_{j,k} = \sum_{\av} f(\av) a_j a_k.
\end{equation}
For $1\le m \le p$, we plug \cref{eq:H_tilde_m_grouped} into \cref{eq:G_definition} to get
\begin{align}\label{eq:G_element_explicit}
    G_{j,k}^{(m)} = \sum_{\av} f(\av) a_j a_k \exp\Big[{ -\frac{1}{2} \sum_{j',k'=-p}^p G^{(m-1)}_{j',k'}  \Gamma_{j'} \Gamma_{k'} a_{j'} a_{k'}  }\Big].
\end{align}

Finally, \cref{eq:lim_eta_grouped} can be written as 
\begin{align}
    \lim_{D \rightarrow \infty} \nu_p(D,\paramv)= \frac{i}{2} \sum_{j = -p}^p \Gamma_j (G^{(p)}_{0, j})^2
\end{align}
which establishes the iteration stated in \cref{sec:infinite_D_iteration}.

\clearpage
\section{Numerical evaluation and optimization}
\label{sec:numerics}

Let
\begin{equation}\label{eq:nu_p}
    \nu_p(\paramv) = \lim_{D \rightarrow \infty} \nu_p(D,\paramv).
\end{equation}
Numerically implementing the iteration summarized in \cref{sec:infinite_D_iteration} and optimizing for $\gammav, \betav$ we find
\begin{equation}\label{eq:bar_nu_p}
    \bar\nu_p = \max_{\gammav, \betav} \nu_p(\gammav, \betav)
\end{equation}
up to $p=17$.
The values are given in Table~\ref{table:optimal_values} and plotted in \cref{fig:nu_bar} as a function of $1/p$.
The optimal $\gammav$ and $\betav$ can be found in Table~\ref{table:optimal_params} in \cref{apx:optimal_angles}, and some examples are plotted in \cref{fig:params}.
Based on the smooth pattern of the optimal $\gammav$ and $\betav$ up to $p$ of 17, we guess these parameters at $p=18, 19, 20$ using heuristics similar to that in Ref.~\cite{ZhouQAOA}.
Then evaluation of $\nu_p(\paramv)$ gives lower bounds on $\bar{\nu}_p$ at higher $p$ which are listed in Table~\ref{table:lower_bound_values}, and their corresponding $\gammav$ and $\betav$ are listed in Table~\ref{table:lower_bound_params}.

\begin{figure}[ht]
    \centering
    \includegraphics[width=\linewidth]{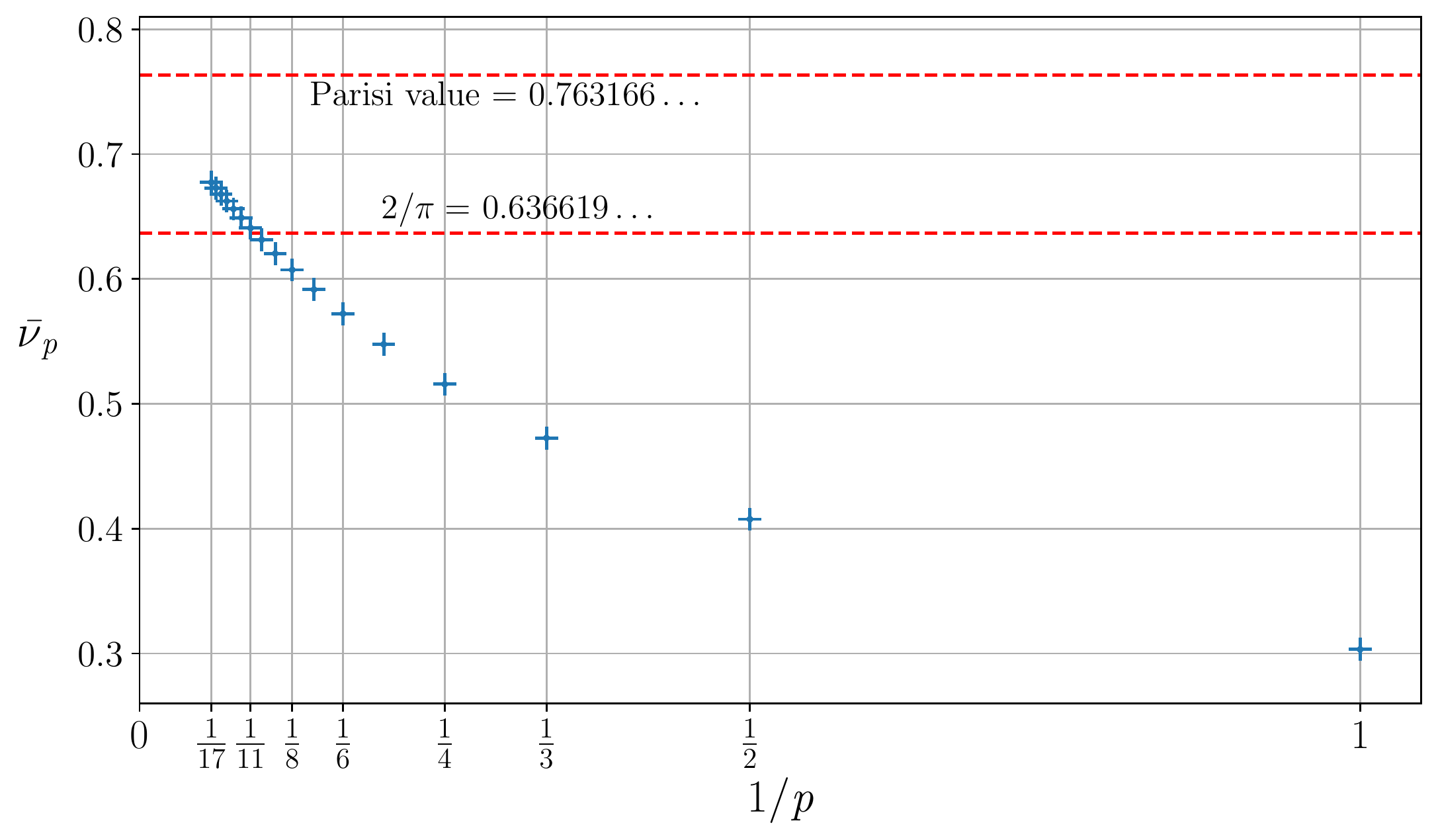}
    \caption{
	Optimal values $\bar{\nu}_p$ as a function of $1 / p$.
    At $p = 11$, $\bar{\nu}_p$ exceeds $2 / \pi$, related to the cut fraction of the best currently known assumption-free classical algorithms. 
    Here we made the somewhat arbitrary choice of plotting the data against $1 / p$ to see the large $p$ region in a compact plot.
    }
    \label{fig:nu_bar}
\end{figure}

\begin{figure}[bth]
    \centering
    \includegraphics[width=\linewidth]{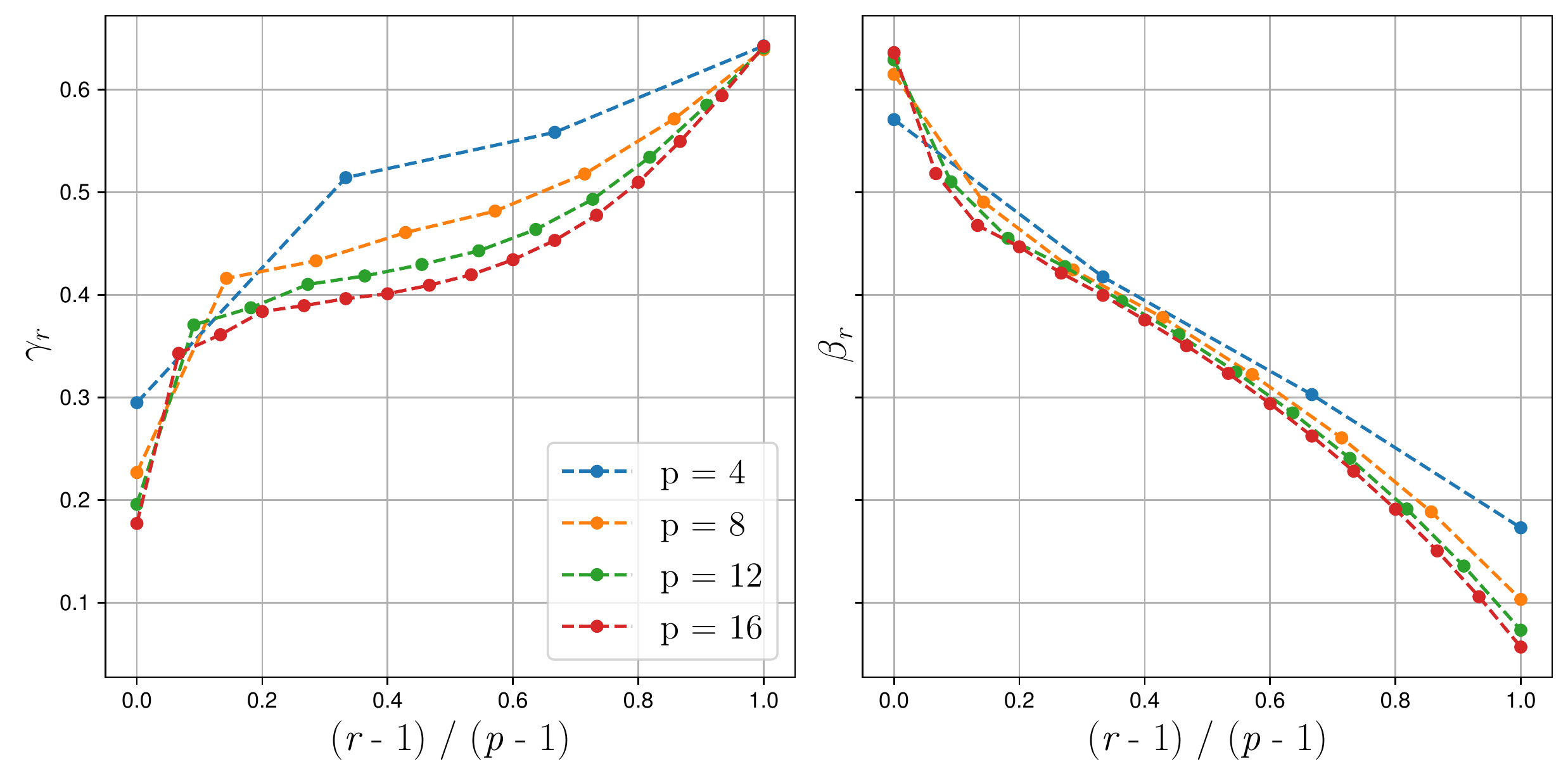}
    \caption{
    Optimal $\gamma_r$ and $\beta_r$ as a function of $(r - 1) / (p - 1) \in [0, 1]$ for $p = 5, 9, 13, 17$.
    For each $p$, the index $r = 1,2,\ldots,p$ enumerates the entries of $\gammav$ and $\betav$.
    Dashed lines in between data points are solely intended to guide the eye.
    }
    \label{fig:params}
\end{figure}
Note that, at $p=11$ and beyond, the QAOA achieves a cut fraction better than $\frac{1}{2} + \frac{2/\pi}{\sqrt{D}}$ in the large $D$ limit, making it the best currently known assumption-free algorithm for MaxCut on large random regular graphs.

We implement the iterative procedure described in \cref{sec:infinite_D_iteration} in C++.
Our code is available at Ref.~\cite{data}.
Bit strings are encoded as \texttt{unsigned long int} variables, which allow for fast bit-wise manipulations.
Matrices and vectors are implemented using the Eigen library~\cite{eigenweb}.
We parallelize the sum over $\av$ in Eq.~\eqref{eq:inf_D_iter_G_m} using OpenMP~\cite{dagum1998openmp}.
We optimize $\paramv$ for each value of $p$ using the LBFGS++ library, which implements the Limited-memory BFGS algorithm for unconstrained optimization problems~\cite{lbfgspp}.
Each evaluation of the gradient of $\nu_p(\paramv)$ is a subroutine of the optimization which takes $2p + 1$ function calls.
We run on a \texttt{n2d-highcpu-224} machine in Google Cloud, which has 224 vCPUs, using one thread per vCPU.
A function call at $p = 16$ takes about 133 seconds, and a function call at $p = 17$ takes about 595 seconds.
The run time of each function call is roughly multiplied by 4 every time $p$ is increased by 1.
At $p = 20$, a single function call takes slightly under 14 hours to evaluate.
Memory usage is dominated by the need to store matrix $G^{(m)}$, which is negligible and quadratic in $p$.
Further optimizations might be possible.

\begin{table}[ht]
\centering
\makegapedcells
\begin{tabular}{cccccccccc}
\hline
\multicolumn{1}{|c|}{$p$} & 1 & 2 & 3 & 4 & 5 & 6 & 7 & 8 & \multicolumn{1}{c|}{9} \\ \hline
\multicolumn{1}{|c|}{$\bar{\nu}_p$} & $0.3033$ & $0.4075$ & $0.4726$ & $0.5157$ & $0.5476$ & $0.5721$ & $0.5915$ & $0.6073$ & \multicolumn{1}{c|}{$0.6203$} \\ \hline 
\\
\cline{1-9}
\multicolumn{1}{|c|}{$p$} & 10 & 11 & 12 & 13 & 14 & 15 & 16 & \multicolumn{1}{c|}{17} \\ \cline{1-9}
\multicolumn{1}{|c|}{$\bar{\nu}_p$} &  $0.6314$ & $0.6408$ & $0.6490$ & $0.6561$ & $0.6623$ & $0.6679$ & $0.6729$ & \multicolumn{1}{c|}{$0.6773$} \\ \cline{1-9}
\end{tabular}
\caption{Optimal values of $\bar{\nu}_p$ up to $p=17$.}
\label{table:optimal_values}
\end{table}

\begin{table}[htb]
\centering
\makegapedcells
\begin{tabular}{cccc}
\hline
\multicolumn{1}{|c|}{$p$} & 18 & 19 & \multicolumn{1}{c|}{20} \\
\hline
\multicolumn{1}{|c|}{$\bar{\nu}_p$ lower bound} & $0.6813$ & $0.6848$ & \multicolumn{1}{c|}{$0.6879$} \\ \hline 
\end{tabular}
\caption{\label{table:lower_bound_values}
Lower bounds of $\bar{\nu}_p$ for $p=18, 19, 20$.}
\end{table}

\clearpage

\newcommand{\lo}[1]{\hat{#1}}

\section{Agreement with the Sherrington-Kirkpatrick model}\label{sec:SK_tree_agreement}

We note that Table~\ref{table:optimal_values} in this paper seems to be an extension of Table 1 in Ref.~\cite{farhi2019quantum}. There, the authors study the performance of the QAOA on the Sherrington-Kirkpatrick (SK) model~\cite{panchenko2013sherrington}, which describes a spin-glass system with all-to-all random couplings.
The cost function is
\begin{equation}
    C_J^{\rm SK}(\zv) = \frac{1}{\sqrt{n}} \sum_{1 \leq i < j \leq n} J_{ij} z_i z_j
\end{equation}
where the $J_{ij}$ are independently drawn from a distribution with mean $0$ and variance $1$. The authors arrive at an iterative formula for the ensemble-averaged performance of the QAOA on the SK model
\begin{equation} \label{eq:Vp-def}
    V_p(\gammav, \betav) := \lim_{n \rightarrow \infty} \expj \Big[ \tensor[_{J}]{\braket{\paramv | C_J^{\rm SK}/n | \paramv }}{_{J}} \Big],
\end{equation}
where $\ket{\paramv}_J$ is the QAOA state prepared with $C_J^{\rm SK}$.
Since concentration is shown to hold, we know that typical instances of the SK model all behave as the ensemble average.

Observe that $\bar\nu_p$, the optimized values of $\nu_p(\paramv)$, listed in \cref{table:optimal_values} of this paper agree with the values of $\bar V_p = \max_{\paramv} V_p(\paramv)$ in Table 1 of Ref.~\cite{farhi2019quantum}.
It turns out that this is true in a general sense:
\begin{thm} \label{thm:SKequiv}
For all $p$ and all parameters $(\paramv)$, we have
\begin{equation}
    V_p(\paramv) = \nu_p(\paramv).
\end{equation}
\end{thm}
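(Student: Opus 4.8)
The plan is to prove Theorem~\ref{thm:SKequiv} by showing that both $V_p(\paramv)$ and $\nu_p(\paramv)$ are computed by the same combinatorial object, namely a sum over $2p+1$ ``histories'' on a tree-like structure. The starting point is to recall from Ref.~\cite{farhi2019quantum} that the SK computation, after inserting $2p+1$ complete sets of states and taking the Gaussian expectation over the couplings $J_{ij}$, reduces to a sum over configurations $\av_i \in \{\pm 1\}^{2p+1}$ attached to each site $i \in [n]$. The Gaussian integral over each $J_{ij}$ produces a factor of the form $\exp\big[{-}\tfrac{1}{2n}(\Gammav \cdot (\av_i \av_j))^2\big]$ coupling every pair of sites, so that $V_p$ is a complete-graph partition-function-type quantity with a ``mean-field'' pairwise interaction. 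In the $n\to\infty$ limit this becomes tractable by a self-consistent / saddle-point argument, producing exactly an iteration on the overlap matrix $G^{(m)}_{j,k}$ that records the weighted correlations $\EV[a_j a_k]$ at each ``level'' of the computation.

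\medskip
The key steps, in order, are: (1) Write out the SK expectation $\tensor[_J]{\braket{\paramv|C_J^{\mathrm{SK}}/n|\paramv}}{_J}$ in the $Z$-basis, carry out the Gaussian average over all $J_{ij}$ explicitly, and observe that the resulting expression is a sum over per-site configurations $\{\av_i\}$ with weight $\prod_i f(\av_i)$ times $\prod_{i<j}\exp\big[{-}\tfrac{1}{2n}(\Gammav\cdot(\av_i\av_j))^2\big]$ and the observable insertion $\tfrac{1}{n}\sum_{i<j} a_{i,0} a_{j,0} \,(\text{derivative factor})$, matching the structure of Section~\ref{sec:iter_proofs}. (2) Note that in the tree derivation of Section~\ref{sec:iter_proofs_finite}, the same weight $\prod_v f(\zv_v)$ appears and the coupling $\exp[{-}\tfrac{i}{\sqrt D}\Gammav\cdot(\zv_u\zv_v)]$ becomes, after the $\bv\to-\bv$ symmetrization and $D\to\infty$ limit, precisely $\exp[{-}\tfrac{1}{2}(\text{something})(\Gammav\cdot(\av\bv))^2]$ — i.e., the \emph{same} effective pairwise interaction that the SK Gaussian integral produces. (3) Show that both limits are governed by the same fixed-point equation: on the tree side this is Eq.~\eqref{eq:H_tilde_m}/\eqref{eq:G_element_explicit} for $G^{(m)}$, and on the SK side the $n\to\infty$ mean-field analysis of the complete-graph sum yields an identical recursion because the ``effective single-site measure'' seen by one spin in the SK model is exactly the measure $f(\av)H^{(m)}(\av)$ that appears on the tree. (4) Finally, match the observables: the SK objective $V_p$ is expressed via the same $\tfrac{i}{2}\sum_j \Gamma_j (G^{(p)}_{0,j})^2$ as $\nu_p$ in Eq.~\eqref{eq:infinite_D_iter_result}.

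\medskip
The cleanest way to organize this is probably \emph{not} to re-derive the SK iteration from scratch, but to quote the iteration for $V_p$ as stated in Ref.~\cite{farhi2019quantum} and then exhibit an explicit term-by-term bijection between that iteration and the $D\to\infty$ iteration of Section~\ref{sec:infinite_D_iteration}: both are recursions on a $(2p+1)\times(2p+1)$ matrix with the same initial condition $G^{(0)}_{j,k}=\sum_\av f(\av)a_j a_k$, the same update rule involving $\exp[{-}\tfrac12\sum_{j',k'}G^{(m-1)}_{j',k'}\Gamma_{j'}\Gamma_{k'}a_{j'}a_{k'}]$, and the same read-out $\tfrac{i}{2}\sum_j\Gamma_j(G^{(p)}_{0,j})^2$. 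If the published SK formula is written in a superficially different but algebraically equivalent form (e.g.\ with different index conventions, or with the $\Gamma$'s absorbed differently), the work is to reconcile the two presentations.

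\medskip
The main obstacle I anticipate is step (2)--(3): justifying that the $D\to\infty$ limit of the \emph{tree} recursion and the $n\to\infty$ limit of the \emph{complete-graph} SK sum produce literally the same object, even though the underlying graphs are completely different (a sparse $D$-regular tree neighborhood versus the dense complete graph). The conceptual reason is that in both cases each edge contributes a weak, $O(1/\sqrt{D})$ or $O(1/\sqrt{n})$, coupling, so that by a central-limit / Gaussian-summation effect only the \emph{second moment} of $\Gammav\cdot(\av\bv)$ survives the limit, and the combinatorics of ``which configurations propagate up the tree'' coincides with ``which configurations a single SK spin sees on average.'' Making this precise requires care that the error terms ($O(1/D^2)$ on the tree side in Eq.~\eqref{eq:H_tilde_m}, and the analogous subleading Gaussian corrections on the SK side) vanish uniformly, and that the observable insertion (a single marked edge on the tree, versus $\tfrac1n\sum_{i<j}$ on the complete graph) contributes identically. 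Since Ref.~\cite{farhi2019quantum} already establishes the SK iteration rigorously (including concentration), the burden here is really just to recognize the two iterations as identical — so I expect the proof to be short, consisting mostly of careful bookkeeping to line up notation, with the one genuinely substantive remark being the observation in step (2) that the $\bv\to-\bv$ symmetrization plus Taylor expansion on the tree side reproduces exactly the Gaussian factor $\exp[{-}\tfrac12(\Gammav\cdot(\av\bv))^2]$ of the SK side.
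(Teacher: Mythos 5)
Your overall strategy (quote the SK iteration of Ref.~\cite{farhi2019quantum} and match it term by term to the $D\to\infty$ tree iteration) is the same one the paper follows, but the proposal mischaracterizes what that published iteration looks like, and in doing so it skips the steps that carry all the weight. The SK formula in Ref.~\cite{farhi2019quantum} is \emph{not} ``a recursion on a $(2p+1)\times(2p+1)$ matrix with the same initial condition, update rule, and read-out''; it is written on $2p$-bit strings $\av\in A$ in terms of $Q_\av$, $\Phi_\av$, $X_\av$, $\Delta_{\bv,\av}$, and the quantities $W_\av$ are obtained by a triangular, self-consistent substitution (ordering the strings so that $\Delta_{\bv,\bv'}=0$ when $j(\bv)\le j(\bv')$), not by running a map for $p$ steps. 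Reconciling the two therefore requires more than index bookkeeping: (i) a genuine change of variables from $(2p+1)$-bit strings to $2p$-bit strings via products of adjacent bits and the $*$ operation, together with the identities $f(\av)=\tfrac12 Q_{\lo\av}$ and $(\Gammav\cdot\av)^2=\Phi_{\lo\av}^2$; (ii) the symmetry facts about $H^{(m)}$ (independence of $a_0$, $H^{(m)}(-\av)=H^{(m)}(\av)$, $H^{(m)}=1$ on the symmetric strings, and invariance under the bar operation, i.e.\ Lemma~\ref{lem:H-bar-prop}); and (iii) a fixed-point argument showing that after $p$ steps the tree recursion has converged, $H^{(p)}(\av)=\exp[-\tfrac12\sum_\bv f(\bv)H^{(p)}(\bv)(\Gammav\cdot(\av\bv))^2]$, which rests on the structural property that $G^{(p-1)}_{j,k}=G^{(p)}_{j,k}$ except on the central row and column where $\Gamma_0=0$. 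Point (iii) is exactly what lets a $p$-step iteration (whose ``levels'' come from distances to the leaves of the tree) coincide with the SK-side scheme, which has no such levels; your step (3) asserts this coincidence (``the effective single-site measure seen by one SK spin is exactly $f(\av)H^{(m)}(\av)$'') without any mechanism for it, and that assertion is the theorem, not an observation.

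Your alternative branch --- re-deriving the SK side from scratch by Gaussian-averaging over $J_{ij}$ and performing an $n\to\infty$ mean-field analysis --- could in principle give a different proof, and your remark that both the $1/\sqrt{D}$ tree couplings and the $1/\sqrt{n}$ SK couplings survive only through the second moment $(\Gammav\cdot(\av\bv))^2$ is the right heuristic. But as written it is only a heuristic: nothing in the proposal explains why the dense complete-graph sum, after the saddle-point/concentration analysis, organizes itself into a recursion indexed by $m=1,\dots,p$ with initial condition $G^{(0)}$, rather than into the single self-consistency condition that Ref.~\cite{farhi2019quantum} actually obtains. Either you carry out that mean-field analysis and prove it reproduces the fixed point of Eq.~\eqref{eq:H_tilde_m}, or you do the algebraic reconciliation with the published $W_\av$ scheme including the fixed-point step; at present the proposal does neither, so the central gap is unfilled.
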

This theorem establishes the fact that for each $p$ and fixed parameters, the performance of the QAOA on large-girth $D$-regular graphs in the $D\to\infty$ limit is {\it equal} to its performance on the SK model in the $n\to\infty$ limit.
We remark that in the iteration in this paper there is only one tree subgraph, with of order $D^p$ vertices, for every large-girth $D$-regular graph.
On the other hand, in the SK case, there is an ensemble of instances given by different weights on the complete graph.
It is interesting to us that the ensemble average in Eq.~\eqref{eq:Vp-def} can be replaced by a single subgraph.

Theorem~\ref{thm:SKequiv} also implies that the iteration in \cref{sec:infinite_D_iteration} works for evaluating the performance of the QAOA applied to both large-girth regular graphs and the SK model.

In the rest of this section, we will prove this theorem by showing that the infinite-$D$ iteration in Section~\ref{sec:infinite_D_iteration} of this paper is equivalent to the iteration for the SK model in Section 4 of Ref.~\cite{farhi2019quantum}.  
Showing this equivalence is a bit
cumbersome since the two iterations are written with different conventions.  We have attempted to keep the current section self-contained so the reader can skip this proof and still read the rest of the paper.

\paragraph{Proof of \cref{thm:SKequiv}.}---
To help bridge the two approaches, let
\begin{align}
A = \{(a_1,\ldots a_p, a_{-p}, \ldots, a_{-1}) : a_i = \pm 1\}    
\end{align}
be the set of $2p$-bit strings that are used in the derivations of Ref.~\cite{farhi2019quantum}.
And let 
\begin{align}
B = \{(a_1,\ldots a_p, a_0, a_{-p}, \ldots, a_{-1}) : a_i = \pm 1\}    
\end{align}
be the set of $(2p+1)$-bit strings that are used in the derivations of the current paper.

We start with the version of the infinite-$D$ iteration of Eq.~\eqref{eq:H_tilde_m} which we restate here:
\begin{equation} \label{eq:Hm-again}
    H^{(m)}(\av) = \exp \Big[ -\frac{1}{2} \sum_{\bv\in B} f(\bv) H^{(m-1)}(\bv) \big(\Gammav \cdot (\av \bv) \big)^2 \Big] \quad \text{for }1\le  m \le p, 
\end{equation}
where $H^{(0)}(\av)=1$.
Note importantly that $H^{(m)}(\av)$ does not depend on $a_0$ for $0\le m \le p$ since $\Gamma_0=0$.
Hence, in what follows, we will slightly abuse notation to write the function $H^{(m)}(\av)$ to take either an argument $\av \in A$ or $\av \in B$.

We will now match this iteration to the one in Ref.~\cite{farhi2019quantum} for evaluating $V_p(\paramv)$ that uses symbols such as $Q_\av$, $\Phi_{\av}$, $\Delta_{\av,\bv}$, and $W_\av$, which we will define as we progress in this proof.

For any $(2p+1)$-bit string $\av \in B$, let us define the $2p$-bit string $\lo\av \in A$ via $\lo{a}_{\pm {r}} = a_{\pm r} a_{\pm(r+1)}$ for $1\le r\le p-1$, and $\lo{a}_{\pm p} = a_{\pm p} a_0$. More explicitly,
\begin{align}  \label{eq:lo-a-def}
\lo{a}_1 &= a_1 a_2,  &\ldots,&  
&\lo{a}_{p-1} &= a_{p-1} a_p,   &\lo{a}_p &= a_p a_0, \nonumber \\
\lo{a}_{-1} &= a_{-1} a_{-2}, &\ldots,&  
&\lo{a}_{-(p-1)} &= a_{-(p-1)} a_{-p},   &\lo{a}_{-p} &= a_{-p} a_0 .
\end{align}
Then using the fact that $\braket{a_1 |e^{i\beta X} |a_2} = \braket{a_1 a_2 | e^{i\beta X} |1}$, we can rewrite $f(\av)$ defined in \cref{eq:f_definition}  for any $\av\in B$ as
\begin{align} \label{eq:f-and-Q}
f(\av) &= \frac{1}{2} \braket{\lo{a}_1 | e^{i \beta_1 X} | 1} \cdots \braket{ \lo{a}_{p-1} | e^{i \beta_{p-1} X} | 1} \braket{\lo{a}_p | e^{i \beta_p B } |1} \nonumber \\
    &\quad \times \braket{1 | e^{-i \beta_p X} | \lo{a}_{-p}} \braket{1 | e^{-i \beta_{p-1} X} | \lo{a}_{-(p-1)}} \cdots \braket{1 | e^{-i \beta_1 X} | \lo{a}_{-1}} \nonumber \\
    &= \frac12 Q_{\lo\av}
\end{align}
where $Q_{\av}$ is defined in Section 4 of Ref.~\cite{farhi2019quantum}:
\begin{equation}
Q_{{\av}} := \prod_{j=1}^p
	(\cos\beta_j)^{1 + ({a}_j + {a}_{-j})/2}
	(\sin\beta_j)^{1 -({a}_j + {a}_{-j})/2}
	(i)^{({a}_{-j} - {a}_j) /2}
	 \qquad \text{for any }\av \in A.
\end{equation}
Now, Ref.~\cite{farhi2019quantum} makes use of the following $*$ operation defined on the set $A$ of $2p$-bit strings  as
\begin{equation}\label{eq:star_def}
a^*_r=a_r a_{r+1}\cdots a_p
\qquad \text{and} \qquad
a^*_{-r} = a_{-r} a_{-r-1}\cdots a_{-p}
\qquad
\text{for } 1\le r\le p \,.
\end{equation}
Please take care to note that in this proof, $*$ is used only for the above operation and not complex conjugation.
Furthermore, Ref.~\cite{farhi2019quantum} defines $\Phi_\av$ as
\begin{equation}
    \Phi_\av := \sum_{r=1}^p \gamma_r (a^*_r - a^*_{-r}) \qquad \text{for any }\av \in A.
\end{equation}
Note $\lo{a}^*_{\pm r} = \lo{a}_{\pm r} \cdots \lo{a}_{\pm p}  = a_{\pm r} a_0$.
Then for any $\av\in B$, $\Phi_{\lo\av}$ can be written as
\begin{equation} \label{eq:Phi-and-Gamma}
\Phi_{\lo\av} = \sum_{r=1}^p \gamma_r (\lo{a}^*_r - \lo{a}^*_{-r}) = \sum_{r=1}^p \gamma_r (a_r - a_{-r}) a_0 = (\Gammav \cdot \av) a_0 .
\end{equation}
Since $a_0^2=1$, we get $(\Gammav\cdot \av)^2 = \Phi_{\lo\av}^2$.
Also note $\widehat{\av\bv} = \lo\av\lo\bv$, so we can rewrite \cref{eq:Hm-again} as
\begin{equation} \label{eq:Hm-Q-with-half}
    H^{(m)}(\av) = \exp \Big[ {-} \frac{1}{2} \sum_{\bv\in B} \frac12 Q_{\lo\bv} H^{(m-1)}(\bv) \Phi_{\lo\av \lo\bv}^2 \Big].
\end{equation}
Note that
\begin{equation} \label{eq:b-hat-star}
    H^{(m)}(\lo\bv^*) = H^{(m)}(\bv b_0)
\qquad \text{ for any } \bv\in B,
\end{equation}
where we slightly abuse notation to allow $H^{(m)}$ to take two types of argument: $\lo\bv^*\in A$ and $\bv\in B$.
This equality follows from the fact that $H^{(m)}(\av)$ for $\av\in B$ does not depend on $a_0$, the 0-th component of $\av$.
Since $H^{(m)}(-\bv)=H^{(m)}(\bv)$ which we have shown in Eq.~\eqref{eq:Hm-negation-identity}, we have
\begin{equation} \label{eq:Hm-lo-star}
H^{(m)}(\lo\bv^*) =  H^{(m)}(\bv)
\qquad \text{ for any } \bv\in B.
\end{equation}
Hence, in Eq.~\eqref{eq:Hm-Q-with-half} we can sum over $\lo\bv\in A$ instead of $\bv \in B$, killing a 1/2 factor from the redundancy of the sum over $b_0$. Also we can replace $H^{(m)}(\av) = H^{(m)}(\lo\av^*)$ and write Eq.~\eqref{eq:Hm-Q-with-half} as
\begin{equation}
H^{(m)}(\lo\av^*) = \exp \Big[ {-} \frac12\sum_{\lo\bv\in A} Q_{\lo\bv} H^{(m-1)}(\lo\bv^*) \Phi_{\lo\av \lo\bv}^2 \Big] \qquad \text{ for any } \lo\av \in A.
\end{equation}
We can then drop the hats and rewrite this as
\begin{equation} \label{eq:Hm-Q-Phi}
H^{(m)}(\av^*) = \exp \Big[ {-} \frac12\sum_{\bv\in A} Q_{\bv} H^{(m-1)}(\bv^*) \Phi_{\av \bv}^2 \Big] 
\qquad \text{ for any } \av \in A.
\end{equation}

Now let us define for $0\le m\le  p$ and any ${}\av\in A$
\begin{equation}
R^{(m)}_{{}\av} := Q_{{}\av} H^{(m)}({}\av^*) .
\end{equation}
Then we have $R^{(0)}_{{}\av} = Q_{{}\av}$, and plugging Eq.~\eqref{eq:Hm-Q-Phi} into the above yields
\begin{equation} \label{eq:R-iter}
R^{(m)}_{{}\av} 
= Q_{{}\av} \exp\Big( {-} \frac12\sum_{{}\bv\in A} R_{{}\bv}^{(m-1)} \Phi_{{}\av {}\bv}^2 \Big).
\end{equation}

So far, we have transformed the iteration \eqref{eq:Hm-again} on $H^{(m)}(\av)$ for $\av\in B$ to the above iteration on $R_\av^{(m)}$ for $\av \in A$.
This looks very similar as the iterative formula that yields $W_\av$ in Section 4 of Ref.~\cite{farhi2019quantum}, which is then used to give $V_p(\paramv)$.
To show they are the same, i.e., $R_\av^{(p)}=W_\av$, we need to describe a bit more of the formalism in Ref.~\cite{farhi2019quantum}.
There, the authors define a subset $A_{p+1} \subset A$ where
\begin{equation}
    A_{p+1} = \{ \av : {a}_j = {a}_{-j}\}.
\end{equation}
Ref.~\cite{farhi2019quantum} has also defined a one-to-one ``bar'' operation that takes any ${}\av \not\in A_{p+1}$ to $\bar{{}\av}\not\in A_{p+1}$ such that $Q_{\bar{{}\av}} = - Q_{{}\av}$. This operation is its own inverse.
Furthermore, we have the following fact:
\begin{equation} \label{eq:H-bar-prop}
H^{(m)}({}\av^*) = 1 \quad \text{ if  } \quad {}\av \in A_{p+1} \qquad
\text{and} \qquad
H^{(m)}(\bar{{}\av}^*) = H^{(m)}({}\av^*)
\quad \text{ if } \quad {}\av \not\in A_{p+1}
\end{equation}
which we prove as Lemma~\ref{lem:H-bar-prop} in \cref{apx:equiv}.
Hence $R_{{}\av}^{(m)} = Q_{{}\av}$ if ${}\av \in A_{p+1}$ and $R_{\bar{{}\av}}^{(m)} = -R_{{}\av}^{(m)}$ if ${}\av\not\in A_{p+1}$.
Lastly, Ref.~\cite{farhi2019quantum} defines
for any $\av\in A$ and $\bv\in A\setminus A_{p+1}$
\begin{equation}
    X_\av := Q_\av \exp\Big(-\frac12 \sum_{\bv\in A_{p+1}} Q_\bv \Phi_{\av\bv}^2\Big) 
    \qquad \text{ and } \qquad
    \Delta_{\bv, \av} := \frac12(\Phi_{\av\bar\bv}^2 - \Phi_{\av\bv}^2).
\end{equation}
Putting everything together, we can write \eqref{eq:R-iter} as
\begin{align}
R^{(m)}_{{}\av} 
&=  Q_{{}\av} \exp \Big[ {-} \frac12\sum_{{}\bv\in A_{p+1}} Q_{{}\bv} \Phi_{{}\av {}\bv}^2 - \frac14 \sum_{{}\bv\not\in A_{p+1}} R_{{}\bv}^{(m-1)} (\Phi_{{}\av {}\bv}^2  - \Phi_{{}\av \bar{{}\bv}}^2)\Big] \nonumber \\
&=  Q_{{}\av} \exp \Big( {-} \frac12\sum_{{}\bv\in A_{p+1}} Q_{{}\bv} \Phi_{{}\av {}\bv}^2 + \frac12 \sum_{{}\bv\not\in A_{p+1}} R_{{}\bv}^{(m-1)} \Delta_{{}\bv, {}\av}\Big) \nonumber \\
&= X_{{}\av} \exp \Big(\frac12 \sum_{{}\bv\not\in A_{p+1}} R_{{}\bv}^{(m-1)} \Delta_{{}\bv, {}\av}\Big).
\label{eq:R-iter-again}
\end{align}

We now want to show that $R^{(p)}_\av$ is a fixed point of the iteration in Eq.~\eqref{eq:R-iter-again}, by showing that $H^{(p)}(\av)$ is a fixed point of the iteration in \eqref{eq:Hm-again}.
Combining Eqs.~\eqref{eq:H_tilde_m_grouped} and \eqref{eq:G_definition}, we can write
\begin{equation}
    H^{(p)}(\av) =
    \exp\Big[{-}
    \frac12\sum_{j,k=-p}^p G_{j,k}^{(p-1)} \Gamma_{j} \Gamma_{k} a_{j} a_{k}
    \Big] 
\end{equation}
As discussed in \cref{sec:infinite_D_iteration} (and proved in Appendix~\ref{apx:proof_placement_G}), after $p-1$ steps of the iteration, all matrix elements $G^{(p-1)}_{j,k}$ reach their final value except when either $j=0$ or $k=0$.
In other words, $G^{(p-1)}_{j,k} = G^{(p)}_{j,k}$ except when $j=0$ or $k=0$.
Noting that $\Gamma_0 = 0$, we have
\begin{equation}
    H^{(p)}(\av) = 
    \exp\Big[{-}
    \frac12\sum_{j,k=-p}^p G_{j,k}^{(p)} \Gamma_{j} \Gamma_{k} a_{j} a_{k}\Big] = \exp \Big[ {-}\frac{1}{2} \sum_\bv f(\bv) H^{(p)}(\bv) \big(\Gammav \cdot (\av \bv) \big)^2 \Big].
    \label{eq:H-p-fixedpoint2}
\end{equation}
where we plugged back in the definition \eqref{eq:G_definition} of $G^{(p)}_{j,k}$.
This means $H^{(p)}(\av)$ is a fixed point of the iteration in \cref{eq:H_tilde_m}, which also implies $R^{(p)}_\av$ is also a fixed point of the iteration \eqref{eq:R-iter-again}:
\begin{align} \label{eq:R-fixedpoint}
R^{(p)}_\av &= X_{{}\av} \exp \Big(\frac12 \sum_{{}\bv\not\in A_{p+1}} R_{\bv}^{(p)} \Delta_{{}\bv, {}\av}\Big).
\end{align}

We can simplify this further.
Note in general for $\bv \not\in A_{p+1}$, we have $R^{(p)}_{\bar\bv} \Delta_{\bar\bv,\av} = R^{(p)}_\bv \Delta_{\bv,\av}$.
Now, let us bipartition $A\setminus A_{p+1} = D\cup D^c$, such that if $\bv\in D$ then $\bar\bv \in D^c$.
Then we can rewrite the above as
\begin{equation}
R^{(p)}_\av = X_\av \exp\Big(\sum_{\bv\in D} R_{\bv}^{(p)} \Delta_{\bv, \av}\Big).
\end{equation}
Moreover, as stated in Section 4 of Ref.~\cite{farhi2019quantum}, for every element of $\bv\in D$, there is a method to assign a unique index $j(\bv) \in \{1,2,\ldots, |D|\}$ such that if $j(\bv) \le j(\bv')$ then $\Delta_{\bv,\bv'} = 0$.
Thus, we have $R_{|D|}^{(p)} = X_{|D|}$. The remaining $R^{(p)}_j$ for  $j=|D|-1,\ldots,2,1$ are then determined via the following relation
\begin{equation}
R^{(p)}_j = X_j \exp\bigg({ \sum_{k=j+1}^{|D|} R_{k}^{(p)} \Delta_{k, j} }\bigg).
\end{equation}
which gives an alternative method to evaluate \eqref{eq:R-iter} to get $R^{(p)}_\av$ for all $\av \in D$.
For the rest, we have $R^{(p)}_{\bar\av} = -R^{(p)}_\av$ for $\bar\av \in D^c$, and $R^{(p)}_\av = Q_\av$ for $\av \in A_{p+1}$.
This is precisely the iteration that yields $W_\av$ described in Section 4 of Ref.~\cite{farhi2019quantum}. Hence
\begin{equation}
    W_\av = R_\av^{(p)} = Q_{{}\av} H^{(p)}({}\av^*) 
    \qquad\text{for any }
    \av \in A.
\end{equation}

Finally, it remains to show $\nu_p(\paramv) = V_p(\paramv)$.
Note we can rewrite the version of $\nu_p(\paramv) = \lim_{D \rightarrow \infty} \nu_p(D,\paramv)$ in Eq.~\eqref{eq:inf_D_proof_with_H_tilde} as 
\begin{align}
\nu_p(\paramv) &= \frac{i}{2} \sum_{\av, \bv \in B} a_0 b_0 f(\av) f(\bv) H^{(p)}(\av) H^{(p)}(\bv) \Gammav \cdot (\av \bv)  \nonumber\\
&= \frac{i}{2} \sum_{\av, \bv \in B} \frac{1}{2}Q_{\lo\av} \frac12 Q_{\lo\bv} H^{(p)}(\lo\av^*) H^{(p)}(\lo\bv^*) \Phi_{\lo\av\lo\bv} 
\end{align}
where we used Eqs.~\eqref{eq:f-and-Q}, \eqref{eq:Phi-and-Gamma} and \eqref{eq:Hm-lo-star}.
Since the summand is independent of $a_0$ and $b_0$, we can sum over $\lo\av,\lo\bv\in A$ instead of $\av,\bv\in B$, killing both $1/2$ factors to get
\begin{align}
\nu_p(\paramv) =
\frac{i}{2} \sum_{\lo\av, \lo\bv \in A }Q_{\lo\av} Q_{\lo\bv} H^{(p)}(\lo\av^*) H^{(p)}(\lo\bv^*) \Phi_{\lo\av\lo\bv} &= \frac{i}{2} \sum_{\lo\av, \lo\bv \in A} \Phi_{\lo\av\lo\bv} W_{\lo\av} W_{\lo\bv} \nonumber \\
&= \frac{i}{2} \sum_{\uv, \vv \in A} \Phi_{\uv\vv} W_{\uv} W_{\vv} = V_p(\paramv)
\end{align}
where in the last line we replaced $\lo\av,\lo\bv$ with dummy variables $\uv,\vv$ since they are summed over. 
The last equality follows from the formula of $V_p(\paramv)$ detailed in Section 4 of Ref.~\cite{farhi2019quantum}.
This proves \cref{thm:SKequiv}.

So we have shown that the performance of the QAOA on any large-girth $D$-regular graph in the $D\to\infty$ limit is equivalent to its ensemble-averaged performance on the SK model in the infinite size limit.

\section{Conjecture that our iteration achieves the Parisi value}
\label{sec:conjecture}

The cut fraction output by the QAOA on MaxCut for large-girth $(D+1)$-regular graphs is
\begin{equation}
    \frac{\braket{\paramv | C_{\text{MC}} | \paramv}}{|E|} = \frac{1}{2} + \frac{\nu_p(D,\paramv)}{\sqrt{D}}.
\end{equation}
We have given an iteration for evaluating $\nu_p(D,\paramv)$ for any depth $p$ and parameters $\paramv$.
Furthermore, in \cref{sec:infinite_D_iteration} we give a compact iteration for 
$\nu_p(\paramv) = \lim_{D \rightarrow \infty} \nu_p(D,\paramv)$.
Using this iteration we can optimize over parameters to get
$\bar\nu_p = \max_{\paramv} \nu_p(\paramv)$.
Note $\bar\nu_p$ cannot be bigger than the Parisi value, $\Pi_*=\lim_{n\to\infty} \EV_J [\max_{\zv} C_J^{\rm SK}(\zv)/n]$.  From our numerics out to $p=17$ we see that $\bar\nu_p$ is headed in that direction.

Now we make the bold conjecture:
\begin{conj*}
Let $\Pi_* = 0.763166...$ be the Parisi value \cite{parisi1979toward, schmidt2008replica}. Then
\begin{equation}
    \lim_{p \rightarrow \infty} \bar\nu_p = \Pi_* .
\end{equation}
\end{conj*}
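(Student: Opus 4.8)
\emph{Reduction and the easy direction.} The plan is to reduce the conjecture to a large-depth lower bound on the SK performance, then to analyze a continuum limit of the iteration of \cref{sec:infinite_D_iteration}, and finally to match its optimum to the Parisi variational problem. By \cref{thm:SKequiv} we have $\bar\nu_p = \bar V_p := \max_{\paramv} V_p(\paramv)$, so it suffices to work on the SK model. First, $\bar\nu_p$ is non-decreasing in $p$: if $(\gammav^*,\betav^*)$ is optimal at depth $p$, then at depth $p{+}1$ the parameters $(\gammav^*,\gamma_{p+1};\betav^*,0)$ give the same value, since with $\beta_{p+1}=0$ the extra layer is just $e^{-i\gamma_{p+1}C}$, which commutes with $C$. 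Second, $\bar\nu_p\le\Pi_*$: for every instance $J$ and state $\ket{\psi}$ one has $\braket{\psi|C_J^{\rm SK}|\psi}=\sum_{\zv}|\braket{\zv|\psi}|^2 C_J^{\rm SK}(\zv)\le\max_\zv C_J^{\rm SK}(\zv)$, hence $\expj\braket{\paramv|C_J^{\rm SK}/n|\paramv}_J\le\expj[\max_\zv C_J^{\rm SK}(\zv)/n]$, and $n\to\infty$ gives $V_p(\paramv)\le\Pi_*$ for all $\paramv$. So $\bar\nu_\infty:=\lim_p\bar\nu_p$ exists with $\bar\nu_\infty\le\Pi_*$; the entire content of the conjecture is the reverse inequality $\bar\nu_\infty\ge\Pi_*$, and the numerics (e.g.\ $\bar\nu_{17}\approx 0.677$ versus $\Pi_*\approx 0.763$) only show the sequence trending that way.

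\emph{Continuum limit of the iteration.} Motivated by the smooth limiting angle profiles in \cref{fig:params}, I would posit the scaling Ansatz $\gamma_r=\gamma(r/p)$, $\beta_r=\beta(r/p)$ for continuous $\gamma,\beta:[0,1]\to\rs$ with $\gamma(0)=0$ and $\beta(1)=0$, and prove that the discrete iteration $G^{(m)}\mapsto G^{(m+1)}$ in \cref{eq:inf_D_iter_G_m}, after rescaling the index $m$ by $1/p$, converges to a fixed-point equation for a kernel $\mathcal G(s,t)$ on $[0,1]^2$, with $\nu_\infty[\gamma,\beta]:=\lim_p\nu_p(\gammav,\betav)$ given by the corresponding continuum functional. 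One would then need $\sup_{\gamma,\beta}\nu_\infty[\gamma,\beta]=\Pi_*$. The natural identification is with the real-time dynamical mean-field (``quantum Parisi'') equations for the transverse-field SK model: a long, optimally shaped anneal from the paramagnet toward the classical cost operator should drive the energy density to the ground-state value in the thermodynamic limit, and optimizing the schedule should collapse those dynamical equations onto Parisi's static variational formula. The two technical pillars here --- (i) existence and uniform control of the $p\to\infty$ limit of the iteration (nontrivial, since it has $O(p^2 4^p)$ complexity and matrices of growing size), and (ii) equality of the continuum QAOA optimum with the Parisi infimum --- are the substance of a proof.

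\emph{A constructive alternative, and the main obstacle.} In place of (ii) one could instead prove $\bar\nu_\infty\ge\Pi_*-\epsilon$ for each $\epsilon>0$ by exhibiting an explicit depth-$p(\epsilon)$ schedule beating $\Pi_*-\epsilon$: for example a discretization of the classical full-RSB / message-passing algorithms that provably approach $\Pi_*$ (cf.\ the local algorithm of~\cite{alaoui2021local}, which reaches the optimum under no overlap gap property~\cite{gamarnik2021overlap}), realized inside the QAOA family using the fact that small-angle layers generate Hamiltonian flow to leading order; or a carefully optimized quantum anneal. Combined with the easy direction this would give equality. The crux --- and the reason this remains a conjecture --- is precisely the step that the $2p$-parameter QAOA manifold is \emph{expressive enough}: naive linear annealing is expected to stall at the spin-glass transition, so the optimization over schedules is essential, yet there is currently no argument that QAOA can asymptotically implement such message-passing dynamics or otherwise reach $\Pi_*$. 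This is bound up with the deepest open questions about the power of QAOA and about the absence of an overlap-gap obstruction for SK; without a new duality between the continuum QAOA functional and the Parisi functional, or a constructive quantum simulation of a $\Pi_*$-achieving algorithm, the bound $\bar\nu_\infty\ge\Pi_*$ is out of current reach.
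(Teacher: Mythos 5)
What you are addressing here is stated in the paper as an open conjecture, not a theorem: the paper offers no proof, only numerical evidence out to $p=20$ and two suggested avenues (show the iteration of \cref{sec:infinite_D_iteration} is equivalent to a known procedure for computing $\Pi_*$, or recast it in the $p\to\infty$ limit in terms of continuous functions of $\paramv$). So there is no paper proof to compare against, and your proposal --- which is explicit that it is a plan rather than a proof --- should be judged on that footing. The parts of your argument that are actually provable are correct and match the paper's own remarks: monotonicity of $\bar\nu_p$ in $p$ (append a layer with $\beta_{p+1}=0$ so the extra unitary commutes with $C$), and the upper bound $\bar\nu_p\le\Pi_*$ via \cref{thm:SKequiv} together with $\braket{\psi|C_J^{\rm SK}|\psi}\le\max_\zv C_J^{\rm SK}(\zv)$ and the Parisi formula for $\lim_n \EV_J[\max_\zv C_J^{\rm SK}(\zv)/n]$. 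Your reduction of the conjecture to the single inequality $\lim_p\bar\nu_p\ge\Pi_*$ is exactly the right framing, and your two proposed routes (a continuum limit of the $G^{(m)}$ iteration matched to the Parisi variational principle, or a constructive schedule emulating a $\Pi_*$-achieving classical algorithm such as the message-passing scheme of Ref.~\cite{alaoui2021local}, which itself relies on a no-overlap-gap assumption) are in the same spirit as, and somewhat more concrete than, what the paper itself suggests in \cref{sec:conjecture}.

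The genuine gap is the one you yourself flag: neither the existence and control of the continuum limit of the iteration nor the identification of its optimum with the Parisi value is carried out, and the constructive alternative presupposes that the $2p$-parameter QAOA family can asymptotically reproduce message-passing or full-RSB dynamics, for which no argument is given (and none is known --- this is precisely why the statement remains a conjecture). Be careful, too, that an argument via simulating the algorithm of Ref.~\cite{alaoui2021local} would import its unproven no-overlap-gap hypothesis, so it could not yield the conjecture unconditionally. In short: your easy direction is sound, your reduction is the correct one, but the substantive inequality $\lim_p\bar\nu_p\ge\Pi_*$ is not established here, just as it is not established in the paper.
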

That is, the iteration in \cref{sec:infinite_D_iteration} is an alternative procedure to compute $\Pi_*$.
To prove this conjecture,
perhaps one can show that the iteration in this paper is equivalent to one of the known procedures for computing $\Pi_*$.
(It may be interesting to note that $\Pi_* = \lim_{k\to\infty} \mathscr{P}_k$, where $\mathscr{P}_k$ is the minimum of the Parisi variational principle over a $k$-step replica symmetry breaking ansatz with $2k+1$ parameters~\cite{panchenko2013sherrington,auffinger2020RSB}. This is not unlike $\bar\nu_p$.)
Or one can find a way to analytically evaluate the $p\to\infty$ limit.

There is an order of limits issue we now address. For any combinatorial optimization problem of fixed size, the QAOA can be shown to give the optimal solution in the $p \to \infty$ limit~\cite{farhi2014quantum}. This may require $p$ to grow exponentially in the system size. But we calculate the performance $\bar\nu_p$ of the QAOA at fixed $p$ in the $D \to \infty$ limit (which means infinite system size). Then we take $p \to \infty$. Our conjecture is about whether, under this new order of limits, the QAOA achieves the optimum as $p\to\infty$.


\clearpage

\section{Generalized iterations for Max-$q$-\XORSAT}
\label{sec:Max-q-XORSAT}
It turns out we can easily generalize our iterations for the QAOA's performance on MaxCut in \cref{sec:QAOA_tree_iterations} to the Max-$q$-{\XORSAT} problem. MaxCut is  a special case of Max-$2$-{\XORSAT}.
Given a $q$-uniform hypergraph $G=(V,E)$ where $E\subseteq V^q$, and given a signed weight $J_{i_1i_2\ldots i_q} \in \{\pm 1\}$ for each edge $(i_1,i_2,\ldots, i_q)\in E$,
Max-$q$-{\XORSAT} is the problem of maximizing the following cost function:
\begin{equation} \label{eq:CXOR}
    C^\XOR_J(\zv) =  \sum_{(i_1,\ldots, i_q)\in E} \frac12 (1 + J_{i_1i_2\ldots i_q} z_{i_1} z_{i_2} \cdots z_{i_q}).
\end{equation}
This cost function can be understood as counting the number of satisfied clauses, where a clause is satisfied if $z_{i_1} z_{i_2}\cdots z_{i_q}=J_{i_1i_2\ldots i_q}$ on the associated edge.
Note the MaxCut cost function in \cref{eq:MaxCut} is a special case of this problem where $q=2$ and all $J_{i_1i_2}=-1$.

We consider this problem on $(D+1)$-regular hypergraphs, where each vertex has degree $D+1$, i.e., it is part of exactly $D+1$ hyperedges. 
(As in \cref{sec:background}, working with $(D+1)$-regular hypergraphs means the subgraphs that the QAOA sees are $D$-ary hypertrees.)
The total number of hyperedges is $|E|=n(D+1)/q$, where $n=|V|$ is the number of vertices.
Due to a result by Sen~\cite{SenHypergraph}, we know that with high probability as $n\to\infty$, the maximum fraction of satisfied clauses for a random $(D+1)$-regular hypergraph for sufficiently large $D$ is 
\begin{equation}
\label{eq:general-parisi}
 \frac{1}{|E|} \max_\zv C^\XOR_J(\zv) =  \frac12 + \Pi_q \sqrt{\frac{q}{2D}} + o(1/\sqrt{D})
\end{equation}
where $\Pi_q$ is the generalized Parisi value that can be determined explicitly.%
\footnote{See Ref.~\cite{SenHypergraph} for how this value can be calculated. Take care to note that the conventions slightly differ, and our $\Pi_q = \mathsf{P}_q/\sqrt{2}$ where $\mathsf{P}_q$ is defined in Section 2.1 of Ref.~\cite{SenHypergraph}.}
In particular, $\Pi_2 = \Pi_* = 0.763166\ldots$.

We want to evaluate how the QAOA performs on the Max-$q$-{\XORSAT} problem for large-girth $(D+1)$-regular hypergraphs.
Here, girth is defined as the minimum length of Berge cycles in the hypergraph~\cite{Berge}.
Similar to the MaxCut problem discussed in \cref{sec:background}, we will see that the QAOA has optimal parameters $\gammav$ that are of order $1/\sqrt{D}$ for these graphs.
For this reason, it will be convenient to prepare the QAOA state $\ket{\paramv}_{J}$ with the following shifted and scaled cost function operator
\begin{equation}
    C_J = \frac{1}{\sqrt{D}} \sum_{(i_1,\ldots, i_q)\in E} J_{i_1i_2\ldots i_q} Z_{i_1} Z_{i_2} \cdots Z_{i_q}.
\end{equation}
For any such hypergraph, we are interested in the fraction of satisfied clauses output by the QAOA at any parameters, for any choices of $J_{i_1 i_2\ldots i_q}$ drawn from $\{+1, -1\}$.
We show the following:
\begin{thm}\label{thm:qXORSAT}
Consider $C^\XOR_J$ on any $(D+1)$-regular $q$-uniform hypergraphs with girth $> 2p+1$.
Let $\ket{\paramv}_{J} $ be the QAOA state generated using $C_{J}$.
Then for any choice of $J$,
\begin{equation} \label{eq:sat-frac}
\frac{1}{|E|} \tensor[_{J}]{\braket{\paramv | C^\XOR_J | \paramv }}{_{J}}  = \frac12 + \nu_{p}^\qsc(D, \paramv) \sqrt{\frac{q}{2D}}
\end{equation}
where $\nu_{p}^\qsc(D, \paramv)$ is independent of $J$ and can be evaluated (on a classical computer) with an iteration using $O(p4^{pq})$ time and $O(4^p)$ memory.
In the infinite $D$ limit, $\lim_{D\to\infty}\nu_{p}^\qsc(D, \paramv)$ can be evaluated with an iteration using $O(p^2 4^p)$ time and $O(p^2)$ memory.
\end{thm}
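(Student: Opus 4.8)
The plan is to lift the two-stage proof of \cref{sec:iter_proofs} from $D$-regular graphs to $(D{+}1)$-regular $q$-uniform hypergraphs, handling first the one genuinely new point — removing the signs $J$. \emph{Reduction to $J\equiv 1$.} Fix a hyperedge $e_0=(i_1,\dots,i_q)$. As in \cref{sec:background}, conjugating $Z_{i_1}\cdots Z_{i_q}$ through the $p$ QAOA layers gives an operator supported on the vertices within $p$ steps of $e_0$, which (using the girth hypothesis with girth measured by Berge cycles, as in \cref{sec:Max-q-XORSAT}) induce a $D$-ary Berge-hypertree $T_{e_0}$. Exploring $T_{e_0}$ outward from $e_0$, each new hyperedge meets the processed part in exactly one vertex and brings in $q-1$ fresh ones, so we may recursively pick signs $s_v\in\{\pm1\}$ on $V(T_{e_0})$ (using one fresh vertex of each hyperedge to satisfy its constraint) with $J_e\prod_{v\in e}s_v=1$ for all $e\in T_{e_0}$. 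The qubit-flip unitary $U=\prod_{v:\,s_v=-1}X_v$ then obeys $U^\dagger B U=B$, $U\ket{s}=\ket{s}$, and $U^\dagger C_J U=C_{J'}$ with $J'\equiv1$ on $T_{e_0}$, hence $U^\dagger\ket{\paramv}_J=\ket{\paramv}_{J'}$; combined with $U^\dagger Z_{i_1}\cdots Z_{i_q}U=J_{e_0}Z_{i_1}\cdots Z_{i_q}$ this gives $J_{e_0}\braket{\paramv|Z_{i_1}\cdots Z_{i_q}|\paramv}_J=\braket{\paramv|Z_{i_1}\cdots Z_{i_q}|\paramv}_{J'}$, which by locality depends only on $J'|_{T_{e_0}}$ and so equals the $J\equiv1$ value, independent of both $J$ and $e_0$. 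Writing $C^\XOR_J=\sum_e\tfrac12(I+J_e Z_{i_1(e)}\cdots Z_{i_q(e)})$ and averaging the $|E|=n(D{+}1)/q$ identical terms yields \eqref{eq:sat-frac} with $\nu_p^\qsc(D,\paramv)$ manifestly $J$-independent; henceforth set $J\equiv1$.

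\emph{The finite-$D$ iteration.} Now copy \cref{sec:iter_proofs_finite}: insert $2p+1$ complete $Z$-bases into $\braket{\paramv|Z_{i_1}\cdots Z_{i_q}|\paramv}$ on $T_{e_0}$, so that each hyperedge $(j_1,\dots,j_q)$ contributes $\exp[\tfrac{i}{\sqrt D}\Gammav\cdot(\zv_{j_1}\cdots\zv_{j_q})]$ (all vector products entrywise over the $2p+1$ copies) and each vertex contributes $f(\zv_v)$ with $f$ as in \eqref{eq:f_definition}. Contracting the hypertree from the leaves inward — summing over the $q-1$ far vertices of a leaf hyperedge to produce a function of its parent, raising to the $D$th power since each internal vertex sits below $D$ hyperedges, and using $f(-\av)=f(\av)$ with an induction (symmetrizing each $\bv^{(l)}\to-\bv^{(l)}$) to turn $\exp$ into $\cos$ — gives $H^{(0)}_D(\av)=1$ and, for $1\le m\le p$,
\begin{equation*}
H^{(m)}_D(\av)=\Bigl(\ \sum_{\bv^{(1)},\dots,\bv^{(q-1)}}\ \prod_{l=1}^{q-1} f(\bv^{(l)})\,H^{(m-1)}_D(\bv^{(l)})\ \cos\bigl[\tfrac{1}{\sqrt D}\,\Gammav\cdot(\av\,\bv^{(1)}\!\cdots\bv^{(q-1)})\bigr]\ \Bigr)^{D}.
\end{equation*}
The remaining contraction over the $q$ vertices of $e_0$ (one carrying the $Z$-string, contributing $\prod_{k=1}^{q}z^\pp{0}_{i_k}$), followed by a $\zv_{i_1}\to-\zv_{i_1}$ symmetrization, writes $\nu_p^\qsc(D,\paramv)$ as a single sum over $q$ vectors involving $\sqrt D\prod_k z^\pp{0}_{i_k} f(\zv_{i_k})H^{(p)}_D(\zv_{i_k})\sin[\tfrac{1}{\sqrt D}\Gammav\cdot(\zv_{i_1}\cdots\zv_{i_q})]$, up to a $q$-dependent constant. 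Each of the $p$ steps is a sum with $2^{(2p+1)q}$ terms for each of the $2^{2p+1}$ entries of $H^{(m)}_D$, so the iteration costs $O(p\,4^{pq})$ time and $O(4^p)$ memory.

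\emph{The $D\to\infty$ iteration.} Taylor-expand the cosines and invoke the hypergraph version of \cref{lem:fH-sum}, $\sum_\av f(\av)H^{(m)}_D(\av)=1$ (same induction; the power $q-1$ now appearing is harmless since $1^{q-1}=1$). Taking $D\to\infty$ gives
\begin{equation*}
H^{(m)}(\av)=\exp\Bigl[-\tfrac12\ \sum_{\bv^{(1)},\dots,\bv^{(q-1)}}\ \prod_{l=1}^{q-1} f(\bv^{(l)})\,H^{(m-1)}(\bv^{(l)})\ \bigl(\Gammav\cdot(\av\,\bv^{(1)}\!\cdots\bv^{(q-1)})\bigr)^{2}\Bigr].
\end{equation*}
Expanding the square, the sum over the $q-1$ children factorizes over $l$, so with $G^{(m)}_{j,k}$ defined as in \eqref{eq:G_definition} from these hypergraph $H^{(m)}$ one gets $G^{(m)}_{j,k}=\sum_\av f(\av)\,a_j a_k\exp[-\tfrac12\sum_{j',k'=-p}^{p}(G^{(m-1)}_{j',k'})^{q-1}\Gamma_{j'}\Gamma_{k'}a_{j'}a_{k'}]$, differing from \eqref{eq:inf_D_iter_G_m} only by the exponent $q-1$; the product rule of limits on the final contraction then gives $\nu_p^\qsc(\paramv)=\tfrac{i}{\sqrt{2q}}\sum_{j=-p}^{p}\Gamma_j\,(G^{(p)}_{0,j})^{q}$, which is \eqref{eq:infinite_D_iter_result} when $q=2$. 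Since $f$ is unchanged and the substitution $G^{(m-1)}_{j,k}\mapsto(G^{(m-1)}_{j,k})^{q-1}$ commutes with every symmetry and dependency used there, the structural and placement properties of \cref{apx:iter-properties} carry over: $G^{(p)}$ is still reached after $p$ steps, so the na\"ive $O(p^3 4^p)$ time drops to $O(p^2 4^p)$ with $O(p^2)$ memory.

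\emph{Main obstacle.} The only conceptually new step is the reduction to $J\equiv1$, which hinges on using the Berge notion of girth and on the fact that exploring a Berge-hypertree from a root hyperedge encounters each further hyperedge through exactly one already-seen vertex. Beyond that everything is a mechanical lift; the two spots needing care are re-deriving $\sum_\av f(\av)H^{(m)}(\av)=1$ in the hypergraph setting (it is what keeps the $D\to\infty$ limit finite) and verifying that expanding $(\Gammav\cdot(\av\,\bv^{(1)}\!\cdots\bv^{(q-1)}))^2$ collapses, through the factorization over the $q-1$ children, to the clean power $(G^{(m-1)}_{j,k})^{q-1}$ rather than a cross-coupled object.
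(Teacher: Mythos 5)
Your proposal is correct and follows essentially the same route as the paper: a spin-flip (gauge) argument on the Berge-hypertree to remove the couplings $J$ (the paper gauges to $J\equiv-1$ rather than $+1$, an immaterial convention), then the leaf-to-root contraction giving the finite-$D$ iteration with its $\cos/\sin$ symmetrization, and the Taylor-expansion $D\to\infty$ limit producing the $(G^{(m-1)}_{j,k})^{q-1}$ iteration and $\nu_p^{[q]}=\tfrac{i}{\sqrt{2q}}\sum_j\Gamma_j(G^{(p)}_{0,j})^q$, with the complexity claims justified exactly as in the paper (including the appeal to the placement argument of Appendix~\ref{apx:proof_placement_G}).
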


In \cref{sec:J-ind} that follows, we prove the $J$-independence of $\nu_p^\qsc$ and discuss its implication for a worst-case algorithmic threshold.
We then state and prove the iterations for any finite $D$ and the infinite $D$ limit in Sections~\ref{sec:finite_D_iteration_qxor} through \ref{sec:infinite_D_proof_qxor}.
We also present results of numerical evaluation of the infinite $D$ iteration in \cref{sec:qXOR-numerics}.

\begin{figure}[htb]
\centering
\includegraphics[width=\linewidth]{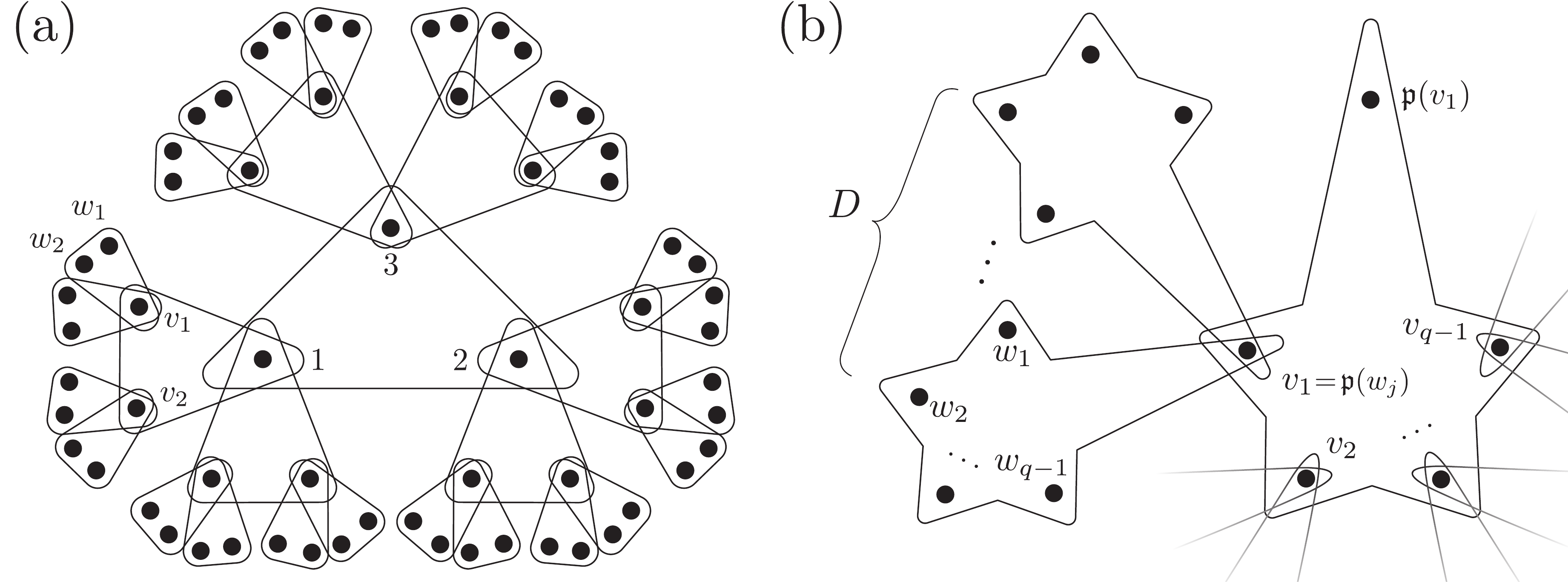}
\vspace{-10pt}
\caption{(a) The hypertree subgraph seen by the QAOA at $p=2$ for the hyperedge $(1,2,\ldots,q)$ on a $(D+1)$-regular $q$-uniform hypergraph with girth $>2p+1$, for $q=3$ and $D=2$.
(b) A partial view near the leaves of the hypertree subgraph for a general $q$ and $D$. The starfish are hyperedges.  Here $w_1,w_2,\ldots,w_{q-1}$ are leaf nodes in the same hyperedge, and we denote their common parent as $v_1=\parent(w_1)=\cdots=\parent(w_{q-1})$.}
\label{fig:qXOR}
\end{figure}

\subsection{$J$-independence of $\nu_p^\qsc$ and implied worst-case limitation}
\label{sec:J-ind}

We start by arguing that the left hand side of \cref{eq:sat-frac} is independent of the choice of $J$'s, so there is no $J$ needed on the right hand side.
When the girth of the hypergraph is larger than $2p+1$, the subgraph seen by the QAOA on any hyperedge is always a $D$-ary $q$-uniform hypertree.
See \cref{fig:qXOR}(a) for an example.
In this figure each triangle is associated with a coupling $J$ that can be either $+1$ or $-1$.
Look at the triangle containing vertices $1,2$ and $3$.
We can absorb the sign of $J_{123}$ into the bit at vertex 1 as follows:
if $J_{123} = -1$ do nothing, whereas if $J_{123} = +1$ flip the sign of the bit at vertex $1$ by redefining $Z_1 \to - Z_1$.
Then $J_{123}Z_1 Z_2 Z_3 \to -Z_1 Z_2 Z_3$ under this transformation.
Now look at the triangle containing bits 1, $v_1$ and $v_2$.
The sign of  $J_{1v_1v_2}$ may have been modified by the last step.
But we can now absorb the sign of $ J_{1v_1v_2}$ 
into the bit at $v_1$ so that $J_{1v_1v_2} Z_1 Z_{v_1} Z_{v_2} \rightarrow - Z_1 Z_{v_1} Z_{v_2}$.
This might affect the sign of  $J_{v_1 w_1 w_2}$ in the triangle containing $v_1$, $w_1$ and $w_2$.
But we can redefine the bit at $w_1$ appropriately so that $J_{v_1 w_1 w_2} Z_{v_1} Z_{w_1} Z_{w_2} \rightarrow -Z_{v_1} Z_{w_1} Z_{w_2}$.
Since there are no cycles in the hypertree, we can move through the whole picture in this way resetting all the couplings $J$ to $-1$.

We have reset all the couplings $J$ to $-1$ in the picture, and we now argue that this makes the quantum expectation \eqref{eq:sat-frac} independent of the $J$'s. At the quantum level we flip the sign of the operator $Z_u$  by conjugating with $X_u$, that is, $X_u Z_u X_u = - Z_u$.
Since the driver $B$ commutes with each $X_u$ and the initial state is an eigenstate of each $X_u$, we can sprinkle $X_u$'s into the left hand side of Eq.~\eqref{eq:sat-frac} and establish the $J$-independence of the expression coming from any particular hyperedge.
Now the cost function \eqref{eq:CXOR} is a sum over the hyperedges of a given hypergraph, but the expected value of each term in the QAOA state is independent of the $J$'s.
So for every $(D+1)$-regular $q$-uniform hypergraph with girth $> 2p+1$ we can write 
\begin{align}
\frac{1}{|E|} \tensor[_{J}]{\braket{\paramv | C^\XOR_J | \paramv }}{_{J}}   = \frac12 - \frac12
{\braket{\paramv |  Z_{1} Z_{2}\ldots  Z_{q} | \paramv }}  
\end{align}
where $(1,2,3,\ldots,q)$ is any hyperedge, and the state $\ket{\paramv}$ without the $J$ label has all the couplings set to $-1$.
In the sections that follow, we provide an iterative formula for
\begin{align} \label{eq:nu-p-q}
\nu_{p}^\qsc(D, \paramv) = -\sqrt{\frac{D}{2q}}{\braket{\paramv |  Z_{1} Z_{2}\ldots  Z_{q} | \paramv }} 
\end{align}
which gives the QAOA performance for Max-$q$-{\XORSAT}  at any parameters on any $(D+1)$-regular $q$-uniform hypergraph with girth $> 2p+1$, regardless of the choices of $J$. This generalizes Eq.~\eqref{eq:implicit_def_eta}.

A corollary to this $J$-independence is that the QAOA at low depth fails to find the optimal assignment in the worst case.
To see this, let us go back to the $q=2$ case where we studied MaxCut on a large-girth regular graph which has all of the couplings $J=-1$.
At optimal parameters, the fraction of satisfied clauses is $1/2 + \bar\nu_p/\sqrt{D}$ in the large $D$ limit, where $\bar{\nu}_p \le \Pi_*$.
Consider the corresponding instance where all the couplings on the same graph are set to $J=+1$, which makes the instance fully satisfiable. In that case, the best possible fraction of satisfied clauses is 1. However, the fraction output by the QAOA at optimal parameters is the same as in the $J=-1$ case, that is, at most $1/2 + \Pi_*/\sqrt{D}$, which is only a bit more than $1/2$ in the large $D$ limit.

Here we have an example of the QAOA failing to reach the optimum in the worst case because it does not ``see'' the whole graph. (Unlike previous results of the similar flavor in Refs.~\cite{BKKT, FGG20}, we do not need the graph to be bipartite to bound the worst-case approximation ratio.)
Regardless of the signs of the couplings,  
the low-depth QAOA sees a tree subgraph surrounding each edge.
On the tree subgraph the signs of the couplings are irrelevant so the QAOA does not distinguish between instances where the cost function favors disagreement and instances where agreement is favored.
Without seeing cycles the QAOA cannot do better than what it can achieve in the most frustrated case, and this yields an upper bound on the worst-case approximation ratio.

\subsection{An iteration for any finite $D$}\label{sec:finite_D_iteration_qxor}

Here we give an iteration to evaluate $\nu_{p}^\qsc(D, \paramv)$ for any input parameters, $q$ and $D$.
We prove that this iteration is correct in \cref{sec:proof_finite_D_qxor}.
We use the same convention as in \cref{sec:QAOA_tree_iterations}, where we defined a $(2p+1)$-component vector $\Gammav$ and a function $f(\av)$ that takes any $(2p+1)$-bit string $\av$ as input.
Similar to the iteration in \cref{sec:finite_D_iteration}, we start with $H_D^{(0)}(\av) = 1$, and for  $1 \leq m \leq p$, let
\begin{align} \label{eq:H_m_def_general}
    H_D^{(m)}(\av) &= \bigg(\sum_{\bv^1, \ldots, \bv^{q-1}} 
     \cos{\Big[ {\textstyle \frac{1}{\sqrt{D}}}\Gammav \cdot (\av \bv^1 \bv^2 \cdots \bv^{q-1}) \Big]}
     \prod_{i=1}^{q-1} \big[f(\bv^i) H_D^{(m-1)}(\bv^i)\big]
      \bigg)^D.
\end{align}
By iteratively evaluating \cref{eq:H_m_def_general} for $m=1,2,\ldots,p$, we arrive at $H_D^{(p)}(\av)$ which is used to compute
\begin{align} \label{eq:nu-p-q-D-formula}
    \nu_p^\qsc(D,\paramv) = i \sqrt{\frac{D}{2q}} \sum_{\av^1,\ldots, \av^q} 
    \sin{\Big[ {\textstyle \frac{1}{\sqrt{D}}} \Gammav \cdot (\av^1\av^2\cdots \av^q) \Big]}
    \prod_{i=1}^q \big[ a^i_0 f(\av^i) H_D^{(p)}(\av^i)\big].
\end{align}

We note that this iteration on $H_D^{(m)}$ has $p$ steps, each involving a sum with $2^{(2p+1)(q-1)}$ terms for each of the $2^{2p+1}$ entries of $H_D^{(m)}(\av)$.
The final step has a sum with $O(4^{pq})$ terms. Overall, this iteration has a time complexity of $O(p\, 4^{pq})$ and a memory complexity of $O(4^p)$ for storing the entries of $H^{(m)}_D$.

\subsection{An iteration for $D \rightarrow \infty$}\label{sec:infinite_D_iteration_qxor}
In the infinite $D$ limit, we get a more compact iteration which we prove is correct in \cref{sec:infinite_D_proof_qxor}. Similar to \cref{sec:infinite_D_iteration}, we define matrices $G^{(m)} \in \mathbb{C}^{(2p+1) \times (2p+1)}$, 
for $0 \leq m \leq p$ as follows. For $j,k \in \{1,\dots,p, 0,-p,\dots,-1\}$, let $G_{j,k}^{(0)} = \sum_{\av} f(\av) a_j a_k$
and
\begin{equation} \label{eq:inf_D_iter_G_m_general}
    G_{j,k}^{(m)} = \sum_{\av} f(\av) a_j a_k \exp
    \Big[{-}\frac{1}{2} \sum_{j',k'=-p}^p \big(G_{j',k'}^{(m-1)}\big)^{q-1} \Gamma_{j'} \Gamma_{k'} a_{j'} a_{k'} 
    \Big]
    \quad
    \text{for } 1 \leq m \leq p.
\end{equation}
Starting at $m=0$ and going up by $p$ steps we arrive at $G^{(p)}$ which is used to compute
\begin{equation} \label{eq:infinite_D_iter_result_general}
     \nu_p^\qsc(\paramv) := \lim_{D \rightarrow \infty} \nu_p^\qsc(D,\paramv) 
     = \frac{i}{\sqrt{2q}} \sum_{j = -p}^p \Gamma_j ( G^{(p)}_{0, j} )^q.
\end{equation}

Note the only difference between Max-$q$-{\XORSAT} 
and MaxCut, where $q=2$, can be seen by comparing Eqs.~\eqref{eq:inf_D_iter_G_m} and \eqref{eq:infinite_D_iter_result} in \cref{sec:infinite_D_iteration} to Eqs.~\eqref{eq:inf_D_iter_G_m_general} and \eqref{eq:infinite_D_iter_result_general} in the current iteration, where we are raising the matrix element of $G$ to some $q$-dependent power.
This iteration takes at most $O(p^2 4^p)$ time and $O(p^2)$ memory to evaluate using the same method as described in \cref{apx:proof_placement_G}, regardless of $q$.
This is polynomially faster than the finite $D$ case with exponentially better memory usage.

\subsection{Proof of the finite $D$ iteration}\label{sec:proof_finite_D_qxor}
We now prove that the iteration in \cref{sec:finite_D_iteration_qxor} for finite degree $D$ is correct.
The proof is essentially the same as in \cref{sec:iter_proofs_finite}, and we will only focus on the differences in what follows.
The goal is to evaluate the expectation \eqref{eq:nu-p-q} on a single hyperedge $(1,2,\ldots,q)$ by restricting to the hypertree subgraph seen by the QAOA.
As an example, we show such a subgraph in \cref{fig:qXOR}(a).
After inserting complete sets and reorganizing terms as we have done from Eq.~\eqref{eq:ZLZR-p=2} to Eq.~\eqref{eq:ZLZR-deriv-2}, we arrive at
\begin{align}
\label{eq:EJ-Z1-q}
&
\braket{\paramv | Z_{1} Z_{2} \cdots Z_q | \paramv } =
\sum_{\{\zv_u \}}
z^\pp{0}_1 z^\pp{0}_2 \cdots z^\pp{0}_q
 \exp\Big[ {\textstyle -\frac{i}{\sqrt{D}}}  \sum_{(i_1,\ldots,i_q)\in E} \Gammav  \cdot (\zv_{i_1} \zv_{i_2} \cdots \zv_{i_q}) \Big]
 \prod_{v=1}^n f(\zv_v)
\end{align}
which is analogous to Eq.~\eqref{eq:ZLZR-deriv-2}.
To evaluate this, we again need to sum over all the bit configurations $\zv_u=(z_u^\pp{1},\ldots,z_u^\pp{p},z_u^\pp{0},z_u^\pp{-p},\ldots,z_u^\pp{-1})$ of every node $u$ in the hypertree subgraph.
We will do this with the same method as in \cref{sec:iter_proofs_finite}, where we first sum over all the leaf nodes, then their parents, and their parents' parents, and so on.

We start by summing over a set of leaf nodes $w_1, w_2,\ldots, w_{q-1}$ which are in the same hyperedge as their parent. Let their common parent be $\parent(w_1)$.
See \cref{fig:qXOR}(b) for a visualization.
Summing over the bit configurations $\zv_{w_1}, \zv_{w_2},\ldots, \zv_{w_{q-1}}$ yields
\begin{align}
&\sum_{\zv_{w_1},\ldots, \zv_{w_{q-1}}} 
\exp\Big[ {\textstyle -\frac{i}{\sqrt{D}}}  \Gammav \cdot (\zv_{\parent(w_1)}\zv_{w_1}\zv_{w_2} \cdots \zv_{w_{q-1}} )
 	\Big]
	\prod_{j=1}^{q-1}  f(\zv_{w_j}) \,.
\end{align}
Note this is a function of the parent node's bit configuration $z_{\parent(w_1)}$.
Since $\parent(w_1)$ is involved in exactly $D$ branching hyperedges, each of which contains $q-1$ distinct children, we get the following contribution
\begin{align}
H_D^{(1)}(\zv_{\parent(w_1)}) :=
\bigg(\sum_{\zv_{w_1},\ldots, \zv_{w_{q-1}} }  
 \exp\Big[ {\textstyle -\frac{i}{\sqrt{D}}} \Gammav \cdot (\zv_{\parent(w_1)}\zv_{w_1}\zv_{w_2} \cdots \zv_{w_{q-1}} )
 				\Big]  
	\prod_{j=1}^{q-1}  f(\zv_{w_j}) 
	\bigg)^D
\end{align}
after summing over all the leaf nodes. This applies to every parent node of any of the leaves.
And similar to what we have done to get \cref{eq:HDm-final} in \cref{sec:iter_proofs_finite}, we can use the fact that $f(-\zv)=f(\zv)$ to take $\zv_{w_1}\to -\zv_{w_1}$ in the above summand and combine it with its original form to get
\begin{align} \label{eq:HD1-contribution}
H_D^{(1)}(\zv_{\parent(w_1)}) =
\bigg(\sum_{\zv_{w_1},\ldots, \zv_{w_{q-1}} }  
 \cos\Big[ {\textstyle \frac{1}{\sqrt{D}}} \Gammav \cdot (\zv_{\parent(w_1)}\zv_{w_1}\zv_{w_2} \cdots \zv_{w_{q-1}} )
 				\Big]  
	\prod_{j=1}^{q-1}  f(\zv_{w_j}) 
	\bigg)^D .
\end{align}

Next, we repeat this argument for all the parent nodes like $\parent(w_1)$.
Let $v_1 = \parent(w_1)$, and let $v_2, v_3, \ldots, v_{q-1}$ be the other nodes in the same hyperedge as $v_1$ [see \cref{fig:qXOR}(b)].
We also denote $\parent(v_1)$ as their shared parent node.
Including the contribution $H_D^{(1)}(\zv_{v_j})$ coming from the above sum over the leaves, we sum over all $\zv_{v_j}$'s to get the following function on $\zv_{\parent(v_1)}$
\begin{align}
H_D^{(2)}(\zv_{\parent(v_1)}) &:=
\bigg(\sum_{\zv_{v_1},\ldots, \zv_{v_{q-1}} }  
 \exp\Big[ {\textstyle -\frac{i}{\sqrt{D}}} \Gammav \cdot (\zv_{\parent(v_1)}\zv_{v_1}\zv_{v_2} \cdots \zv_{v_{q-1}} )
 	 \Big]
\prod_{j=1}^{q-1}  \big[f(\zv_{v_j}) H_D^{(1)}(\zv_{v_j})\big]
\bigg)^D \nonumber \\
&=\bigg(\sum_{\zv_{v_1},\ldots, \zv_{v_{q-1}} }  
 \cos\Big[ {\textstyle \frac{1}{\sqrt{D}}} \Gammav \cdot (\zv_{\parent(v_1)}\zv_{v_1}\zv_{v_2} \cdots \zv_{v_{q-1}} )
 	 \Big]
\prod_{j=1}^{q-1}  \big[f(\zv_{v_j}) H_D^{(1)}(\zv_{v_j})\big]
\bigg)^D
\end{align}
where the last equality follows from the fact that $f(-\zv)=f(\zv)$ and $H_D^{(1)}(-\zv) = H_D^{(1)}(\zv)$.

It is easy to see that continuing in this fashion we can iteratively sum over all but the root nodes in $p$ steps, corresponding to the $p$ levels in the hypertree subgraph.
At each step $m$, we obtain $H_D^{(m)}(\zv)$ which is a function of the bit configuration $\zv$ of any node that is $m$ levels away from the leaves.
For consistency of notation, we let $H_D^{(0)}(\zv)=1$.
At the end of the iteration, we reach the top-level root nodes inside the central hyperedge $(1,2,\ldots, q)$, and we can evaluate Eq.~\eqref{eq:EJ-Z1-q} as
\begin{align}
\braket{\paramv| Z_1 Z_2\cdots Z_q|\paramv }
&=\sum_{\zv_1,\ldots,\zv_q}
z^\pp{0}_1 z^\pp{0}_2 \cdots z^\pp{0}_q
 \exp\Big[ {\textstyle -\frac{i}{\sqrt{D}}}  \Gammav  \cdot (\zv_1 \zv_2 \cdots \zv_q) \Big]
 \prod_{j=1}^q \big[f(\zv_j) H_D^{(p)}(\zv_j)\big] \nonumber \\
&
=-i\sum_{\zv_1,\ldots,\zv_q}
 \sin\Big[ {\textstyle \frac{1}{\sqrt{D}}}  \Gammav  \cdot (\zv_1 \zv_2 \cdots \zv_q) \Big]
 \prod_{j=1}^q \big[z^\pp{0}_j f(\zv_j) H_D^{(p)}(\zv_j)\big].
\end{align}
Then plugging this back into Eq.~\eqref{eq:nu-p-q} gives the iterative formula for $\nu_p^\qsc(D,\paramv)$ in Eq.~\eqref{eq:nu-p-q-D-formula}.

\subsection{Proof of the $D \rightarrow \infty$ iteration}
\label{sec:infinite_D_proof_qxor}

To get the infinite $D$ iteration in \cref{sec:infinite_D_iteration_qxor}, we follow the same argument as in \cref{sec:iter_proofs_infinite}.
This is done by first defining the following limiting functions for $0\le m \le p$
\begin{equation}
H^{(m)}(\av) := \lim_{D\to\infty} H_D^{(m)}(\av).
\end{equation}
These functions can be shown to satisfy the following recursion relation
for $1\le m \le p$
\begin{align}
H^{(m)}(\av) &= \lim_{D\to\infty}\bigg(\sum_{\bv^1, \ldots, \bv^{q-1}} 
     \cos{\Big[ {\textstyle \frac{1}{\sqrt{D}}}\Gammav \cdot (\av \bv^1 \bv^2 \cdots \bv^{q-1}) \Big]}
     \prod_{i=1}^{q-1} \big[f(\bv^i) H_D^{(m-1)}(\bv^i)\big]
      \bigg)^D \nonumber \\
	&=
	\exp\bigg[
	{-}\frac12\sum_{\bv^1, \ldots, \bv^{q-1}} 
	\big(\Gammav \cdot (\av \bv^1 \bv^2 \cdots \bv^{q-1})\big)^2
	\prod_{i=1}^{q-1} \big[f(\bv^i) H^{(m-1)}(\bv^i)\big]
	\bigg]
\end{align}
where the second line is obtained by performing Taylor expansion of the cosine, and using the fact that $\sum_\bv f(\bv) H_D^{(m-1)}(\bv) =1$ (which is analogous to Lemma~\ref{lem:fH-sum} in Appendix~\ref{apx:H-properties} for general $q$).
Expanding the dot product, we get
\begin{equation}
H^{(m)}(\av) =
	\exp\bigg[
	{-}\frac12
	\sum_{j,k=-p}^p \Gamma_j \Gamma_k a_j a_k 
	\prod_{i=1}^{q-1} \Big(\sum_{\bv^i} f(\bv^i) H^{(m-1)}(\bv^i) b^i_j b^i_k\Big)
	\bigg].
	\label{eq:Hm-iteration-general}
\end{equation}
By defining
\begin{equation}
G_{j,k}^{(m)} := \sum_\av f(\av) H^{(m)}(\av) a_j a_k,
\end{equation}
we can recast the iteration \eqref{eq:Hm-iteration-general} as
\begin{equation}
G_{j,k}^{(m)} = \sum_\av f(\av) a_j a_k
\exp\bigg[
	{-}\frac12
	\sum_{j',k'=-p}^p \big(G_{j',k'}^{(m-1)}\big)^{q-1}
	\Gamma_{j'} \Gamma_{k'} a_{j'} a_{k'}
	\bigg]
\end{equation}
for $1\le m \le p$. For $m=0$, we have
\begin{equation}
G_{j,k}^{(0)} = \sum_\av f(\av) a_j a_k,
\end{equation}
since $H^{(0)}(\av) = 1$.
Finally, we can write $\nu_p^\qsc$ in Eq.~\eqref{eq:nu-p-q-D-formula} in the $D\to\infty$ limit as
\begin{align}
   \lim_{D\to\infty} \nu_p^\qsc(D,\paramv) &= \lim_{D\to\infty} i \sqrt{\frac{D}{2q}} \sum_{\av^1,\ldots, \av^q} 
    \sin{\Big[ {\textstyle \frac{1}{\sqrt{D}}} \Gammav \cdot (\av^1\av^2\cdots \av^q) \Big]}
    \prod_{i=1}^q \big[ a^i_0 f(\av^i) H_D^{(p)}(\av^i)\big] \nonumber \\
    &= \frac{i}{\sqrt{2q}} \sum_{j=-p}^p \Gamma_j 
        \prod_{i=1}^q \Big(
		        \sum_{\av^i} a^i_0 a^i_j f(\av^i) H^{(p)}(\av^i)
        	\Big)
    \nonumber \\
    &= \frac{i}{\sqrt{2q}} \sum_{j=-p}^p \Gamma_j \big(G_{0,j}^{(p)}\big)^{q}
\end{align}
which gives the infinite $D$ iterative formula in Eq.~\eqref{eq:infinite_D_iter_result_general} as desired.

\begin{figure}[ht]
    \centering
    \includegraphics[width=\linewidth]{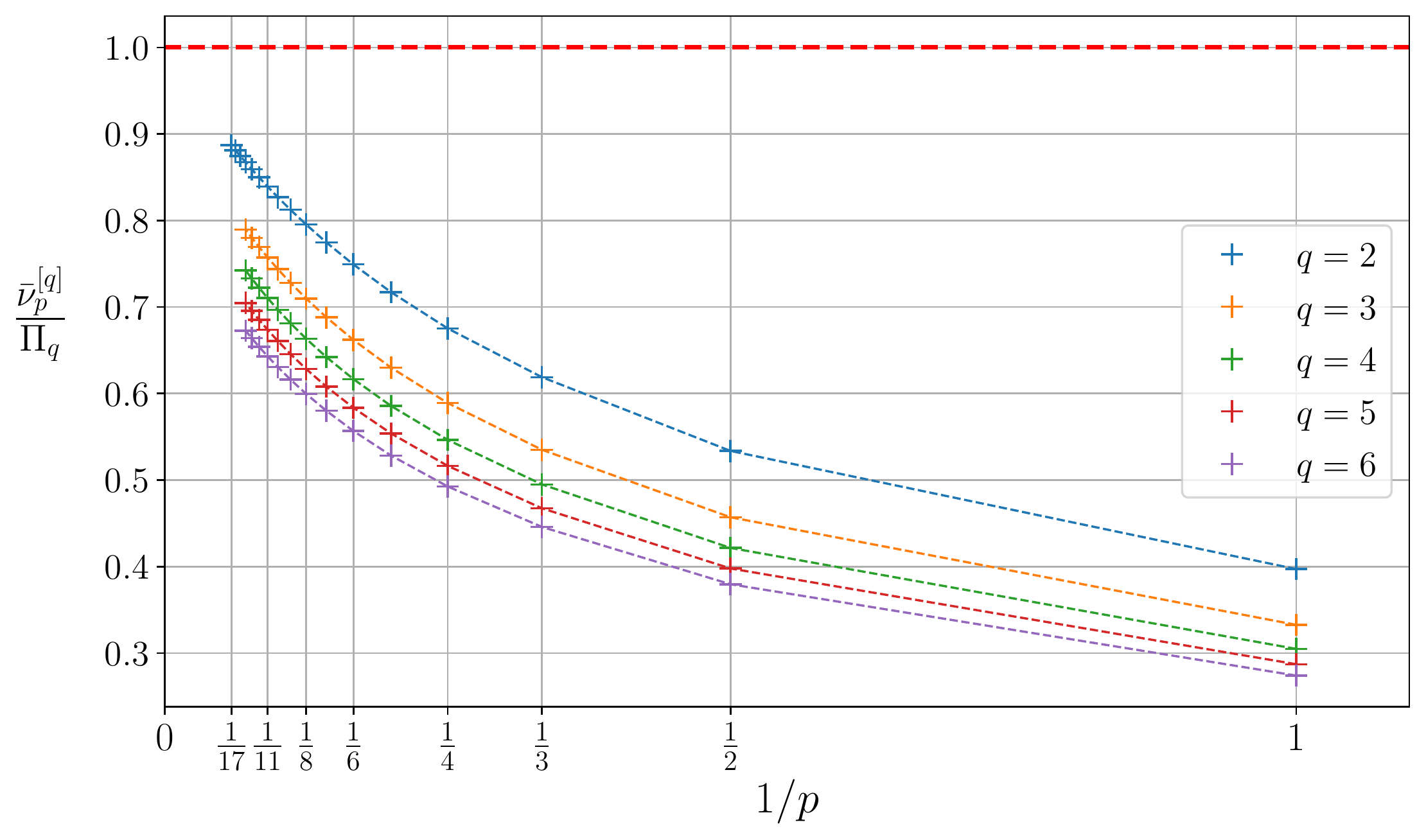}
    \vspace{-10pt}
    \caption{
    Optimal values $\bar{\nu}_p^\qsc$ normalized by their corresponding Parisi values $\Pi_q$ as a function of $1 / p$ for $q = 2, 3, 4, 5, 6$.
    The Parisi values are taken from Ref.~\cite{marwaha2021bounds}.
    Similar to Fig.~\ref{fig:nu_bar}, we made the somewhat arbitrary choice of plotting the data against $1 / p$ to see the large $p$ region in a compact plot.
    Dashed lines in between data points are intended to guide the eye.
    }
    \label{fig:nu_bar_q}
\end{figure}

\vspace{5pt}
\subsection{Numerical evaluation of the Max-$q$-{\XORSAT} performance at infinite $D$}
\label{sec:qXOR-numerics}

We have taken our iteration for infinite $D$ and numerically optimized $\nu_p^\qsc(\paramv)$ to find
\begin{align}
\bar{\nu}_p^\qsc = \max_{\paramv} \nu_p^\qsc(\paramv).
\end{align} 
up to $p=14$ for $3\le q \le 6$.
Combining with the data we have for $q=2$ in \cref{table:optimal_values}, we plot the
results in \cref{fig:nu_bar_q}.
For ease of comparison across different values of $q$ we have normalized $\bar{\nu}_p^\qsc$ by its corresponding Parisi value $\Pi_q$.
See Appendix~\ref{apx:optimal_angles_q} for a plot of the optimal $\gammav$ and $\betav$ we found at $p=14$.
Numerical values for $\bar{\nu}_p^\qsc$ and optimal $\gammav$ and $\betav$ for all $1 \le p \le 14$ can be found in Ref.~\cite{data}.

In some cases, there are thresholds on how well the QAOA at low depths can do.
It is known that for problems that exhibit the overlap gap property, the locality property of the QAOA prevents it from getting close to the optimum at low depths where it does not see the whole graph~\cite{FGG20typical, chou2021limitations}.
Specifically, using an overlap gap property in the Max-$q$-XORSAT problem on random Erd\H{o}s-R\'enyi hypergraphs with constant average degree and even $q\ge 4$, Ref.~\cite{chou2021limitations} showed that the QAOA (or any local algorithm) has limited performance when the depth $p$ is less than $\epsilon \log (n) $, where $n$ is the graph size and $\epsilon$ is a constant that depends on the degree and $q$.
Assuming the overlap gap property also holds when the hypergraphs are regular, one can use similar arguments to show that the QAOA's performance as measured by $\bar{\nu}_p^\qsc/\Pi_q$ does not converge to $1$ as $p\to\infty$ when $q \ge 4$ and is even.
This is because our large-girth assumption implies the graph has at least $D^p$ vertices, so $p$ is always less than $\epsilon \log n$ in this limit.

\section{Discussion}\label{sec:discussion}
In this paper, we have introduced new techniques for evaluating the performance of a quantum algorithm at high qubit number and at high depth.
In particular we do this by finding a compact iteration for the QAOA's performance on MaxCut on instances with locally tree-like neighborhoods.
On random large-girth $D$-regular graphs, the QAOA at $p=11$ and higher has the highest approximation ratio of any assumption-free algorithm.
We have given performance guarantees for the QAOA, but it is necessary to run a quantum computer to produce a string with the calculated performance.

We have also shown that for any depth $p$ and for any parameters,  $\gammav$ and $\betav$, the performance of the QAOA on large-girth $D$-regular graphs, as $D\to\infty$, matches the typical performance of the QAOA on the Sherrington-Kirkpatrick model at infinite size.
We find it remarkable that the ensemble averaging done in the SK model can be replaced by analyzing a single tree subgraph.
For both of these models the best conceivable performance is upperbounded by the Parisi constant, $\Pi_*$.
There are optimal parameters at each $p$, and we speculate that as $p\to\infty$ these optimal parameters give QAOA performance that matches the Parisi constant for both models.

Moreover, in \cref{sec:Max-q-XORSAT}, we have generalized our iteration for MaxCut on large-girth regular graphs to evaluate the QAOA's performance on Max-$q$-{\XORSAT} problems for large-girth regular hypergraphs. 
We have shown that, at fixed parameters, the QAOA gives the same value of the objective function regardless of the signs of the couplings on these hypergraphs.
This implies a worst-case algorithmic threshold at low depth for fully satisfiable instances.
Building on our work, Ref.~\cite{BGMZ22} recently generalized the equivalence between MaxCut and the SK model to between Max-$q$-XORSAT and the fully connected $q$-spin model.

There are a number of ideas to explore coming out of this work. Can we find a more efficient iterative formula for the QAOA's performance than the one in \cref{sec:infinite_D_iteration}?
If so, we can better probe the large-$p$ behavior of the QAOA.
Can the iteration in \cref{sec:infinite_D_iteration} be recast in the $p\to\infty$ limit in terms of continuous functions corresponding to $\paramv$?
This might be a way to verify, or falsify, the conjecture in \cref{sec:conjecture}.

Can one find other problems at high qubit number and high depth where the performance of the QAOA can be established using techniques similar to the ones introduced in this paper?

\section*{Acknowledgements}

The authors thank Sam Gutmann for being there and Matthew P.~Harrigan for a careful read of the manuscript. This material is based upon work supported by the National Science Foundation Graduate Research Fellowship Program under Grant No. DGE-1746045. Any opinions, findings, and conclusions or recommendations expressed in this material are those of the author(s) and do not necessarily reflect the views of the National Science Foundation.

\bibliographystyle{ieeetr_kunal}
\bibliography{refs}

\clearpage
\appendix

\section{Properties of the iterations}
\label{apx:iter-properties}

In this appendix, we prove some properties of the elements of the iterations. These properties are of interest in their own right and also necessary to fill in some gaps in the derivations.

We start in \cref{apx:f_identities} by proving some identities that will be used in the later proofs.
The assertion in Eq.~\eqref{eq:sum_f_H_in_main_text}  is  proved in \cref{apx:H_properties}.
We  prove in \cref{apx:proof_properties_G_m} the symmetry properties of the $G^{(m)}$ matrix elements asserted in \cref{sec:infinite_D_iteration}.
We show in \cref{apx:proof_placement_G} how each matrix element of $G^{(m)}$ only depends on a submatrix of $G^{(m-1)}$.
This is key to understanding the structure of the iteration and can be used to speed up the iteration by a factor of $p$.

\subsection{Properties of $f(\av)$}
\label{apx:f_identities}

We start by establishing some notation and proving a few properties of $f(\av)$ that will be used subsequently.
Let
\begin{equation}
B= \{ (a_1, a_2,\ldots, a_p, a_0, a_{-p}, \ldots, a_{-2}, a_{-1}) : a_j = \pm 1 \}
\end{equation}
be the set of $(2p+1)$-bit strings.
We define the following subset
\begin{equation}
B_0 = \{\av \in B : a_{-r} = a_r \text{ for all } 1\le r \le p\}.
\end{equation}

For any $\av \in B$, we define the $T(\av)$ to be the largest positive index $T$ such that $a_{T} \neq a_{-T}$ if $\av\not \in B_0$, and 0 if $\av \in B_0$.
More formally,
\begin{equation}
\label{eq:T-def}
T(\av) = 
\begin{cases}
\max \{r : a_{-r} \neq  a_r \} & \text{if } \av \not\in B_0 \\
0 & \text{if } \av \in B_0.
\end{cases}
\end{equation}

To see how $B$ is partitioned according to different $T(\av)$, see Table~\ref{table:T_prime_examples}.
Note that $a_{-r} = a_r$ whenever $r> T(\av)$.
These levels will help organize the $\av \in B$ in the proofs below. Moreover, for any $\av \not\in B_0$, let $\av' \not\in B_0$ be the following bit string
\begin{equation}\label{eq:prime_def}
a_{\pm r}' = \begin{cases}
a_0 & \text{ if } r = 0 \\
-a_{\pm r} &\text{ if } 1\le r \le T(\av) \\
a_{\pm r} &\text{ if } T(\av)+1  \le  r \le p.
\end{cases}
\end{equation}
As we will see, this definition is helpful as $\av$ and $\av'$ are going to pair up leading to cancellations.

\begin{table}[H]
\centering
\begin{tabular}{wl{0.5cm}wl{0.5cm}wl{0.5cm}wl{0.5cm}wr{0.5cm}wr{0.7cm}wr{0.7cm}|c|wr{0.85cm}wr{0.7cm}wr{0.7cm}wr{0.5cm}wr{0.5cm}wr{0.7cm}wr{0.7cm}}
\multicolumn{7}{c|}{$\av$} & $T(\av)$ &
\multicolumn{7}{c}{$\av'$} \\
\hline
$( a_{1 }$, & $a_{2 }$, & $a_{3 }$, & $ a_{0 }$, & $-a_{3 }$, & $a_{-2}$, & $a_{-1})$ &
$3$ 
& $(-a_{1 }$, & $-a_{2 }$, & $-a_{3 }$, & $ a_{0 }$, 
& $ a_{3 }$, & $-a_{-2}$, & $-a_{-1})$\\
$( a_{1 }$, & $a_{2 }$, & $a_{3 }$, & $a_{0 }$, 
& $a_{3 }$, & $-a_{2}$, & $a_{-1})$ & 
$2$ 
& $(-a_{1 }$, & $-a_{2 }$, & $ a_{3 }$, & $a_{0 }$, 
& $ a_{3 }$, & $a_{2 }$, & $-a_{-1})$\\ 
$( a_{1 }$, & $a_{2 }$, & $ a_{3 }$, & $a_{0 }$, 
& $a_{3 }$, & $ a_{2 }$, & $-a_{1 })$ &
$1$ 
& $(-a_{1 }$, & $a_{2 }$, & $a_{3 }$, & $a_{0 }$, 
& $a_{3 }$, & $a_{2}$, & $a_{1})$ \\ 
$( a_{1 }$, & $a_{2 }$, & $a_{3 }$, & $a_{0 }$, & $a_{3 }$, & $a_{2 }$, & $a_{1 })$ &
$0$ 
& $(\s a_{1 }$, & $a_{2 }$, & $a_{3 }$, & $a_{0 }$, & $a_{3 }$, & $   a_{2}$, & $a_{1})$
\end{tabular}
\caption{$\av$, $T(\av)$ and $\av'$ for $p=3$.}
\label{table:T_prime_examples}
\end{table}

We also define, for any $\av\in B$, a corresponding $\conj{\av} \in B$ whose entries are
\begin{equation}
    \conj{a}_j = a_{-j} \qquad\text{for } -p \le j \le p.
\end{equation}
That is, $\conj\av$ is the bit string $\av$ reversed.

Now we are ready to prove the identities on $f(\av)$. Recall its definition from Eq.~\eqref{eq:f_definition}:
\begin{align}\label{eq:f_def_appendix}
    f(\av) &= \frac{1}{2} \braket{a_1 | e^{i \beta_1 X} | a_2} \cdots \braket{ a_{p-1} | e^{i \beta_{p-1} X} | a_p} \braket{a_p | e^{i \beta_p X} | a_0} \nonumber \\
    &\quad \times \braket{a_{0} | e^{-i \beta_p X} | a_{-p}} \braket{a_{-p} | e^{-i \beta_{p-1} X} | a_{-(p-1)}} \cdots \braket{a_{-2} | e^{-i \beta_1 X} | a_{-1}},
\end{align}
which can be alternatively written as
\begin{align}\label{eq:eq:f_alternative}
     f(\av) &= \frac{1}{2} \braket{a_1 a_2| e^{i \beta_1 X} | 1} \cdots \braket{ a_{p-1} a_p | e^{i \beta_{p-1} X} | 1} \braket{a_p a_0 | e^{i \beta_p X} | 1} \nonumber \\
    &\quad \times \braket{a_{0} a_{-p}| e^{-i \beta_p X} | 1} \braket{a_{-p}a_{-(p-1)} | e^{-i \beta_{p-1} X} | 1} \cdots \braket{a_{-2} a_{-1}| e^{-i \beta_1 X} | 1}
\end{align}
where
\begin{equation}
\braket{a|e^{i\beta X} |1} = 
\begin{cases}
\cos(\beta) &\text{ if } a = +1 \\
i\sin(\beta) &\text{ if } a = -1 \,.
\end{cases} 
\end{equation}

\begin{lemma} \label{id:f-prime}
For any $\av \in B\setminus B_0$,
\begin{equation}
f(\av') = - f(\av).
\end{equation}
\end{lemma}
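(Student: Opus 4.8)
The plan is to track how the factor $f(\av)$ decomposes as a product over the $2p$ transition amplitudes $\braket{a_r a_{r+1} \mid e^{i\beta_j X} \mid 1}$ (and their conjugate-side analogues), and show that flipping the sign of exactly the entries $a_{\pm 1},\ldots,a_{\pm T}$ with $T = T(\av)$ flips an odd number of these amplitudes from $\cos$ to $i\sin$ or vice versa, producing an overall sign $-1$ while leaving the magnitudes untouched.

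First I would use the alternative form of $f$ in Eq.~\eqref{eq:eq:f_alternative}, writing
\begin{equation}
f(\av) = \frac{1}{2}\Big(\prod_{r=1}^{p-1}\braket{a_r a_{r+1}\mid e^{i\beta_r X}\mid 1}\Big)\braket{a_p a_0\mid e^{i\beta_p X}\mid 1}\braket{a_0 a_{-p}\mid e^{-i\beta_p X}\mid 1}\Big(\prod_{r=1}^{p-1}\braket{a_{-(r+1)}a_{-r}\mid e^{-i\beta_r X}\mid 1}\Big),
\end{equation}
so that each amplitude depends only on the \emph{product} of a consecutive pair of entries. On the positive side, the relevant products are $a_1a_2, a_2a_3,\ldots,a_{p-1}a_p, a_pa_0$; on the negative side they are $a_0a_{-p}, a_{-p}a_{-(p-1)},\ldots,a_{-2}a_{-1}$. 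Under the primed operation of Eq.~\eqref{eq:prime_def}, entries with index magnitude $\le T$ get negated and those with magnitude $>T$ (including $a_0$) stay fixed. A product $a_i a_j$ of two consecutive indices changes sign precisely when exactly one of $i,j$ has magnitude $\le T$, i.e. at the single ``boundary'' pair straddling level $T$. Concretely, on the positive side only the pair $(a_T, a_{T+1})$ flips sign (using the convention that index $0$ sits just outside level $T$ when $T=p$, i.e. $a_p a_0$ flips if $T=p$); on the negative side only the pair $(a_{-(T+1)}, a_{-T})$ flips sign. So exactly two of the $2p$ amplitudes have their argument-sign toggled, and all other amplitudes are literally unchanged.

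The key step is then to check that toggling the sign of the ``input'' to one amplitude $\braket{a\mid e^{i\beta X}\mid 1}$ multiplies it by... not quite $-1$: since $\braket{+1\mid\cdots}=\cos\beta$ and $\braket{-1\mid\cdots}=i\sin\beta$, flipping $a$ does not simply negate the amplitude. This is where I have to be careful, and it is the main obstacle: I cannot argue purely by ``each toggle gives a sign.'' Instead I would pair the two toggled amplitudes on the positive side with their mirror images on the negative side. Note $\braket{a_p a_0\mid e^{i\beta_p X}\mid 1}$ and $\braket{a_0 a_{-p}\mid e^{-i\beta_p X}\mid 1}$ involve $\cos$ vs.\ $\cos$ or $i\sin$ vs.\ $-i\sin$ respectively; more generally the positive-side amplitude at level $r$ uses $e^{+i\beta_r X}$ and the negative-side one uses $e^{-i\beta_r X}$, so their product over a disagreeing pair contributes $(i\sin\beta_r)(-i\sin\beta_r) = \sin^2\beta_r \ge 0$, whereas over an agreeing pair it contributes $\cos^2\beta_r$. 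The cleanest route is therefore: observe that $f(\av)$ factors (up to the global $1/2$) into $T$-dependent ``matched pairs'' whose product is real, times unmatched factors; then show $f(\av')/f(\av)$ equals the ratio coming only from the two boundary pairs, and that this ratio is $(i\sin\beta_{r^\ast})/(i\sin\beta_{r^\ast}) \cdot (\text{sign})$ arranged to give exactly $-1$. I expect the bookkeeping to come down to: at the boundary the positive-side pair switches from agreeing to disagreeing (or vice versa) and the negative-side pair switches the same way, converting a $\cos^2\beta_{r^\ast}$ into a $\sin^2\beta_{r^\ast}$ on \emph{both} sides while the cross terms $i\sin\cdot\cos$ vs.\ $-i\sin\cdot\cos$ produce the lone $-1$ — but I would want to verify the small cases $p=1,2$ by hand and then induct, since that is the reliable way to pin the sign. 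An alternative, possibly slicker, approach: note that $\av \mapsto \av'$ is an involution on $B\setminus B_0$ and that $f(\av')+f(\av)$, summed against anything even in $\av$, must vanish for the cancellations the paper advertises to work; so one could instead prove the weaker-looking statement that $f$ restricted to each level-$T$ pair is odd under $'$ by a direct two-line evaluation using $\braket{a\mid e^{i\beta X}\mid 1}=\tfrac12(1+a)\cos\beta+\tfrac12(1-a)\,i\sin\beta$ and multiplying out. I would present the direct computation as the proof, reserving the inductive/involution framing as the sanity check.
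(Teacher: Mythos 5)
Your setup matches the paper's proof: you use the product form of $f$ in Eq.~\eqref{eq:eq:f_alternative}, organize strings by $T(\av)$, and correctly observe that the prime operation changes only the two ``boundary'' adjacent-pair products, $a_T a_{T+1}$ and $a_{-(T+1)}a_{-T}$ for $T=T(\av)<p$ (respectively $a_p a_0$ and $a_0 a_{-p}$ for $T=p$). You also rightly flag the subtlety that flipping the argument of a single amplitude $\braket{a|e^{i\beta X}|1}$ is not simply a negation. But you never supply the one fact that settles the sign, and your sketch of the final step contradicts it: because $a_T \neq a_{-T}$ while $a_{T+1}=a_{-(T+1)}$ (and at $T=p$ the two boundary products share the bit $a_0$), the two boundary products always have \emph{opposite} signs, since $(a_T a_{T+1})(a_{-(T+1)}a_{-T}) = a_T a_{-T} = -1$. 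Hence before the prime operation the two boundary amplitudes are one $\cos\beta_T$ and one $\pm i\sin\beta_T$ (the sign fixed by which side carries $e^{-i\beta_T X}$), their product is $\pm i\sin\beta_T\cos\beta_T$, and the prime operation swaps the two roles, negating this product while leaving every other factor of $f$ unchanged; this gives $f(\av')=-f(\av)$ directly. This opposite-sign observation is exactly what the paper's terse assertion ``flipping the signs of these two products flips the sign of the product of the two relevant matrix elements'' relies on.

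The scenario you describe --- ``converting a $\cos^2\beta$ into a $\sin^2\beta$ on both sides'' --- is precisely the case in which the claimed sign flip would fail, and the content of the step is that this case cannot occur at the boundary level $T(\av)$; your proposal neither rules it out nor resolves it, and deferring to ``verify $p=1,2$ by hand and then induct'' is not a proof (nor is induction needed once the opposite-sign fact is in hand). Your fallback of a direct expansion via $\braket{a|e^{i\beta X}|1}=\tfrac12(1+a)\cos\beta+\tfrac12(1-a)\,i\sin\beta$ would indeed work, but it is not carried out, so as written the decisive step of the argument is missing.
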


\begin{proof}

Note that the set $B$ can be subdivided into sets of strings $\{\av\}$ with fixed values of $T(\av)$. We will start with the set of $\av$'s where $T(\av)=p$.
Note from Eq.~\eqref{eq:eq:f_alternative} that $f(\av)$ only depends on the product of pairs of adjacent bits.
Under the prime operation $\av\to\av'$,
these products will not change except for $a_p a_0$ and $a_0 a_{-p}$. (As an example look at $T=3$ in Table~\ref{table:T_prime_examples}.)
Flipping the signs of these two products flips the sign of the product of the two relevant matrix elements in Eq.~\eqref{eq:eq:f_alternative}.
Therefore, in this case, $f(\av') = - f(\av)$.

Now consider the set of $\av$'s with $T(\av)=p-1$. Now the product of any pair of adjacent bits will not change under the prime operation except for $a_{p-1} a_p$ and $a_{-p} a_{-(p-1)}$. (As an example look at $T=2$ in Table~\ref{table:T_prime_examples}.) Flipping the signs of these two products flips the sign of the product of the two relevant matrix elements in Eq.~\eqref{eq:eq:f_alternative}.
Therefore in this case again we have that $f(\av') = - f(\av)$. This argument can be repeated for 
$T(\av)=p-2, p-3, ..., 1$
 so the Lemma is established.
\end{proof}

\begin{lemma}\label{id:f-conj}
For any $\av \in B$,
\begin{equation}\label{eq:f_conj_equation}
    f(\conj\av) = f(\av)^*.
\end{equation}
\end{lemma}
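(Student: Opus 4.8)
\quad The plan is to read the identity off directly from the alternative product formula \cref{eq:eq:f_alternative}, in which every factor is a matrix element $\braket{c | e^{\pm i\beta_r X} | 1}$ with $c\in\{\pm1\}$ a product of two adjacent entries of $\av$. Two elementary observations drive the proof. First, such a matrix element depends only on the value of the single bit $c$, hence is insensitive to the order in which the two adjacent entries are multiplied: $\braket{c_1 c_2 | e^{i\beta X} | 1} = \braket{c_2 c_1 | e^{i\beta X} | 1}$, and likewise with $-\beta$. Second, complex conjugation of these matrix elements amounts to $\beta\mapsto-\beta$, i.e.\ $\braket{c | e^{i\beta X} | 1}^* = \braket{c | e^{-i\beta X} | 1}$; this is immediate from the case analysis, since $\cos\beta$ is real and $(i\sin\beta)^* = i\sin(-\beta)$.

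Granting these, I would substitute $\conj{a}_j = a_{-j}$ into \cref{eq:eq:f_alternative} and track the adjacent-pair products. They get permuted as $\conj{a}_r\conj{a}_{r+1} = a_{-r}a_{-(r+1)}$ and $\conj{a}_{-(r+1)}\conj{a}_{-r} = a_r a_{r+1}$ for $1\le r\le p-1$, together with $\conj{a}_p\conj{a}_0 = a_{-p}a_0$ and $\conj{a}_0\conj{a}_{-p} = a_0 a_p$; meanwhile the operator labels $e^{i\beta_r X}$ stay in their original slots, so after the substitution every factor that used to sit on the ``positive side'' of $\av$ (carrying $e^{i\beta_r X}$) now carries that same operator but acts on a ``negative-side'' product of $\av$, and vice versa. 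Using the first observation to freely reorder the bits inside each product, $f(\conj\av)$ becomes exactly the product obtained from $f(\av)$ by the global swap $e^{i\beta_r X}\leftrightarrow e^{-i\beta_r X}$ in every factor (including the central pair $\braket{a_p a_0 | e^{i\beta_p X} | 1}\,\braket{a_0 a_{-p} | e^{-i\beta_p X} | 1}$). By the second observation this is precisely $f(\av)^*$, which is the claim. Equivalently, one can argue from the original form \cref{eq:f_def_appendix} by reversing the order of its $2p$ bra--ket factors, using that each $\braket{a | e^{i\beta X} | b}$ is symmetric in $a,b$ and conjugates to $\braket{a | e^{-i\beta X} | b}$.

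I do not expect a genuine obstacle here: the only delicate point is the index bookkeeping when aligning the factors of $f(\conj\av)$ with those of $f(\av)^*$ term by term --- in particular making sure the two middle factors involving $a_0$ swap correctly under the reversal. In the write-up I would display the two products side by side, grouped into the positive-side block, the central pair, and the negative-side block, so that the matching is visually manifest.
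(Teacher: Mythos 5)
Your proposal is correct and is essentially the paper's own argument: both pair each $e^{i\beta_r X}$ factor with its $e^{-i\beta_r X}$ partner, substitute $\conj{a}_j = a_{-j}$, and use that the single-qubit matrix elements depend symmetrically on their two bit arguments (equivalently, only on the bit product) and conjugate by flipping $\beta_r \to -\beta_r$. Working from \cref{eq:eq:f_alternative} rather than \cref{eq:f_def_appendix} is only a cosmetic difference, and your closing remark describes exactly the paper's version of the computation.
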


\begin{proof}
This follows from Eq.~\eqref{eq:f_def_appendix} and noting that
\begin{align*}
\braket{a_1 | e^{i \beta_1 X} | a_2} \braket{a_{-2} | e^{-i \beta_1 X} | a_{-1}} &= \braket{\conj{a}_{-1} | e^{i \beta_1 X} | \conj{a}_{-2}} \braket{\conj{a}_{2} | e^{-i \beta_1 X} | \conj{a}_{1}} \nonumber \\
&= \braket{\conj{a}_{1} | e^{i \beta_1 X} | \conj{a}_{2}}^* \braket{\conj{a}_{-2} | e^{-i \beta_1 X} | \conj{a}_{-1}}^*.
\end{align*}
The same follows for the other pairs, yielding Eq.~\eqref{eq:f_conj_equation}.
\end{proof}

\begin{lemma} \label{id:fsum}
\begin{equation} \label{eq:fsum}
\sum_{\av \in B} f(\av) = \sum_{\av \in B_0} f(\av) = 1.
\end{equation}
\end{lemma}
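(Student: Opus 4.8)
The plan is to establish the two equalities in turn: first $\sum_{\av\in B}f(\av)=1$, and then deduce $\sum_{\av\in B_0}f(\av)=1$ as a consequence.

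For $\sum_{\av\in B}f(\av)=1$ I would argue by collapsing the chain of matrix elements in Eq.~\eqref{eq:f_def_appendix}. Reading that product from left to right, $f(\av)$ is $\tfrac12$ times the product of matrix elements of the operators $e^{i\beta_1 X},\dots,e^{i\beta_p X},e^{-i\beta_p X},\dots,e^{-i\beta_1 X}$ sandwiched between the basis states indexed, in order, by $a_1,a_2,\dots,a_p,a_0,a_{-p},\dots,a_{-1}$. Summing over every index except $a_1$ and $a_{-1}$ inserts a resolution of the identity $\sum_a\ket a\bra a=I$ between each pair of consecutive operators, so the sum becomes $\tfrac12\sum_{a_1,a_{-1}}\braket{a_1|\,e^{i\beta_1 X}\cdots e^{i\beta_p X}e^{-i\beta_p X}\cdots e^{-i\beta_1 X}\,|a_{-1}}$. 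The product of operators telescopes to $I$ because $e^{i\beta_j X}e^{-i\beta_j X}=I$ for every $j$, leaving $\tfrac12\sum_{a_1,a_{-1}}\delta_{a_1,a_{-1}}=\tfrac12\cdot 2=1$.

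For $\sum_{\av\in B_0}f(\av)=1$, given the first equality it suffices to show $\sum_{\av\in B\setminus B_0}f(\av)=0$. Here I would invoke Lemma~\ref{id:f-prime}, which gives $f(\av')=-f(\av)$ for every $\av\in B\setminus B_0$, together with the fact that $\av\mapsto\av'$ is a fixed-point-free involution of $B\setminus B_0$. To see this I would check that $T(\av')=T(\av)$: for $r>T(\av)$ the entries $a_{\pm r}$ are unchanged and remain equal, while for $r=T(\av)$ flipping both signs preserves $a_{T}\neq a_{-T}$; hence applying the prime operation a second time flips exactly the same coordinates back, so $(\av')'=\av$, and $\av'\neq\av$ because $T(\av)\ge 1$ forces $a_1'=-a_1$. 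Thus $B\setminus B_0$ is partitioned into two-element orbits $\{\av,\av'\}$ on each of which $f$ sums to zero, and $\sum_{\av\in B_0}f(\av)=\sum_{\av\in B}f(\av)-\sum_{\av\in B\setminus B_0}f(\av)=1-0=1$.

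The computation has no genuinely hard step; the one place that needs care — which I would treat as the main (really just the fussiest) point — is verifying that the prime operation is a well-defined fixed-point-free involution on $B\setminus B_0$, in particular that $T(\cdot)$ is invariant under it. As an independent check one can also prove $\sum_{\av\in B_0}f(\av)=1$ directly: on $B_0$ one pairs $\braket{a_{-(j+1)}|e^{-i\beta_j X}|a_{-j}}$ with $\braket{a_j|e^{i\beta_j X}|a_{j+1}}$ and $\braket{a_0|e^{-i\beta_p X}|a_{-p}}$ with $\braket{a_p|e^{i\beta_p X}|a_0}$, rewriting $f(\av)=\tfrac12|\braket{a_p|e^{i\beta_p X}|a_0}|^2\prod_{j=1}^{p-1}|\braket{a_j|e^{i\beta_j X}|a_{j+1}}|^2$; summing the free bits $a_0,a_p,a_{p-1},\dots,a_1$ in that order collapses one squared-modulus factor to $1$ at each step by unitarity of $e^{i\beta_j X}$, and the final sum over $a_1$ contributes a factor of $2$, again giving $1$.
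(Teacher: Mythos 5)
Your proposal is correct and follows essentially the same route as the paper: the first equality comes from resolving the intermediate indices and telescoping $e^{i\beta_j X}e^{-i\beta_j X}$ to the identity, and the second from pairing each $\av\in B\setminus B_0$ with $\av'$ via Lemma~\ref{id:f-prime} so that the non-$B_0$ contribution cancels. Your explicit check that the prime map is a fixed-point-free involution with $T(\av')=T(\av)$ makes precise a step the paper uses implicitly (when it ``duplicates'' the sum over $\av\not\in B_0$ by summing over $\av'$), but it is the same argument.
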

\begin{proof}
Using definition of $f(\av)$ in \cref{eq:f_def_appendix}, we see that
\begin{equation*}
 \sum_{\av \in B} f(\av) = \sum_{a_1, a_{-1}} \frac12 \braket{a_1| e^{i \beta_1 X} \cdots e^{i \beta_p X} e^{-i \beta_p X} \cdots e^{-i \beta_1 X} | a_{-1}} = \sum_{a_1, a_{-1}} \frac12 \braket{a_1| a_{-1}}  = 1.
\end{equation*}
Also
\begin{equation*}
    \sum_{\av \in B} f(\av) = \sum_{\av \in B_0} f(\av)  + \frac12 \sum_{\av \not\in B} \big[f(\av) + f(\av')\big] = \sum_{\av \in B_0} f(\av)
\end{equation*}
where we have decomposed the sum over $\av\in B$ into a sum over $\av\in B_0$ and a sum over $\av \not \in B_0$, and duplicated the latter by considering summing over $\av'$ instead of $\av$. Then applying \cref{id:f-prime} gives the first equality in \cref{eq:fsum}.  
\end{proof}

\subsection{Properties of $H^{(m)}_D(\av)$}\label{apx:H_properties}
\label{apx:H-properties}

We now prove some properties involving $H^{(m)}_D(\av)$, which are used in the proofs in Section~\ref{sec:iter_proofs_infinite} and in the later sections of the appendix. 

\begin{lemma} \label{lem:H-D-prop}
For $0 \leq m \leq p$,
\begin{align}
H_D^{(m)}(\av) = 1 
\quad \textnormal{ if } \quad
\av \in B_0
\qquad \textnormal{ and } \qquad
H_D^{(m)}(\av') = H_D^{(m)}(\av) 
\quad \textnormal{ if } \quad\av \not\in B_0.
\end{align}
\end{lemma}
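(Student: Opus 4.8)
The plan is to prove the two assertions \emph{simultaneously} by induction on $m$, working from the cosine form of the iteration, Eq.~\eqref{eq:HDm-final}, namely $H^{(m)}_D(\av)=\bigl(\sum_\bv f(\bv)H^{(m-1)}_D(\bv)\cos[\tfrac1{\sqrt D}\Gammav\cdot(\av\bv)]\bigr)^D$, together with the $f$-identities of \cref{apx:f_identities} (Lemmas~\ref{id:f-prime} and~\ref{id:fsum}), the evenness of cosine, and the facts established in the main text that $H^{(m)}_D(-\av)=H^{(m)}_D(\av)$ (Eq.~\eqref{eq:Hm-negation-identity}) and that $H^{(m)}_D(\av)$ does not depend on $a_0$ (since $\Gamma_0=0$). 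The base case $m=0$ is trivial because $H^{(0)}_D\equiv1$.

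For the first assertion, take $\av\in B_0$, so $a_r=a_{-r}$ and hence $\Gammav\cdot(\av\bv)=\sum_r\gamma_r a_r(b_r-b_{-r})$. I would split the $\bv$-sum in Eq.~\eqref{eq:HDm-final} into $\bv\in B_0$ and $\bv\notin B_0$. On $B_0$ the cosine is $1$ and, by induction, $H^{(m-1)}_D(\bv)=1$, so this piece contributes $\sum_{\bv\in B_0}f(\bv)=1$ by Lemma~\ref{id:fsum}. On $B\setminus B_0$ I pair $\bv$ with $\bv'$ of Eq.~\eqref{eq:prime_def}, a fixed-point-free involution: by induction $H^{(m-1)}_D(\bv')=H^{(m-1)}_D(\bv)$, by Lemma~\ref{id:f-prime} $f(\bv')=-f(\bv)$, and since $a_r=a_{-r}$ the prime on $\bv$ only negates $\Gammav\cdot(\av\bv)$ (the terms with $r\le T(\bv)$ flip sign while those with $r>T(\bv)$ already vanish), so the cosine is unchanged and each pair cancels. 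Thus the inner sum is $1$ and its $D$-th power is $1$.

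For the second assertion, fix $\av\notin B_0$ and set $T=T(\av)$. I would first reduce it: using $H^{(m)}_D(-\av)=H^{(m)}_D(\av)$ and $a_0$-independence, $-\av'$ is — up to its irrelevant $a_0$ entry — just $\av$ with the symmetric levels $T+1,\dots,p$ flipped, so $H^{(m)}_D(\av')=H^{(m)}_D(\av)$ is equivalent to the invariance of $H^{(m)}_D$ under flipping those levels; doing this one level at a time, it suffices to prove the statement (B): $H^{(m)}_D(\cv)=H^{(m)}_D(\cv^{[r_0]})$ whenever $c_{r_0}=c_{-r_0}$, where $\cv^{[r_0]}$ negates $c_{\pm r_0}$ only. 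I would prove (B) by the same induction. Substituting $\bv\mapsto\bv^{[r_0]}$ in the sum defining $H^{(m)}_D(\cv^{[r_0]})$, and noting that flipping $b_{\pm r_0}$ together with $c_{\pm r_0}$ restores the $r_0$-term of $\Gammav\cdot(\cv\bv)$, one reduces (B) to $\sum_\bv\bigl[f(\bv^{[r_0]})H^{(m-1)}_D(\bv^{[r_0]})-f(\bv)H^{(m-1)}_D(\bv)\bigr]\cos[\tfrac1{\sqrt D}\Gammav\cdot(\cv\bv)]=0$. Splitting this sum by whether $b_{r_0}=b_{-r_0}$, the terms with $b_{r_0}=b_{-r_0}$ cancel cleanly: there $H^{(m-1)}_D(\bv^{[r_0]})=H^{(m-1)}_D(\bv)$ by the inductive hypothesis for (B), the cosine is unchanged under $\bv\mapsto\bv^{[r_0]}$ (its $r_0$-term vanishes on both sides), and the change of variables $\bv\mapsto\bv^{[r_0]}$ then identifies the two halves.

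The remaining terms, with $b_{r_0}\ne b_{-r_0}$, are the main obstacle: for such $\bv$ the block flip $\bv\mapsto\bv^{[r_0]}$ changes $f$ by a factor that is not merely $\pm1$ (it trades $\cos\beta_{r_0}$ against $\sin\beta_{r_0}$), and the inductive hypothesis does not apply to $H^{(m-1)}_D(\bv^{[r_0]})$, so one cannot cancel term by term. I expect to dispatch this by a level-by-level analysis in the spirit of the proof of Lemma~\ref{id:f-prime}, organizing the cancellations in groups of several $\bv$'s rather than pairs. Equivalently — and probably more transparently — one can pass to the consecutive-product coordinates $\lo\av$ of \cref{sec:SK_tree_agreement}, in which $\cos[\tfrac1{\sqrt D}\Gammav\cdot(\av\bv)]=\cos[\tfrac1{\sqrt D}\Phi_{\lo\av\lo\bv}]$, $f(\bv)=\tfrac12 Q_{\lo\bv}$, and $H^{(m)}_D$ is a function of $\lo\bv$ alone (by evenness), which recasts the claim as the $Q_{\bar\av}=-Q_\av$-flavored assertion recorded as Lemma~\ref{lem:H-bar-prop}.
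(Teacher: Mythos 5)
Your proof of the first assertion ($\av\in B_0$) is correct and is essentially the paper's own argument. The gap is in the second assertion. You reduce the prime operation to a sequence of single symmetric-level flips and then run an induction on the statement (B): $H^{(m)}_D(\cv)=H^{(m)}_D(\cv^{[r_0]})$ whenever $c_{r_0}=c_{-r_0}$. But (B) is false in that generality: it only has a chance of holding when $r_0>T(\cv)$, whereas your induction applies it at level $m-1$ to the inner strings $\bv$ with $b_{r_0}=b_{-r_0}$, which may well have $T(\bv)>r_0$. A concrete counterexample at $p=2$, $m=1$: take $\cv$ with $c_1=c_{-1}$ and $c_2\neq c_{-2}$, and flip the level-$1$ pair. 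Then $\Gammav\cdot(\cv\bv)=\gamma_1c_1(b_1-b_{-1})+\gamma_2c_2(b_2+b_{-2})$, and the difference of the two inner sums in \cref{eq:HDm-final} is $-2\sum_\bv f(\bv)\sin[\gamma_1c_1(b_1-b_{-1})/\sqrt{D}]\sin[\gamma_2c_2(b_2+b_{-2})/\sqrt{D}]$; summing explicitly over the strings with $b_1\neq b_{-1}$, $b_2=b_{-2}$ gives $4i\,c_1c_2\sin\beta_1\cos\beta_1\sin(2\gamma_1/\sqrt{D})\sin(2\gamma_2/\sqrt{D})$, which is nonzero for generic parameters. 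So even the part of your argument that you describe as cancelling ``cleanly'' rests on an inductive hypothesis that fails for the relevant $\bv$'s; and the part you flag as the main obstacle ($b_{r_0}\neq b_{-r_0}$) is left as an expectation rather than proved. The suggested fallback of recasting the claim as \cref{lem:H-bar-prop} is circular within the paper: that lemma is deduced from the present one.

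The fix is not to flip one level at a time but to keep the full prime operation and organize the $\bv$-sum in \cref{eq:HDm-final} by comparing $T(\bv)$ with $T(\av)$, which is what the paper does. For $T(\bv)>T(\av)$ one has $\Gammav\cdot(\av\bv')=-\Gammav\cdot(\av\bv)$, so by \cref{id:f-prime}, the inductive hypothesis $H^{(m-1)}_D(\bv')=H^{(m-1)}_D(\bv)$, and evenness of cosine these terms cancel in pairs, leaving the sum restricted to $T(\bv)\le T(\av)$ as in \cref{eq:H_D_simplified_levels}. On that restricted sum, switching the roles of $\av$ and $\bv$ gives $\Gammav\cdot(\av'\bv)=-\Gammav\cdot(\av\bv)$ with $T(\av')=T(\av)$, so evenness of cosine yields $H^{(m)}_D(\av')=H^{(m)}_D(\av)$ in one step, with no need for single-level invariance; and for $\av\in B_0$ the restricted sum runs over $B_0$ and equals $1$ by \cref{id:fsum}, exactly as in your first half.
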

\begin{proof}
We proceed by induction on $m$. The base case $m=0$ follows from the fact that, for all $\av$, $H_D^{(0)}(\av) = 1$.  Now as an inductive hypothesis assume that the lemma is true for $m-1$.
Then, using \cref{id:f-prime}, we can break the sum on $\bv$ in the definition \eqref{eq:H_m_def} of $H_D^{(m)}(\av)$ into pieces as
\begin{align}
    H_D^{(m)}(\av) &= \bigg(\sum_{\bv:~ T(\bv) \le T(\av)} f(\bv) H_D^{(m-1)}(\bv) \cos{\Big[ {\textstyle \frac{1}{\sqrt{D}} }\Gammav \cdot (\av \bv) \Big]} \nonumber \\
    &\quad \quad + \frac12 \sum_{\bv :~ T(\bv) > T(\av)} f(\bv) H_D^{(m-1)}(\bv) \bigg[ \cos{\Big[ {\textstyle \frac{1}{\sqrt{D}}}\Gammav \cdot (\av \bv) \Big]} - \cos{\Big[ {\textstyle \frac{1}{\sqrt{D}} }\Gammav \cdot (\av \bv') \Big]} \bigg] \bigg)^D . \label{eq:H_D_expanded_appendix}
\end{align}
We now show that the second sum evaluates to zero.
To see this, note that
\begin{align}
\label{eq:gamma_av_bv_flips_sign_with_prime}
    \Gammav \cdot (\av\bv) = \sum_{r=1}^p \gamma_r (a_r b_r - a_{-r} b_{-r}) = \sum_{r=1}^{\max\{T(\av), T(\bv)\} } \gamma_r (a_r b_r - a_{-r} b_{-r})
\end{align}
since $a_r b_r = a_{-r} b_{-r}$ when $r > \max\{T(\av), T(\bv)\}$.
Note if $T(\bv) \ge T(\av)$, then $b_{\pm r} = - b'_{\pm r}$ when $r \leq \max\{T(\av), T(\bv)\} = T(\bv) = T(\bv')$. So whenever $T(\bv) \ge T(\av)$, which includes the case when $T(\bv) > T(\av)$, we have from Eq.~\eqref{eq:gamma_av_bv_flips_sign_with_prime} that
\begin{align}\label{eq:Gamma_ab_becomes_prime}
    \Gammav \cdot (\av\bv) = -\sum_{r=1}^{\max\{T(\av), T(\bv')\}} \gamma_r (a_r b'_r - a_{-r} b'_{-r}) 
    = - \Gammav \cdot (\av\bv').
\end{align}
And since cosine is an even function, Eq.~\eqref{eq:H_D_expanded_appendix} becomes
\begin{align}
\label{eq:H_D_simplified_levels}
H_D^{(m)}(\av) &= \bigg(\sum_{\substack{\bv :~ T(\bv) \leq T(\av)}} f(\bv) H_D^{(m-1)}(\bv) \cos{\Big[{\textstyle \frac{1}{\sqrt{D}}} \Gammav \cdot (\av \bv) \Big]} \bigg)^D .
\end{align}

With this simplified form of $H_D^{(m)}(\av)$, we now can prove the Lemma.
If $\av \in B_0$, we have $T(\av) = 0$.
Using the inductive hypothesis that $H^{(m-1)}_D(\bv) = 1$ when $\bv\in B_0$ and the fact that $\Gammav \cdot (\av\bv) = 0$ when $\av,\bv \in B_0$, we have
\begin{equation}
H_D^{(m)}(\av) = \bigg(\sum_{\bv \in B_0} f(\bv) H_D^{(m-1)}(\bv) \cos{\Big[ {\textstyle \frac{1}{\sqrt{D}}}\Gammav \cdot (\av \bv) \Big]} \bigg)^D = \bigg(\sum_{\bv \in B_0} f(\bv) \bigg)^D = 1,
\end{equation}
where \cref{id:fsum} is used in the last equality.
Now consider the case of $\av \not \in B_0$.
When $T(\bv) \le T(\av)$ as in the sum in Eq.~\eqref{eq:H_D_simplified_levels},
we may switch the roles of $\av,\bv$ in Eq.~\eqref{eq:Gamma_ab_becomes_prime} to see that $\Gammav \cdot (\av\bv) = - \Gammav \cdot (\av'\bv)$.
Since $T(\av) = T(\av')$, we can rewrite Eq.~\eqref{eq:H_D_simplified_levels} as
\begin{align}
    H_D^{(m)}(\av) &= \bigg(\sum_{\substack{\bv :~ T(\bv) \leq T(\av')}} f(\bv) H_D^{(m-1)}(\bv) \cos{\Big[ {\textstyle \frac{1}{\sqrt{D}}}\Gammav \cdot (\av' \bv) \Big]} \bigg)^D =H_D^{(m)}(\av').
\end{align}
This completes the induction.
\end{proof}

Next, we apply the above result to prove the assertion of Eq.~\eqref{eq:sum_f_H_in_main_text}, which is used in \cref{sec:iter_proofs_infinite} to establish the correctness of the infinite-$D$ iteration.
This assertion is the following lemma that we now prove.

\begin{lemma}\label{lem:fH-sum}
For $0\le m\le p$, we have
\begin{equation}
    \sum_{\av \in B} f(\av) H_D^{(m)}(\av) = 1 .
\end{equation}
\end{lemma}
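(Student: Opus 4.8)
\emph{Proof proposal.} The plan is to prove this directly, without any induction on $m$, by combining three facts already established above: \cref{lem:H-D-prop}, which gives $H_D^{(m)}(\av)=1$ for $\av\in B_0$ and $H_D^{(m)}(\av')=H_D^{(m)}(\av)$ for $\av\notin B_0$; \cref{id:f-prime}, which gives $f(\av')=-f(\av)$ for $\av\notin B_0$; and \cref{id:fsum}, which gives $\sum_{\av\in B_0}f(\av)=1$. The idea is to split the sum over $B$ into its $B_0$ part and its $B\setminus B_0$ part, evaluate the first using \cref{lem:H-D-prop} and \cref{id:fsum}, and show the second vanishes by pairing $\av$ with $\av'$.

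First I would record that $\av\mapsto\av'$, defined in \eqref{eq:prime_def}, restricts to a fixed-point-free involution of $B\setminus B_0$: since $a'_{T(\av)}=-a_{T(\av)}$ and $a'_{-T(\av)}=-a_{-T(\av)}$ still differ while $a'_r=a_r=a_{-r}=a'_{-r}$ for $r>T(\av)$, we have $\av'\notin B_0$, $T(\av')=T(\av)$, and $(\av')'=\av$ (re-negating the first $T(\av)$ coordinate pairs); this is exactly the behaviour illustrated in \cref{table:T_prime_examples}. Then, using $H_D^{(m)}(\av)=1$ on $B_0$,
\begin{align}
\sum_{\av\in B} f(\av)\,H_D^{(m)}(\av)
&= \sum_{\av\in B_0} f(\av)\,H_D^{(m)}(\av) + \sum_{\av\in B\setminus B_0} f(\av)\,H_D^{(m)}(\av) \nonumber\\
&= \sum_{\av\in B_0} f(\av) + \sum_{\av\in B\setminus B_0} f(\av)\,H_D^{(m)}(\av),
\end{align}
and the first term is $1$ by \cref{id:fsum}. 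For the second term, re-indexing the sum by the bijection $\av\mapsto\av'$ of $B\setminus B_0$ and then applying $f(\av')=-f(\av)$ together with $H_D^{(m)}(\av')=H_D^{(m)}(\av)$ gives
\begin{equation}
\sum_{\av\in B\setminus B_0} f(\av)\,H_D^{(m)}(\av) = \sum_{\av\in B\setminus B_0} f(\av')\,H_D^{(m)}(\av') = -\sum_{\av\in B\setminus B_0} f(\av)\,H_D^{(m)}(\av),
\end{equation}
so this sum equals its own negative and is therefore $0$. Hence $\sum_{\av\in B} f(\av)\,H_D^{(m)}(\av)=1$, as claimed.

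I do not expect a genuine obstacle here, since every ingredient is already in hand; the only point that deserves an explicit line is verifying that the prime operation is a well-defined fixed-point-free involution of $B\setminus B_0$, which is immediate from \eqref{eq:prime_def}. As a consistency check one can also observe that $\sum_{\av} f(\av)\,H_D^{(m)}(\av)$ is precisely the computational-basis expansion of $\braket{\paramv|\paramv}$ evaluated on a single $D$-ary tree with $m$ generations below its root --- obtained by the same leaf-to-root summation that produced \eqref{eq:ZLZR-deriv-2}, but with the identity operator in place of $Z_L Z_R$ --- and this equals $1$ because the QAOA state is a unit vector.
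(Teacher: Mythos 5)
Your proposal is correct and follows essentially the same route as the paper's own proof: split the sum into its $B_0$ and $B\setminus B_0$ parts, use Lemma~\ref{lem:H-D-prop} together with Lemma~\ref{id:f-prime} to cancel the $B\setminus B_0$ contribution via the pairing $\av\leftrightarrow\av'$, and finish with Lemma~\ref{id:fsum}; the paper writes the cancellation with a $\tfrac12$-duplication of the sum while you phrase it as the sum equalling its own negative, which is the same argument. Your explicit check that the prime map is a fixed-point-free involution of $B\setminus B_0$, and the normalization sanity check $\braket{\paramv|\paramv}=1$, are fine additions but not substantive departures.
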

\begin{proof}
Decomposing the sum over $\av\in B$ and applying  Lemmas~\ref{id:f-prime} and \ref{lem:H-D-prop}, we get
\begin{align*}
\sum_{\av\in B} f(\av) H_D^{(m)}(\av) &= \sum_{\av \in B_0} f(\av) H_D^{(m)}(\av) 
+ \frac12 \sum_{\av \not\in B_0} \Big[f(\av) H_D^{(m)}(\av)  + f(\av') H_D^{(m)}(\av') \Big] \nonumber \\
&= \sum_{\av \in B_0} f(\av) 
+ \frac12 \sum_{\av \not\in B_0} \Big[f(\av) H_D^{(m)}(\av)  - f(\av) H_D^{(m)}(\av) \Big] \nonumber \\
 &= \sum_{\av \in B_0} f(\av) = 1
\end{align*}
where we used \cref{id:fsum} on the last line.
\end{proof}

Finally, we prove a lemma to be used in proving the properties of $G^{(m)}_{r,s}$ that involve complex conjugation in what follows in Appendix~\ref{apx:proof_properties_G_m}.
\begin{lemma}\label{lem:H_conj}
For any $\av \in B$ and $0 \leq m \leq p$,
\begin{equation}\label{eq:H_conj}
H_D^{(m)}(\conj\av) = H_D^{(m)}(\av)^*.
\end{equation}
\end{lemma}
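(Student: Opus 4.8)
The plan is to argue by induction on $m$, running in parallel to the proof of \cref{lem:H-D-prop} but now tracking complex conjugation, with \cref{id:f-conj} ($f(\conj\av)=f(\av)^*$) as the main input. The base case $m=0$ is immediate since $H_D^{(0)}(\av)=1$ is real and constant, so $H_D^{(0)}(\conj\av)=1=H_D^{(0)}(\av)^*$.

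For the inductive step, assume $H_D^{(m-1)}(\conj\bv)=H_D^{(m-1)}(\bv)^*$ for every $\bv\in B$. Starting from the defining iteration \eqref{eq:HDm-final}, I would take the complex conjugate of $H_D^{(m)}(\av)$ and observe that the argument of the cosine is real (because $\Gammav$, $\av$, and $\bv$ are all real), so conjugation only touches $f(\bv)$ and $H_D^{(m-1)}(\bv)$:
\begin{equation}
H_D^{(m)}(\av)^* = \bigg(\sum_{\bv} f(\bv)^* H_D^{(m-1)}(\bv)^* \cos\Big[ {\textstyle \frac{1}{\sqrt{D}}} \Gammav \cdot (\av\bv)\Big]\bigg)^D.
\end{equation}
Applying \cref{id:f-conj} and the inductive hypothesis turns $f(\bv)^* H_D^{(m-1)}(\bv)^*$ into $f(\conj\bv) H_D^{(m-1)}(\conj\bv)$. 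Since $\bv\mapsto\conj\bv$ is a bijection of $B$, I would then relabel the summation index $\bv\to\conj\bv$, which leaves the $f$ and $H^{(m-1)}$ factors in standard form but changes the cosine argument to $\Gammav\cdot(\av\conj\bv)$.

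The one computational point to nail down is the identity $\Gammav\cdot(\av\conj\bv) = -\,\Gammav\cdot(\conj\av\bv)$. This follows by writing $\Gammav\cdot(\av\conj\bv)=\sum_{j} \Gamma_j a_j b_{-j}$, substituting $j\to -j$, and using $\Gamma_{-j}=-\Gamma_j$ from \eqref{eq:Gamma-def}, which yields $-\sum_j \Gamma_j a_{-j} b_j = -\Gammav\cdot(\conj\av\bv)$. Because cosine is even, the cosine factor is then exactly $\cos[\frac{1}{\sqrt{D}}\Gammav\cdot(\conj\av\bv)]$, and comparing with the iteration for $H_D^{(m)}(\conj\av)$ gives $H_D^{(m)}(\av)^* = H_D^{(m)}(\conj\av)$, completing the induction. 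I expect this antisymmetry bookkeeping (tracking the sign from $\Gamma_{-j}=-\Gamma_j$ together with the reversal $\conj{}$, and confirming evenness of the cosine absorbs it) to be the only place requiring care; everything else is a routine conjugation of the recursion.
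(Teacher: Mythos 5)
Your proof is correct and follows essentially the same route as the paper's: induction on $m$ using the cosine form of the recursion, the identity $f(\conj\bv)=f(\bv)^*$, the relabeling $\bv\to\conj\bv$, and the antisymmetry $\Gamma_{-j}=-\Gamma_j$ absorbed by the evenness of cosine. The only difference is cosmetic — you conjugate $H_D^{(m)}(\av)$ and work toward $H_D^{(m)}(\conj\av)$, while the paper starts from $H_D^{(m)}(\conj\av)$ and works the other way.
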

\begin{proof}
We proceed by induction on $m$. The base case $m=0$ follows from the fact that $H^{(0)}_D(\av) = 1$. Now assume that the lemma is true for $m-1$. By Eq.~\eqref{eq:H_m_def},
\[
    H_D^{(m)}(\conj\av) = \bigg(\sum_{\bv} f(\bv) H_D^{(m-1)}(\bv) \cos{\Big[ {\textstyle \frac{1}{\sqrt{D}}}\Gammav \cdot (\conj\av \bv) \Big]} \bigg)^D = \bigg(\sum_{\bv} f(\conj\bv) H_D^{(m-1)}(\conj\bv) \cos{\Big[ {\textstyle \frac{1}{\sqrt{D}}}\Gammav \cdot (\conj\av \conj\bv) \Big]} \bigg)^D
\]
since $\sum_{\av \in B} (\cdots) = \sum_{\conj\av \in B} (\cdots)$. Then, by the inductive hypothesis, Lemma 2 and  the fact  that $\Gammav \cdot(\conj\av \conj\bv) = -\Gammav \cdot (\av \bv)$ we have
\begin{equation*}
    H_D^{(m)}(\conj\av) = \bigg(\sum_{\bv} f(\bv)^* H_D^{(m-1)}(\bv)^* \cos{\Big[ {\textstyle \frac{1}{\sqrt{D}}}\Gammav \cdot (\av \bv) \Big]} \bigg)^D = H_D^{(m)}(\av)^*.
\end{equation*}
\end{proof}

\clearpage

\subsection{Symmetries of the $G^{(m)}$ matrix}
\label{apx:proof_properties_G_m}

In this section of the appendix, we prove the following symmetry properties of the matrix $G^{(m)}$ that are stated in \cref{sec:infinite_D_iteration}:
For $1\le r < s \le p$, and $-p\le j, k \le p$, we have
\begin{multicols}{2}
\begin{enumerate}[label=(\arabic*),leftmargin=3\parindent]
    \item \label{item:prop_1} $G_{j,k}^{(m)} = G_{k,j}^{(m)}$,
    
    \item \label{item:prop_2} $G_{j,j}^{(m)} = G_{j,-j}^{(m)} = 1$,
    
    \item \label{item:prop_3} $G_{0, r}^{(m)} = G_{0,-r}^{(m)*}$,
    
    \item \label{item:prop_4} $G_{r,s}^{(m)} = G_{r,-s}^{(m)}= G_{-r,-s}^{(m)*} = G_{-r,s}^{(m)*}$.
\end{enumerate}
\end{multicols}

Let us use the definition of $G^{(m)}_{j,k}$ in terms of $H^{(m)}$ in Eq.~\eqref{eq:G_definition} which we reproduce here:
\begin{equation}
    G_{j,k}^{(m)} := \sum_{\av \in B} f(\av) H^{(m)}(\av) a_j a_k . \label{eq:G_def_appendix}
\end{equation}
It is clear from this definition that $G^{(m)}$ is a symmetric matrix, so Property~\ref{item:prop_1} holds. For the remaining properties, let us rewrite \cref{eq:G_def_appendix} using Lemmas~\ref{id:f-prime} and \ref{lem:H-D-prop} as
\begin{equation} \label{eq:G-decomposed}
G_{j,k}^{(m)} = 
    \sum_{\av \in B_0} f(\av) a_j a_k + \frac12\sum_{\av\not\in B_0} f(\av) H^{(m)}(\av) (a_j a_k - a'_j a'_k).
\end{equation}
Note that Lemma~\ref{lem:H-D-prop} is applied to $H^{(m)}(\av)$ which is $H^{(m)}_D(\av)$ in the $D \to \infty$ limit.

To prove Property~\ref{item:prop_2}, observe that 
\begin{equation}
    G_{j,j}^{(m)} = \sum_{\av \in B_0} f(\av)  + \frac12\sum_{\av\not\in B_0} f(\av) H^{(m)}(\av) (1 - 1) = \sum_{\av \in B_0} f(\av)  =1, 
\end{equation}
where the last equality is due to \cref{id:fsum}.
Also
\begin{equation*}
    G_{j,-j}^{(m)} = \sum_{\av \in B_0} f(\av) a_j a_{-j} + \frac12\sum_{\av\not\in B_0} f(\av) H^{(m)}(\av) (a_j a_{-j} - a'_{j} a'_{-j}).
\end{equation*}
Note that $a_j=a_{-j}$ for $\av\in B_0$, and $a_j a_{-j} = a'_j a'_{-j}$ for any $\av \in B$. Therefore
\begin{equation}
    G_{j,-j}^{(m)} =\sum_{\av \in B_0} f(\av)  = 1.
\end{equation}

Next, we prove Property~\ref{item:prop_3}.
From \cref{eq:G_def_appendix} we have
\begin{equation*}
    G_{0,r}^{(m)} = \sum_{\av \in B} f(\av) H^{(m)}(\av) a_0 a_r = \sum_{\av \in B} f(\conj\av) H^{(m)}(\conj\av) \conj{a}_0 \conj{a}_r
\end{equation*}
where we used the fact that $\sum_{\av \in B} (\cdots) = \sum_{\conj\av \in B} (\cdots)$.
Applying \cref{lem:H_conj}, which is true for each $D$ so it is true in the $D \to \infty$ limit, along with \cref{id:f-conj} we get
\begin{equation}
    G_{0,r}^{(m)} = \sum_{\av \in B} f(\av)^* H^{(m)}(\av)^* a_0 a_{-r} = G_{0,-r}^{(m)*} .
\end{equation}

Lastly, we need to prove Property~\ref{item:prop_4}. Let $1\le r < s \le p$. Then from \cref{eq:G-decomposed} we have
\begin{align*}
G_{r,s}^{(m)} &= \sum_{\av\in B_0} f(\av) a_r a_s
+ \frac12\sum_{\av\not\in B_0} f(\av)H^{(m)}(\av) ( a_r a_s  - a'_r a'_s ) \nonumber \\
&=\sum_{\av\in B_0} f(\av) a_r a_s
+ \frac12\sum_{\av\not\in B_0, r \le T(\av) <s} f(\av)H^{(m)}(\av) ( a_r a_s  - a'_r a'_s )
\end{align*}
where we used the fact that the second summand is nonzero only if $a_r a_s \neq a'_r a'_s$, which holds only if $r \le T(\av) < s$.
Under this condition we have $a_r a_s - a'_r a'_s = 2a_r a_s$, so
\begin{align}\label{eq:G_r_s_expanded}
G_{r,s}^{(m)}=\sum_{\av\in B_0} f(\av) a_r a_s
+ \sum_{\av\not\in B_0, r \le T(\av) <s} f(\av)H^{(m)}(\av) a_r a_s .
\end{align}
Similarly,
\begin{align}
G_{r,-s}^{(m)} &= \sum_{\av\in B_0} f(\av) a_r a_{-s}
+ \sum_{\av\not\in B_0, r \le T(\av) < s} f(\av)H^{(m)}(\av) a_r a_{-s}  \nonumber \\
&= \sum_{\av\in B_0} f(\av) a_r a_{s}
+ \sum_{\av\not\in B_0, r \le T(\av) < s} f(\av)H^{(m)}(\av) a_r a_{s} \label{eq:G_r_-s}
\end{align}
where we used the fact that $a_{-s} = a_{s}$ if $\av \in B_0$ or if $T(\av) < s$. Comparing Eq.~\eqref{eq:G_r_s_expanded} and Eq.~\eqref{eq:G_r_-s}, we conclude that
\begin{equation}
    G_{r,s}^{(m)} = G_{r,-s}^{(m)} \,.
\end{equation}
It remains to consider
\begin{equation*}
G_{-r,\pm s}^{(m)} = \sum_{\av \in B} f(\av) H(\av) a_{-r} a_{\pm s} = \sum_{\av \in B} f(\conj\av) H(\conj\av) \conj{a}_{-r} \conj{a}_{\pm s}
\end{equation*}
where we used the fact that $\sum_{\av \in B} (\cdots) = \sum_{\conj\av \in B} (\cdots)$. Now, using \cref{id:f-conj} and \cref{lem:H_conj} in the $D \rightarrow \infty$ limit,
\begin{equation}
    G_{-r,\pm s}^{(m)} = \sum_{\av \in B} f(\av)^* H(\av)^* a_{r} a_{\mp s} = G_{r, \mp s}^{(m)^*}
\end{equation}
establishing Property~\ref{item:prop_4}.

\clearpage

\subsection{Placement of the matrix elements of $G^{(m)}$}
\label{apx:proof_placement_G}

In this section of the appendix, we show how the matrix elements of $G^{(m)}$ only depend on a submatrix of $G^{(m-1)}$.
This implies that the iteration on $G^{(m)}$ can be understood as a placement of the entries in the matrix $G^{(m)}$ that remain unchanged after a sufficient number of steps.
We also elaborate on why this enables us to reduce the complexity of implementing the iteration in \cref{sec:infinite_D_iteration} to $O(p^2 4^p)$ instead of the na\"ive $O(p^3 4^p)$ estimate.

Recall the definition \eqref{eq:G_definition} of $G^{(m)}_{j,k}$ which states
\begin{equation} \label{eq:G_def_appendix_A4}
    G_{j,k}^{(m)} = \sum_{\av} f(\av) H^{(m)}(\av) a_j a_k.
\end{equation}
Observe that combining Eqs.~\eqref{eq:H_tilde_m_grouped} and \eqref{eq:G_definition} we have
\begin{equation} \label{eq:Hm-from-G}
    H^{(m)}(\av) =
    \exp\Big[{-}
    \frac12\sum_{j',k'=-p}^p G_{j',k'}^{(m-1)} \Gamma_{j'} \Gamma_{k'} a_{j'} a_{k'}
    \Big].
\end{equation}
Note the summand in the exponential is zero whenever either $j'=0$ or $k'=0$ since $\Gamma_0=0$.
Hence, we can focus on $j',k'\neq 0$, and enumerate all pairs of $(j',k')$ by writing the sum as
\begin{align}
H^{(m)}(\av)
& = \exp\bigg\{ {-} \frac12 \sum_{s'=1}^p \Big[
	G_{s',s'}^{(m-1)} \gamma_{s'}^2 - G_{s',-s'}^{(m-1)} \gamma_{s'}^2 a_{s'} a_{-s'} - G_{-s',s'}^{(m-1)} \gamma_{s'}^2 a_{-s'} a_{s'} + G_{-s',-s'}^{(m-1)} \gamma_{s'}^2 
	\Big]
	\nonumber \\ 
&\qquad \quad~~	-\sum_{1 \le r' < s'\le p} 
	\Big[ G_{r',s'}^{(m-1)} \gamma_{r'} \gamma_{s'} a_{r'} a_{s'} - G_{r',-s'}^{(m-1)} \gamma_{r'} \gamma_{s'} a_{r'} a_{-s'} \nonumber \\
& \qquad \qquad \qquad \qquad \quad 
	- G_{-r',s'}^{(m-1)} \gamma_{r'} \gamma_{s'} a_{-r'} a_{s'} + G_{-r',-s'}^{(m-1)} \gamma_{r'} \gamma_{s'} a_{-r'} a_{-s'}
	\Big] \bigg\}
\end{align}
where the factor of $1/2$ in front of the second sum is killed by considering the contribution from the symmetric pair $G_{s',r'}^{(m-1)}=G_{r',s'}^{(m-1)}$, etc.
Then applying the Properties~\ref{item:prop_2} and \ref{item:prop_4} listed in \cref{apx:proof_properties_G_m} to $G^{(m-1)}$ yields the following simplified form
\begin{equation} \label{eq:Hm-simplified}
H^{(m)}(\av)
= \exp\bigg\{
{-} \sum_{s'=1}^p \gamma_{s'}^2 (1 -  a_{s'} a_{-s'})
	-\hspace{-2pt} \sum_{1 \le r' < s'\le p} \gamma_{r'} \gamma_{s'}
	\Big[ G_{r',s'}^{(m-1)}  a_{r'} 
		- G_{r',s'}^{(m-1)*}  a_{-r'} 
	\Big](a_{s'} - a_{-s'})	
	\bigg\}.
\end{equation}

Going back to Eqs.~\eqref{eq:G_def_appendix_A4} and \eqref{eq:Hm-from-G} we see that $G^{(m)}$ only depends on the matrix elements of $G^{(m-1)}$ through $H^{(m)}(\av)$, which only involves $G^{(m-1)}_{r',s'}$ for $1\le r' < s' \le p$ as seen from the above equation.
In particular this means that $G^{(m)}_{0,r}$ depends only on $G^{(m-1)}_{r',s'}$ for $1\le r' < s' \le p$ .
 
We next show that for $1\le r < s \le p$,
$G^{(m)}_{r, s}$ only depends on $G^{(m-1)}_{r', s'}$ where $1 \leq r' < s'< s$.
We first restate \cref{eq:G_r_s_expanded} 
\begin{align}
\label{eq:Grs-place}
   G_{r,s}^{(m)} = \sum_{\av\in B_0} f(\av) a_r a_s
+ \sum_{\av\not\in B_0, r \le T(\av) <s} f(\av) H^{(m)}(\av) a_r a_s\, .
\end{align}
Observe that the dependence of $G^{(m)}_{r,s}$ on the matrix elements of $G^{(m-1)}$ is only via $H^{(m)}(\av)$ in the second summand in Eq.~\eqref{eq:Hm-simplified}. 
However, this summand can be nonzero only if $a_{s'} \neq a_{-s'}$, which holds only if $T(\av) \ge s'$. 
Additionally, in Eq.~\eqref{eq:Grs-place}, $G^{(m)}_{r,s}$ only depends on $H^{(m)}(\av)$ when $\av$ satisfies $T(\av) < s$.
Hence the only relevant terms in Eq.~\eqref{eq:Hm-simplified} involve $r'<s' \le T(\av) < s$.
Therefore, $G^{(m)}_{r,s}$ only depends on $G^{(m-1)}_{r',s'}$ where $r'<s'<s$.

\makeatletter
\renewcommand{\ALG@name}{Iteration}
\makeatother
\renewcommand{\thealgorithm}{}

\paragraph{Improved complexity of the iteration.}---
We now show why the aforementioned properties imply that we can improve the time complexity of the iteration on $G^{(m)}$ in \cref{sec:infinite_D_iteration} from the na\"ive $O(p^3 4^p)$ estimate to $O(p^2 4^p)$.

Given the dependency of the elements of $G^{(m)}$ on those of $G^{(m - 1)}$, at the $m$-th step of the iteration, when $1 \leq m \leq p - 1$, we only need to place the elements $G^{(m)}_{r, m+1}$, with $1\le r < m+1$.
Looking at \cref{eq:Grs-place} we see that calculating $G^{(m)}_{r, m+1}$ only involves summing over $\av$ where $\av \in B_0$ or $r\le T(\av) < m+1$.
So we can use the following implementation of our iteration:
\begin{algorithm}[H]
\caption{for computing $\nu_p(\paramv)$}
\begin{algorithmic}[1]
\State Allocate memory for matrices $G^{(0)}, G^{(1)}\in \mathds{C}^{(2p+1)\times(2p+1)}$, where the diagonal and anti-diagonal entries are initialized to $1$ due to Property~\ref{item:prop_2}, and the rest initialized to 0.

\For{$m=1,2,\ldots, p$}
    \For{$\av \in B_0 \cup \{\av: 1 \le T(\av) < m+1\}$}
        \State Compute $f(\av)$
            \Comment{Use \cref{eq:f_definition}}
        \State Compute $H^{(m)}(\av)$
            \Comment{Use \cref{eq:Hm-simplified} restricted to $s' \le T(\av) < m+1$}
        \For{$r = 1,2,\ldots, m$}
            \If{$m \le p-1$} \Comment{This implements \cref{eq:Grs-place}}
                \State Update $G^{(m)}_{r,m+1} \gets G^{(m)}_{r,m+1} + f(\av) H^{(m)}(\av) a_r a_{m+1}$ when $\av \in B_0$ or $T(\av)\ge r$

            \Else 
                \State
                Update $G^{(p)}_{0,r} \gets G^{(p)}_{0,r} + f(\av) H^{(p)}(\av) a_0 a_r$
            \EndIf
        \EndFor
    \EndFor
    \State If $m\le p-1$ then remove $G^{(m-1)}$ from memory and create a copy $G^{(m+1)} \gets G^{(m)}$.
\EndFor	
\State Returns $\nu_p(\paramv) =  (i/2) \sum_{r=1}^p \gamma_r [(G_{0,r}^{(p)})^2 - (G_{0,r}^{(p)*})^2]$

\Comment{Use \cref{eq:infinite_D_iter_result} and $G$'s symmetry properties}
\end{algorithmic}
\end{algorithm}

Let us evaluate the complexity of the above iteration.
Note there are $2^{p+1}$ elements in $B_0$, and $2^{p+\ell}$ elements $\av\in B$ satisfying $T(\av)=\ell \ge 1$. 
(It may be helpful to see this via Table~\ref{table:T_prime_examples}.)
So in line 3 above we only iterate over $O(2^{p+m})$ many $\av$'s.
Lines 4 and 5 make use of the fact that the factors $f(\av)$ and $H^{(m)}(\av)$ are the same regardless of which element $G^{(m)}_{r,m+1}$ we are computing.
They require only $O(p)$ and $O(m^2)$ time, respectively. Thus, the overall time complexity of the iteration is
\[
    O\Big({\textstyle \sum_{m=1}^p 2^{p+m}(p+m^2+m)}\Big) = O(p^2 4^p).
\]
The memory complexity is at most $O(p^2)$ for storing the $2$ matrices of dimension $(2p+1)\times (2p+1)$.

\clearpage

\section{A lemma needed in \cref{sec:SK_tree_agreement}}
\label{apx:equiv}

In this appendix, we prove Eq.~\eqref{eq:H-bar-prop} that was asserted in \cref{sec:SK_tree_agreement} and used to prove \cref{thm:SKequiv}.
This assertion turns out to be a version of Lemma~\ref{lem:H-D-prop} in the $D\to\infty$ limit, except we need to bridge the current formalism using $(2p+1)$-bit strings with the formalism in Ref.~\cite{farhi2019quantum} using $2p$-bit strings.

Following the notation in Ref.~\cite{farhi2019quantum}, we let 
\begin{align}
A = \{(a_1, a_2, \ldots, a_p, a_{-p}, \ldots, a_2, a_1) : a_i =\pm 1 \}    
\end{align}
be the set of $2p$-bit strings. We also let
\begin{align}
    A_{p+1} = \{\av \in A: a_{-r} = a_r \text{ for all } 1\le r \le p\}.
\end{align}
Note that $A_{p+1}$ is the set $B_0$ with the $0$-th component removed.  We use the index $p+1$ because it matches the conventions of Ref.~\cite{farhi2019quantum}. 
Now note that $T(\av)$ as defined in \cref{eq:T-def} for $\av \in B$ can also take $\av \in A$ as argument since it does not depend on the value of $a_0$.
So in what follows we slightly abuse notation to let $T(\av)$ to take arguments of both $\av \in A$ and $\av \in B$.

We now define the ``bar'' operation from Ref.~\cite{farhi2019quantum} as mentioned in \cref{sec:SK_tree_agreement}.
For any $\av \in A$,
we let
\begin{align}\label{eq:bar_def}
\bar{a}_{\pm r} = \begin{cases}
-a_{\pm r} & \text{ if } r = T(\av) \\
a_{\pm r} & \text{ if } r \neq T(\av)\,.
\end{cases}
\end{align}
We also need the * operation defined in \cref{eq:star_def}, which we restate here: for any $\av \in A$, let $\av^*$ be the bit string whose bits are given by
\begin{align}\label{eq:star_def_again2}
a^*_r=a_r a_{r+1}\cdots a_p
\qquad \text{and} \qquad
a^*_{-r} = a_{-r} a_{-r-1}\cdots a_{-p}
\qquad
\text{for } 1\le r\le p \,.
\end{align}
The * operation is one-to-one and therefore has a well defined inverse acting on $2p$-bit strings.

As discussed in \cref{sec:SK_tree_agreement}, while $H^{(m)}(\av)$ is defined for $\av \in B$, it does not depend on $a_0$. So we abuse notation to let it take arguments of both $\av \in A$ and $\av\in B$.
The assertion of \cref{eq:H-bar-prop} we wish to prove is the following lemma:
\begin{lemma} \label{lem:H-bar-prop}
For any $\av \in A$ and $0 \leq m \leq p$, we have
\begin{align} \label{eq:H-bar-prop-again2}
H^{(m)}(\av^*) = 1 \quad \textnormal{ if  } \quad \av \in A_{p+1} \qquad
\textnormal{and} \qquad
H^{(m)}(\bar{\av}^*) = H^{(m)}(\av^*)
\quad \textnormal{ if } \quad \av \not\in A_{p+1}
\end{align}
\end{lemma}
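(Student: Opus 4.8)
The plan is to reduce Lemma~\ref{lem:H-bar-prop} to the $D\to\infty$ limit of Lemma~\ref{lem:H-D-prop} via a handful of purely combinatorial identities linking the $*$ and bar operations on $2p$-bit strings in $A$ to the prime operation and the set $B_0$ on $(2p+1)$-bit strings in $B$. First I would record two preliminary facts. Since $\Gamma_0=0$, the quantity $H^{(m)}(\av)$ never depends on the $0$-th component $a_0$; hence any $\av\in A$ may be identified with a string in $B$ by inserting an arbitrary value of $a_0$, and under this identification $A_{p+1}\subseteq B_0$ and the prime operation of \eqref{eq:prime_def} restricts consistently. Moreover $H^{(m)}=\lim_{D\to\infty}H_D^{(m)}$ exists by \cref{sec:iter_proofs_infinite}, so both conclusions of Lemma~\ref{lem:H-D-prop} pass to the limit: $H^{(m)}(\bv)=1$ for $\bv\in B_0$, and $H^{(m)}(\bv')=H^{(m)}(\bv)$ for $\bv\notin B_0$.

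Next I would establish the three combinatorial identities. (i) $\av\in A_{p+1}$ if and only if $\av^*\in A_{p+1}$: this uses the invertibility of $*$, namely $a_r=a^*_r a^*_{r+1}$ for $r<p$ and $a_p=a^*_p$ (and likewise for negative indices), so $a^*_r=a^*_{-r}$ for all $r$ precisely when $a_r=a_{-r}$ for all $r$. (ii) For $\av\notin A_{p+1}$, $T(\av^*)=T(\av)$: writing $T=T(\av)$, one checks $a^*_r=a^*_{-r}$ whenever $r>T$ (all factors then carry index $>T$, where $a_s=a_{-s}$), whereas $a^*_T=a_T a^*_{T+1}\ne a_{-T}a^*_{-(T+1)}=a^*_{-T}$ because $a_T\ne a_{-T}$ and $a^*_{T+1}=a^*_{-(T+1)}$. (iii) For $\av\notin A_{p+1}$, $\bar\av^*=(\av^*)'$: since $\bar a_{\pm r}$ and $a_{\pm r}$ differ only at the single index $r=T$, the product $\bar a^*_r=\bar a_r\cdots\bar a_p$ picks up exactly one sign flip when $r\le T$ and none when $r>T$, giving $\bar a^*_{\pm r}=-a^*_{\pm r}$ for $1\le r\le T$ and $\bar a^*_{\pm r}=a^*_{\pm r}$ for $r>T$; by (ii) this is exactly the prime operation applied to $\av^*$ relative to $T(\av^*)=T$.

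With these in hand the lemma follows immediately: if $\av\in A_{p+1}$ then $\av^*\in B_0$ by (i), so $H^{(m)}(\av^*)=1$; and if $\av\notin A_{p+1}$ then $\av^*\notin B_0$ by (i), so by (iii) and the limiting form of Lemma~\ref{lem:H-D-prop}, $H^{(m)}(\bar\av^*)=H^{(m)}((\av^*)')=H^{(m)}(\av^*)$. I expect the only genuine work to be the index bookkeeping in (ii) and (iii) — in particular verifying that $*$ interacts with a single sign flip at level $T$ exactly so as to produce the prime operation, and that $T$ is preserved under $*$ — but this is elementary and carries no analytic subtlety, since all the limiting arguments were already handled in the proof of Lemma~\ref{lem:H-D-prop}.
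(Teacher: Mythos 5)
Your proposal is correct and follows essentially the same route as the paper: identify $\av^*$ (with an inserted $0$-th bit) with a string in $B$, observe that $A_{p+1}$ maps into $B_0$ and that the bar operation composed with $*$ coincides with the prime operation, and then invoke Lemma~\ref{lem:H-D-prop} in the $D\to\infty$ limit. Your only addition is making explicit the bookkeeping facts (that $*$ preserves membership in $A_{p+1}$ and that $T(\av^*)=T(\av)$) which the paper uses implicitly.
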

\begin{proof}
We start by showing the first half of \cref{eq:H-bar-prop-again2}.
Note that if $\av \in A_{p+1}$ then $\av^* \in A_{p+1}$.
Also note that $H^{(m)}(\bv)$ does not depend on $b_0$.
Then for any $\av \in A_{p+1}$, there is a corresponding $\bv \in B_0 $  given by $b_{\pm r} = a^*_{\pm r}$ for $1\le r \le p$,  and $b_0 = 1$ so that $H^{(m)}(\av^*) = H^{(m)}(\bv)$. Applying \cref{lem:H-D-prop} in the $D\to\infty$ limit with $\bv\in B_0$,
we have $H^{(m)}(\av^*) = H^{(m)}(\bv) = 1$.

For the second half of \cref{eq:H-bar-prop-again2}, we note from Eqs.~\eqref{eq:bar_def} and \eqref{eq:star_def_again2} that for any $\av \in A$,
\begin{align}
    \bar{a}^*_{\pm r} =
    \begin{cases}
    -a^*_{\pm r} & \quad \text{ if } 1 \leq r \leq T(\av) \\
    a^*_{\pm r} & \quad \text{ if } T(\av) + 1 \leq r \leq p \,.\\
    \end{cases}
\end{align}
This is the same as the definition of the prime operation on $\bv \in B$ in Eq.~\eqref{eq:prime_def}, if one ignores the $0$-th component.
When $b_{\pm r} = a^*_{\pm r}$,  we have $b'_{\pm r} = \bar{a}^*_{\pm r}$. 
Hence, for any $\av \in A\setminus A_{p+1}$, there is a corresponding $\bv \in B\setminus B_0$ with $b_0 = 1$,
such that
$H^{(m)}(\av^*) = H^{(m)}(\bv)$ and $H^{(m)}(\bar\av^*) = H^{(m)}(\bv')$.
Since $H^{(m)}(\bv') = H^{(m)}(\bv)$ by \cref{lem:H-D-prop} in the $D\to\infty$ limit, we have
$H^{(m)}(\bar\av^*) = H^{(m)}(\av^*)$, concluding the proof.
\end{proof}

\section{Tables of best known $\paramv$ for MaxCut \label{apx:optimal_angles}}

\begin{table}[H]
\centering
\begin{tabular}{c|c|c}
$p$ & $\gammav$ &  $\betav$ \\
\hline \hline
1 & $1 / 2$ &  $\pi / 8$ \\
 \hline
2 & 0.3817, 0.6655 &  0.4960, 0.2690 \\
 \hline
3 & 0.3297, 0.5688, 0.6406 &  0.5500, 0.3675, 0.2109 \\
 \hline
4 & 0.2949, 0.5144, 0.5586, 0.6429 &  0.5710, 0.4176, 0.3028, 0.1729 \\
 \hline
5 & 0.2705, 0.4804, 0.5074, 0.5646, 0.6397 &  0.5899, 0.4492, 0.3559, 0.2643, 0.1486 \\
 \hline
6 & 0.2528, 0.4531, 0.4750, 0.5146, 0.5650, &  0.6004, 0.4670, 0.3880, 0.3176, 0.2325, \\
& 0.6392 &  0.1291 \\
 \hline
7 & 0.2383, 0.4327, 0.4516, 0.4830, 0.5147, &  0.6085, 0.4810, 0.4090, 0.3534, 0.2857, \\
& 0.5686, 0.6393 &  0.2080, 0.1146 \\
 \hline
8 & 0.2268, 0.4162, 0.4332, 0.4608, 0.4818, &  0.6151, 0.4906, 0.4244, 0.3780, 0.3224, \\
& 0.5179, 0.5717, 0.6393 &  0.2606, 0.1884, 0.1030 \\
 \hline
9 & 0.2172, 0.4020, 0.4187, 0.4438, 0.4592, &  0.6196, 0.4973, 0.4354, 0.3956, 0.3481, \\
& 0.4838, 0.5212, 0.5754, 0.6398 &  0.2973, 0.2390, 0.1717, 0.0934 \\
 \hline
10 & 0.2089, 0.3902, 0.4066, 0.4305, 0.4423, &  0.6235, 0.5029, 0.4437, 0.4092, 0.3673, \\
& 0.4604, 0.4858, 0.5256, 0.5789, 0.6402 &  0.3246, 0.2758, 0.2208, 0.1578, 0.0855 \\
 \hline
11 & 0.2019, 0.3799, 0.3963, 0.4196, 0.4291, &  0.6268, 0.5070, 0.4502, 0.4195, 0.3822, \\
& 0.4431, 0.4611, 0.4895, 0.5299, 0.5821, &  0.3451, 0.3036, 0.2571, 0.2051, 0.1459, \\
& 0.6406 &  0.0788 \\
 \hline
12 & 0.1958, 0.3708, 0.3875, 0.4103, 0.4185, &  0.6293, 0.5103, 0.4553, 0.4275, 0.3937, \\
& 0.4297, 0.4430, 0.4639, 0.4933, 0.5343, &  0.3612, 0.3248, 0.2849, 0.2406, 0.1913, \\
& 0.5851, 0.6410 &  0.1356, 0.0731 \\
 \hline
13 & 0.1903, 0.3627, 0.3797, 0.4024, 0.4096, &  0.6315, 0.5130, 0.4593, 0.4340, 0.4028, \\
& 0.4191, 0.4290, 0.4450, 0.4668, 0.4975, &  0.3740, 0.3417, 0.3068, 0.2684, 0.2260, \\
& 0.5385, 0.5878, 0.6414 &  0.1792, 0.1266, 0.0681 \\
 \hline
14 & 0.1855, 0.3555, 0.3728, 0.3954, 0.4020, &  0.6334, 0.5152, 0.4627, 0.4392, 0.4103, \\
& 0.4103, 0.4179, 0.4304, 0.4471, 0.4703, &  0.3843, 0.3554, 0.3243, 0.2906, 0.2535, \\
& 0.5017, 0.5425, 0.5902, 0.6418 &  0.2131, 0.1685, 0.1188, 0.0638 \\
 \hline
15 & 0.1811, 0.3489, 0.3667, 0.3893, 0.3954, &  0.6349, 0.5169, 0.4655, 0.4434, 0.4163, \\
& 0.4028, 0.4088, 0.4189, 0.4318, 0.4501, &  0.3927, 0.3664, 0.3387, 0.3086, 0.2758, \\
& 0.4740, 0.5058, 0.5462, 0.5924, 0.6422 &  0.2402, 0.2015, 0.1589, 0.1118, 0.0600 \\
 \hline
16 & 0.1771, 0.3430, 0.3612, 0.3838, 0.3896, &  0.6363, 0.5184, 0.4678, 0.4469, 0.4213, \\
& 0.3964, 0.4011, 0.4095, 0.4197, 0.4343, &  0.3996, 0.3756, 0.3505, 0.3234, 0.2940, \\
& 0.4532, 0.4778, 0.5099, 0.5497, 0.5944, &  0.2624, 0.2281, 0.1910, 0.1504, 0.1056, \\
& 0.6425 &  0.0566 \\
 \hline
17 & 0.1735, 0.3376, 0.3562, 0.3789, 0.3844, &  0.6375, 0.5197, 0.4697, 0.4499, 0.4255, \\
& 0.3907, 0.3946, 0.4016, 0.4099, 0.4217, &  0.4054, 0.3832, 0.3603, 0.3358, 0.3092, \\
& 0.4370, 0.4565, 0.4816, 0.5138, 0.5530, &  0.2807, 0.2501, 0.2171, 0.1816, 0.1426, \\
& 0.5962, 0.6429 &  0.1001, 0.0536 \\
 \hline
\end{tabular}
\caption{Optimal values of $\gammav$ and $\betav$ up to $p=17$.
}
\label{table:optimal_params}
\end{table}

\clearpage 
\begin{table}[tbh]
\centering
\makegapedcells
\begin{tabular}{c|c|c}
$p$ & $\gammav$ &  $\betav$ \\
\hline \hline
18 & 0.1694, 0.3318, 0.3513, 0.3745, 0.3795, &  0.6412, 0.5232, 0.4726, 0.4533, 0.4295, \\
& 0.3858, 0.3886, 0.3943, 0.4007, 0.4080, &  0.4104, 0.3895, 0.3683, 0.3457, 0.3213, \\
& 0.4201, 0.4410, 0.4591, 0.4849, 0.5178, &  0.2956, 0.2680, 0.2387, 0.2073, 0.1730, \\
& 0.5579, 0.5999, 0.6434 &  0.1369, 0.0949, 0.0510 \\
 \hline
19 & 0.1662, 0.3270, 0.3470, 0.3704, 0.3752, &  0.6425, 0.5245, 0.4743, 0.4556, 0.4327, \\
& 0.3814, 0.3835, 0.3882, 0.3931, 0.3972, &  0.4147, 0.3949, 0.3752, 0.3543, 0.3318, \\
& 0.4064, 0.4260, 0.4517, 0.4620, 0.4885, &  0.3082, 0.2831, 0.2566, 0.2287, 0.1983, \\
& 0.5219, 0.5619, 0.6025, 0.6438 &  0.1653, 0.1307, 0.0903, 0.0486 \\
 \hline
20 & 0.1632, 0.3224, 0.3430, 0.3666, 0.3714, &  0.6438, 0.5258, 0.4758, 0.4577, 0.4355, \\
& 0.3775, 0.3789, 0.3828, 0.3865, 0.3875, &  0.4184, 0.3996, 0.3812, 0.3616, 0.3407, \\
& 0.3942, 0.4129, 0.4376, 0.4541, 0.4649, &  0.3189, 0.2958, 0.2716, 0.2466, 0.2194, \\
& 0.4921, 0.5259, 0.5659, 0.6051, 0.6442 &  0.1899, 0.1581, 0.1251, 0.0862, 0.0464 \\
 \hline
\end{tabular}
\caption{Suboptimal values of $\gammav$ and $\betav$ for $p=18, 19, 20$ based on extrapolations of the optimal values of $\gammav$ and $\betav$ up to $p=17$ of Table~\ref{table:optimal_params}.
The values presented here are used to compute the lower bounds on $\bar{\nu}_p$ of Table~\ref{table:lower_bound_values}.
}
\label{table:lower_bound_params}
\end{table}

\section{Optimal $\gammav$ and $\betav$ for Max-$q$-XORSAT at $p=14$}
\label{apx:optimal_angles_q}
\begin{figure}[H]
    \centering
    \includegraphics[width=\linewidth]{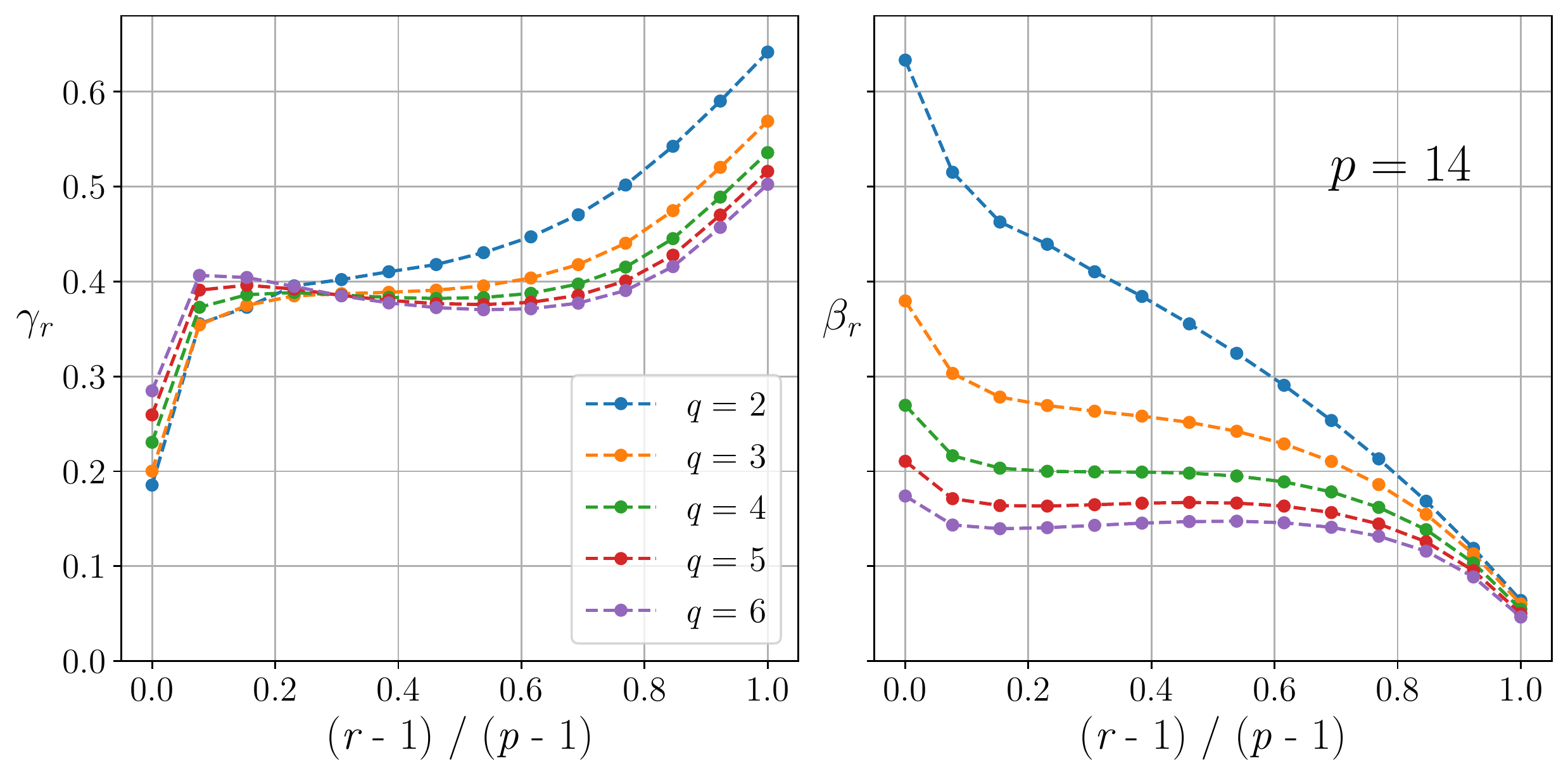}
    \caption{Optimal QAOA parameters $(\paramv)$ at $p=14$ for various Max-$q$-XORSAT on $D$-regular hypergraphs in the $D\to\infty$ limit.
    This data can be found in Ref.~\cite{data}.}
    \label{fig:params_varying_q}
\end{figure}

\end{document}